\newcommand{\clustering}{\mathcal{C}}
\definecolor{ForestGreen}{rgb}{0.0333,0.4451,0.0333}
\definecolor{DarkRed}{rgb}{0.65,0,0}
\definecolor{Red}{rgb}{1,0,0}
\newcommand{\am}{\textsf{Adaptive Minimum}\xspace}
\newcommand{\hac}{\textsf{Average Linkage HAC}\xspace}
\newcommand{\lfm}{\textsf{LFM Matching}\xspace}
\DeclareMathOperator*{\argmax}{arg\,max}
\DeclareMathOperator*{\argmin}{arg\,min}
\newcommand{\CC}{\textsc{CC}\xspace}
\newcommand{\NC}{\textsc{NC}\xspace}
\newcommand{\PTime}{\textsc{P}\xspace}
\newcommand{\poly}{\ensuremath{\mathsf{poly}}}
\newcommand{\polylog}{{\ensuremath\mathsf{polylog}}}
\newtheorem{problem}{Problem}
\newtheorem{theorem}{Theorem}
\newtheorem{conjecture}{Conjecture}
\newtheorem{claim}{Claim}
\newtheorem{lemma}{Lemma}
\newtheorem{definition}{Definition}
\newcommand{\bestedge}[1]{\ensuremath{\textsc{BestEdge}(#1)}}
\newcounter{mnote}[section]
\title{
It's Hard to HAC with Average Linkage!
}
\author{%
MohammadHossein Bateni\\%
Google Research\\%
New York, USA%
\and%
Laxman Dhulipala\\%
University of Maryland\\%
College Park, USA%
\and%
Kishen N Gowda\\%
University of Maryland\\
College Park, USA%
\and%
D Ellis Hershkowitz\\%
Brown University\\
Providence, USA%
\and%
Rajesh Jayaram\\%
Google Research\\
New York, USA%
\and%
Jakub Łącki\\%
Google Research\\
New York, USA%
}
\date{}
\begin{document}
\maketitle

\begin{abstract}
Average linkage Hierarchical Agglomerative Clustering (HAC) is an extensively studied and applied method for hierarchical clustering. Recent applications to massive datasets have driven significant interest in near-linear-time and efficient parallel algorithms for average linkage HAC.

We provide hardness results that rule out such algorithms. On the sequential side, we establish a runtime lower bound of $n^{3/2-\epsilon}$ on $n$ node graphs for sequential combinatorial algorithms under standard fine-grained complexity assumptions. This essentially matches the best-known running time for average linkage HAC. On the parallel side, we prove that average linkage HAC likely cannot be parallelized even on simple graphs by showing that it is CC-hard on trees of diameter $4$. On the possibility side, we demonstrate that average linkage HAC can be efficiently parallelized (i.e., it is in NC) on paths and can be solved in near-linear time when the height of the output cluster hierarchy is small.
\end{abstract}
\thispagestyle{empty}
\newpage
\setcounter{page}{1}

\section{Introduction}

Hierarchical clustering is a fundamental method for data analysis which organizes data points into a hierarchical structure so that similar points appear closer in the hierarchy. Unlike other common clustering methods, such as $k$-means, hierarchical clustering does not require the the number of clusters to be fixed ahead of time. This allows it to capture structures that are inherently hierarchical---such as phylogenies \cite{eisen1998cluster} and brain structure \cite{boly2012hierarchical}. 
One of the most widely used and studied methods for hierarchical clustering is Hierarchical Agglomerative Clustering (HAC) \cite{king1967step,lance1967general,sneath1973principles}. HAC produces a hierarchy by first placing each point in its own cluster and then iteratively merging the two \emph{most similar} clusters until all points are aggregated into a single cluster. The similarity of two clusters is given by a \emph{linkage function}. HAC is included in many popular scientific computing libraries such as scikit-learn \cite{pedregosa2011scikit}, SciPy \cite{virtanen2020scipy}, ALGLIB \cite{shearer1985alglib}, Julia, R, MATLAB, Mathematica  and many more \cite{murtagh2012algorithms,murtagh2017algorithms}. This cluster hierarchy is often equivalently understood as a binary tree---a.k.a. dendrogram---whose internal nodes correspond to cluster merges.

The proliferation of massive datasets with billions of points has driven the need for more efficient HAC algorithms that can overcome the inherent $\Theta(n^2)$ complexity required to read all pairwise distances~\cite{dhulipala2023terahac, dhulipala2022hierarchical, monath2021scalable}. 
Finer-grained running time bounds for HAC were recently obtained by assuming that only $m = o(n^2)$ pairs of points have nonzero similarity, and analyzing the running time as a function of both $n$ and $m$.
This is a natural assumption in practice, as in large datasets of billions of datapoints, typically a small fraction of pairs exhibit nonnegligible similarity.
In this case, the input to HAC is an edge-weighted graph, where each vertex represents an input point and each edge weight specifies the similarity between its endpoints.
This approach is convenient for large-scale applications since (1) very large clustering instances can be compactly represented as sparse weighted graphs and (2) the running time of HAC can be decoupled from the running time of nearest-neighbor search.
%


A particularly common linkage function for HAC is \emph{average linkage}, which both optimizes reasonable global objectives \cite{moseley2023approximation} and exhibits good empirical performance \cite{irbook, affinity, mllner2011modern, zhao2002evaluation, hua2017mgupgma, kobren2017hierarchical, moseley-wang, hac-reward, monath2021scalable}.
Here, the similarity of two clusters is the average edge weight between them (non-present edges are treated as having weight $0$). In other words, average linkage HAC repeatedly merges the two clusters with the highest average edge weight between them (see \Cref{fig:HAC} for an example).

\begin{figure}[h]
    \centering
    \begin{subfigure}[b]{0.19\textwidth}
        \centering
        \includegraphics[width=\textwidth,trim=0mm 0mm 350mm 0mm, clip]{./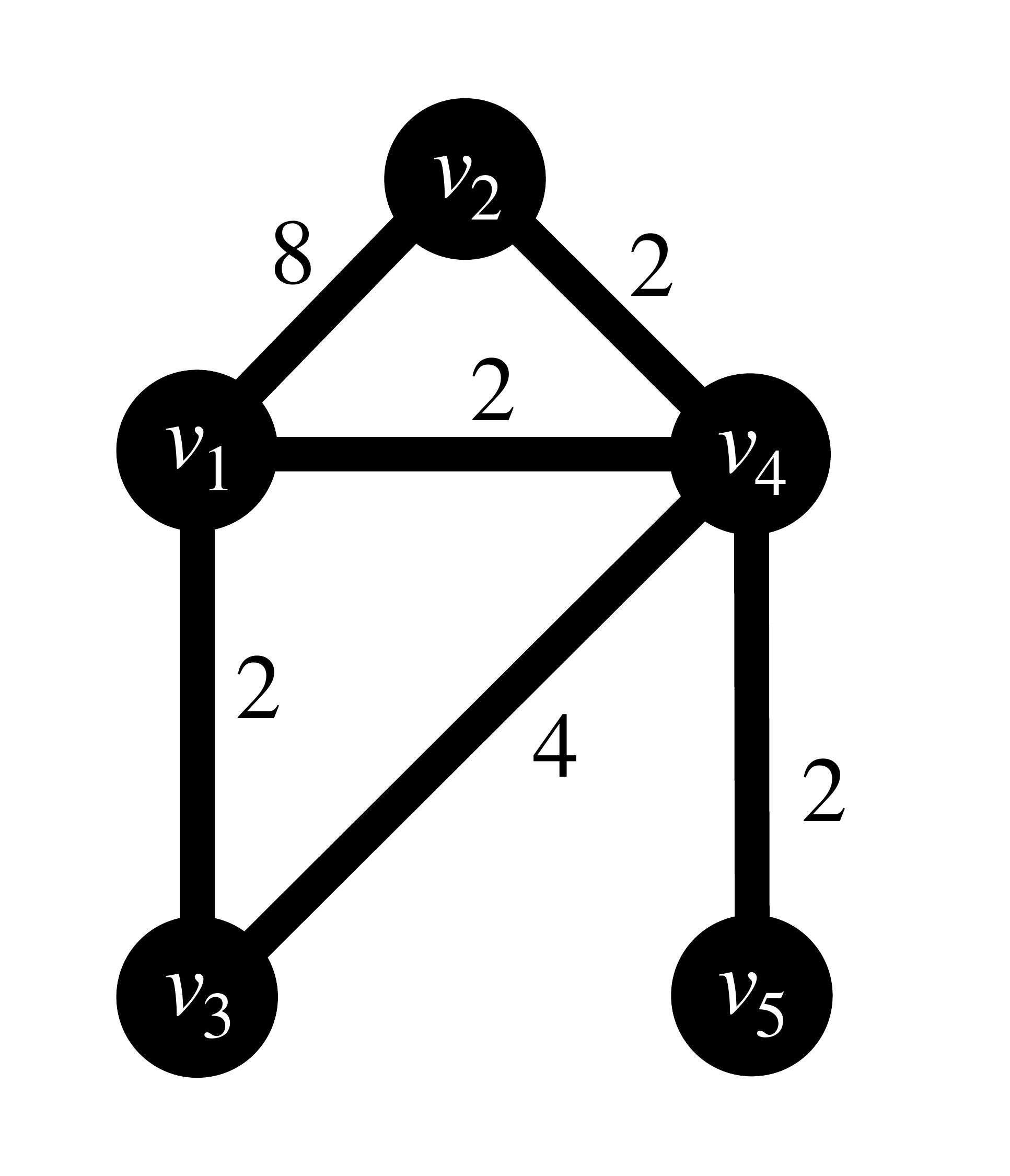}
        \caption{Input $G$.}\label{sfig:hac1}
    \end{subfigure} \hspace{3.5em} 
    \begin{subfigure}[b]{0.19\textwidth}
        \centering
        \includegraphics[width=\textwidth,trim=0mm 0mm 350mm 0mm, clip]{./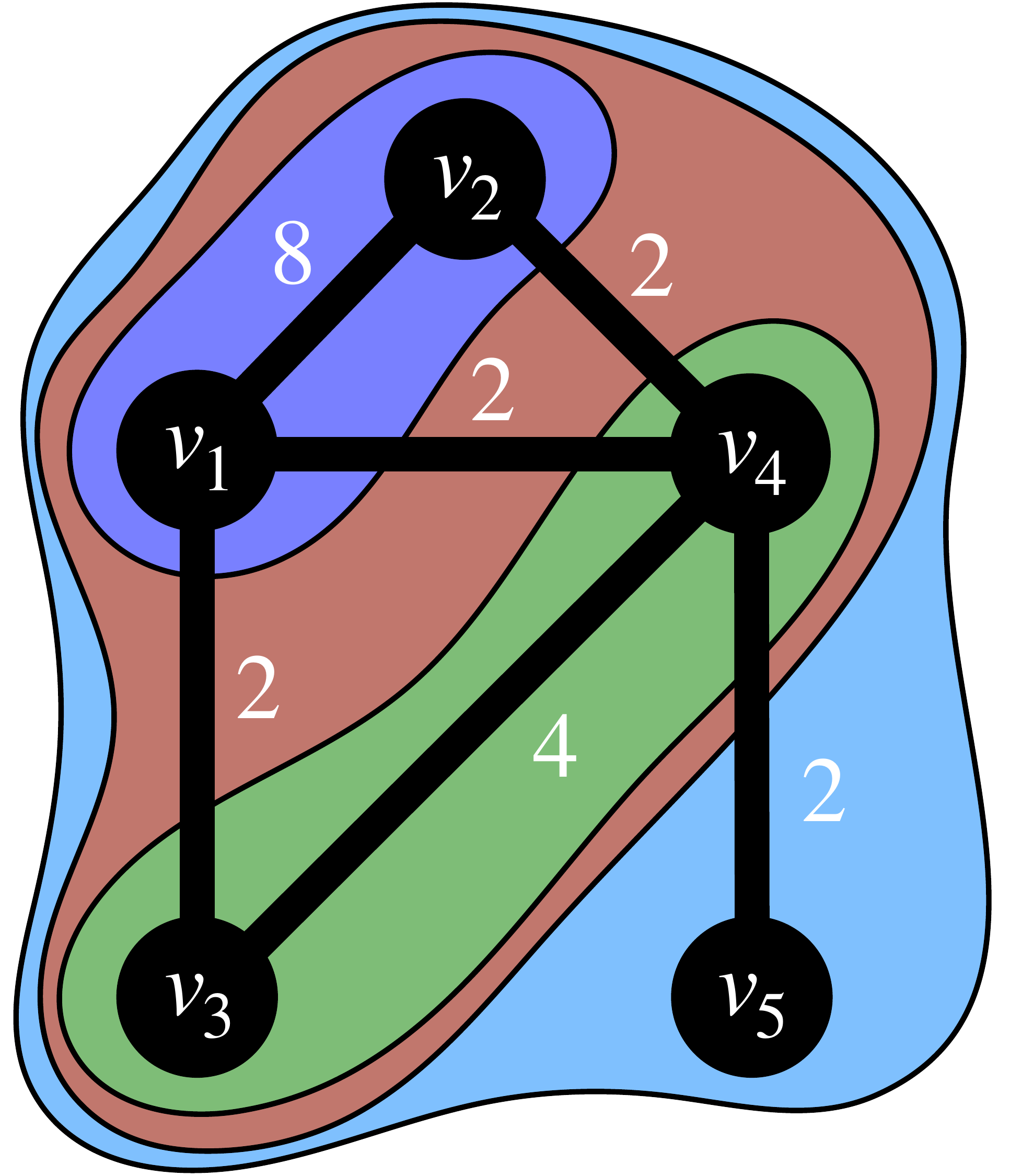}
        \caption{HAC Output.}\label{sfig:hac2}
    \end{subfigure}  \hspace{3.5em}
    \begin{subfigure}[b]{0.19\textwidth}
        \centering
        \includegraphics[width=\textwidth,trim=0mm 0mm 320mm 0mm, clip]{./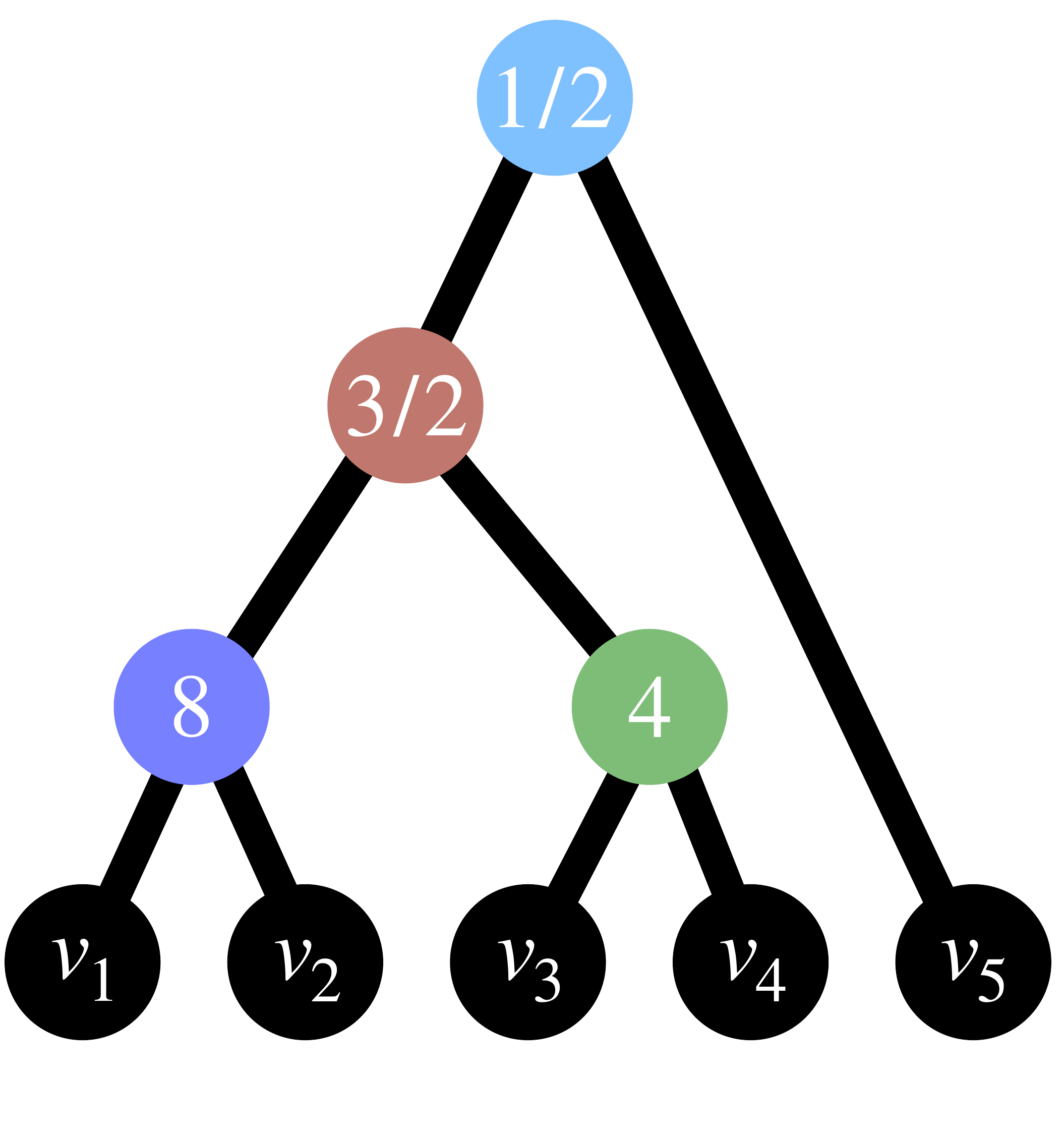}
        \caption{Dendrogram.}\label{sfig:hac3}
    \end{subfigure}   
    \vspace{-0.5em}
    \caption{An example of average linkage HAC run on an input graph $G$. Edges labeled with weights. \ref{sfig:hac1} gives $G$. \ref{sfig:hac2} gives the cluster hierarchy output by HAC. \ref{sfig:hac3} gives the corresponding dendrogram with internal nodes labeled with the weight of their corresponding merge.} \label{fig:HAC}
\end{figure}


A natural algorithmic question then is how quickly can we solve average linkage HAC on $n$ node and $m$ edge graphs? 
Recent work has provided a partial answer to this basic question in sequential and parallel models of computation.
In particular, \cite{dhulipala2021hierarchical} showed that average linkage HAC can be solved in $\tilde{O}(n\sqrt{m})$ time, thus providing a sub-quadratic time algorithm for sufficiently sparse graphs.
A follow-up paper studied average linkage HAC in the parallel setting and showed that the problem is \PTime-complete and so likely does not admit \NC algorithms~\cite{dhulipala2022hierarchical}. 
However, the \PTime-completeness result of \cite{dhulipala2022hierarchical} holds for worst case graphs whereas typical applications of HAC are on highly structured graphs---namely those which are meant to capture relevant properties of an underlying metric---and so there is still hope for parallelizing average linkage HAC on more structured instances. 

In fact, such structured instances of average linkage HAC are known to admit much faster algorithms in the sequential setting: the sequential algorithm of \cite{dhulipala2021hierarchical} implies that if the input graph is planar (or, more generally, minor-free) average linkage HAC can be solved in time $\tilde{O}(m)$. 
More generally, if each graph obtained by contracting all clusters at each step of average linkage HAC has $O(1)$ arboricity\footnote{A graph has arboricity at most $\alpha$ if all of its edges can be covered by at most $\alpha$ trees.}, then it is possible to solve average linkage HAC in time $\tilde{O}(m)$; it follows that average linkage  HAC can be solved in sequential time $\tilde{O}(m)$ on trees or planar graphs.
In light of these improved sequential results for highly structured graphs, it becomes natural to hope for efficient parallel algorithms on structured graphs such as low arboricity graphs or, even, just trees.





\subsection{Our Contributions}
In this work, we continue the line of work which studied the computational complexity of different variants of HAC~\cite{greenlaw2008parallel, tsenghac, abboudhac, dhulipala2021hierarchical, dhulipala2022hierarchical} and perform a careful investigation into the complexity of average linkage HAC.
In particular, we study HAC on $n$ node and $m$ edge graphs and investigate whether
near-linear time algorithms, or more efficient parallel algorithms are possible, namely:
\begin{enumerate}
    \item \textbf{Near-Linear Time Algorithms:} Can we improve over the best known $\tilde{O}(n\sqrt{m})$ upper bound for average linkage HAC and obtain near-linear time sequential algorithms?
    
    \item \textbf{\NC Algorithms:} are there $\polylog(n)$ depth parallel algorithms for average linkage HAC with $\poly(n)$ work for highly structured instances, e.g., trees, or minor-closed graphs?
\end{enumerate}
We give both new lower bounds which (conditionally) rule out near-linear time and \NC algorithms, and provide conditions under which these impossibility results can be bypassed.

First, we demonstrate that near-linear time algorithms are impossible under standard fine-grained complexity assumptions. 
\begin{restatable}{theorem}{mainLowerSeq}\label{thm:worklb}
    If average linkage HAC can be solved by a combinatorial algorithm in $O(n^{3/2-\epsilon})$ time for any $\epsilon > 0$, then the Combinatorial Boolean Matrix Multiplication (Combinatorial BMM) Conjecture is false.
\end{restatable}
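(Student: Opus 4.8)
The plan is to show that a combinatorial algorithm solving average linkage HAC in time $O(n^{3/2-\epsilon})$ would yield a combinatorial algorithm for Boolean Matrix Multiplication (BMM) running in truly subcubic time. The reduction blows up the instance size quadratically: from two $N\times N$ Boolean matrices $A,B$ (equivalently, since for combinatorial algorithms BMM is subcubically equivalent to triangle detection, from an $N$-vertex graph in which we wish to detect a triangle) we build an average linkage HAC instance $G$ on $n=\Theta(N^2)$ vertices and $m=\Theta(N^2)$ edges whose dendrogram encodes the product $C=AB$ (respectively, a single bit answering whether a triangle exists). Since $G$ has $\Theta(N^2)$ internal dendrogram nodes there is enough room to extract all of $C$; if one prefers to avoid reading $\Theta(N^2)$ output bits, one goes through triangle detection, where a single bit --- e.g., the weight at which a distinguished cluster first forms --- suffices. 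The whole reduction only constructs a weighted graph and inspects a dendrogram, so it is combinatorial and runs in $\tilde O(N^2)=o(N^3)$ time; hence composing it with a combinatorial $O(n^{3/2-\epsilon})$-time HAC algorithm gives a combinatorial BMM algorithm running in $O\big(N^{2(3/2-\epsilon)}\big)=O(N^{3-2\epsilon})$ time, contradicting the Combinatorial BMM Conjecture.

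The core of the argument is the gadget construction. The vertex set of $G$ contains a vertex for each row index of $A$, a vertex for each column index of $B$, a vertex for each common (``middle'') coordinate, a small output gadget, and a modest number of auxiliary \emph{control} vertices; the nonzero pattern of $A$ is encoded by putting one edge between row-vertex $i$ and middle-vertex $j$ whenever $A_{ij}=1$, and similarly for $B$, giving $m=\Theta(N^2)$. The essential trick is to assign edge weights from a wide numeric range, decreasing geometrically across the ``layers'' of the construction. Because average linkage always contracts the pair of clusters of currently maximum average edge weight, this weight separation lets us \emph{schedule} HAC: it first performs a prescribed sequence of merges that assembles clusters of prescribed sizes, and is then forced to sweep through the middle coordinates (equivalently, through the triangle candidates) essentially one at a time, each sweep step ``testing'' the relevant conjunctions $A_{ij}\wedge B_{jk}$. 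The output gadget is wired so that, after the sweep, a designated merge occurs --- or a designated cluster reaches a designated size/weight --- exactly when $C_{ik}=\bigvee_j A_{ij}\wedge B_{jk}$ equals $1$ (respectively, exactly when a triangle was found).

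I expect the main obstacle to be proving that average linkage HAC really executes this schedule on $G$. Average linkage is global and acutely order-sensitive: a single merge rescales the average weight of every pair of clusters touching either merged cluster, so a gadget that is correct in isolation can be spoiled by unrelated merges elsewhere, and exact ties must be arranged so that the intended merge is the unique maximizer (or is the one chosen under the model's tie-breaking convention). The plan is an invariant-and-phases argument: split the execution into phases and show, by induction on phases, that at the start of each phase the multiset of clusters, their sizes, and all pairwise average weights coincide with a target configuration, so the next batch of maximum-weight merges is precisely the intended one. The geometric weight gaps between layers are quantitatively chosen so that no cross-phase merge is ever even momentarily competitive, which is exactly what makes the phase decomposition robust; handling the boundary phases (initial cluster assembly and final read-out) and the tie-breaking bookkeeping is the most delicate part.

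Putting this together, the invariant implies that the dendrogram produced by HAC on $G$ determines $C$ (respectively, the triangle answer), so an $O(n^{3/2-\epsilon})$-time combinatorial algorithm for average linkage HAC gives an $O(N^{3-2\epsilon})$-time combinatorial algorithm for BMM, refuting the Combinatorial BMM Conjecture and establishing \Cref{thm:worklb}.
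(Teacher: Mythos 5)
There is a genuine gap. Your high-level framework is the right one and matches the paper's: reduce from triangle detection (subcubically equivalent to BMM), accept a quadratic blowup to $n=\Theta(N^2)$ vertices and $m=\Theta(N^2)$ edges, and use weight separation so that average linkage HAC is forced through a prescribed schedule whose merge sequence reveals the answer. But the entire technical content of the theorem is the concrete gadget and the proof that HAC actually follows the schedule, and your proposal contains neither: you describe an unspecified ``layers with geometrically decreasing weights'' construction with row, column, middle, output, and control vertices, and then explicitly defer the hard part (``I expect the main obstacle to be proving that average linkage HAC really executes this schedule''). A reduction to average linkage HAC lives or dies on exactly this point, because every merge rescales the similarities of all adjacent cluster pairs; without explicit weights and an explicit invariant there is no way to check that the intended merge is always the maximizer.

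For comparison, the paper's gadget is quite different from your sketch and much leaner: $G'$ keeps the original $t$ vertices of the triangle-detection instance as \emph{core} vertices with their weight-$1$ edges, and adds $t^2$ leaves over $t$ rounds, where in round $i$ each core vertex $v_j$ gets one leaf attached by an edge of weight $(1/i)-\epsilon$ if $v_j\in N_G(v_i)$ and $(1/i)+\epsilon$ otherwise. The invariant is that entering round $i$ every surviving core cluster has size $i$, so core--core edges have normalized weight $1/i^2$ while the round-$i$ leaf edges have normalized weight $1/i^2\pm\epsilon/i$; hence the non-neighbors of $v_i$ absorb their leaves first, and a core--core merge fires in round $i$ exactly when two neighbors of $v_i$ are themselves adjacent, i.e., exactly when there is a triangle through $v_i$. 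Detecting whether any core--core merge occurs among the first $t^2$ merges then answers triangle detection with a single bit. Nothing in your proposal plays the role of this ``$1/i^2$ versus $1/i^2\pm\epsilon/i$'' comparison, and your alternative of extracting the full product $C=AB$ from the dendrogram would require a substantially more elaborate gadget than you have described. To complete the proof you would need to commit to explicit weights and carry out the round-by-round invariant.
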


Our reduction also implies a second (weaker) conditional lower bound that also holds for non-combinatorial algorithms (e.g., algebraic algorithms) based on the running time of matrix multiplication. 
In particular, for two $n \times n$ binary matrices, it is well known that matrix multiplication can be solved in time $O(n^{\omega})$ where $2 \leq \omega < 2.3716$~\cite{williams2024new}.
In this setting, we obtain the following result:

\begin{restatable}{theorem}{workmatmul}\label{cor:workmatmul}
If average linkage HAC can be solved by an algorithm in $O(n^{\omega/2-\epsilon})$ time for some $\epsilon > 0$, then boolean matrix multiplication can be solved in $O(n^{\omega - \epsilon'})$ time for some $\epsilon' > 0$.
\end{restatable}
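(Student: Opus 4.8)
The plan is to observe that \Cref{cor:workmatmul} follows from exactly the same reduction that proves \Cref{thm:worklb}; only the bookkeeping on the exponents changes, so essentially all of the work is already contained in the reduction behind \Cref{thm:worklb}. That reduction takes two $n \times n$ boolean matrices $A$ and $B$ and, in $O(n^2)$ time, builds an average linkage HAC instance $G$ on $N = \Theta(n^2)$ vertices whose output dendrogram encodes the product $AB$. The crucial quantitative feature, which I would simply reuse, is that the number of vertices is kept at $\Theta(n^2)$ rather than anything larger, so that $N^{3/2}$ (and, here, $N^{\omega/2}$) lands at exactly the boolean-matrix-multiplication exponent $n^3$ (resp.\ $n^\omega$).

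Given this, the argument is short. First, I would recall the reduction: from $A,B \in \{0,1\}^{n \times n}$, construct $G$ on $N = \Theta(n^2)$ vertices in $O(n^2)$ time so that, from the output dendrogram of average linkage HAC on $G$ --- e.g.\ from a prescribed set of merge weights, or the relative order of certain merges --- every entry of $AB$ can be decoded in $O(n^2)$ additional time. Next, suppose average linkage HAC admits an $O(N^{\omega/2-\epsilon})$-time algorithm for some $\epsilon>0$. Running it on $G$ and decoding computes $AB$ in total time
\[
O(n^2) + O\bigl(N^{\omega/2-\epsilon}\bigr) + O(n^2) = O\bigl(n^2 + n^{\omega-2\epsilon}\bigr),
\]
using $N = \Theta(n^2)$. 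Finally, I would note that the $O(n^2)$ term is absorbed: any correct HAC algorithm must output a dendrogram with $\Omega(N)$ internal nodes and so runs in time $\Omega(N) = \Omega(n^2)$, which forces $\omega/2-\epsilon \ge 1$, i.e.\ $\omega - 2\epsilon \ge 2$; hence the total is $O(n^{\omega-2\epsilon})$. Taking $\epsilon' = 2\epsilon > 0$ yields an $O(n^{\omega-\epsilon'})$-time algorithm for boolean matrix multiplication, as claimed. (The identical computation with the combinatorial-BMM exponent $3$ in place of $\omega$ re-derives \Cref{thm:worklb}, since $O(N^{3/2-\epsilon}) = O(n^{3-2\epsilon})$.)

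I do not expect a real obstacle in this statement per se: it is a re-accounting of a reduction assumed to be in hand, and the only point requiring a sentence of care is checking that the $O(n^2)$ overhead of building $G$ and decoding the answer is dominated by $n^{\omega-2\epsilon}$ --- which, as noted above, is immediate from the trivial $\Omega(N)$ output-size lower bound for HAC. The genuinely hard part lives entirely in \Cref{thm:worklb}: designing a weighted graph on only $\Theta(n^2)$ vertices so that the greedy average linkage merge order is forced to carry out, and then expose, all $n^2$ inner products defining $AB$. Once that reduction is established, this corollary is immediate.
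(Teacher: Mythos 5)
Your exponent bookkeeping is exactly the calculation the paper performs: the instance has $N=\Theta(n^2)$ vertices, so $N^{\omega/2-\epsilon}=n^{\omega-2\epsilon}$, and the $O(n^2)$ construction/decoding overhead is absorbed because any HAC algorithm must take $\Omega(N)$ time. The gap is in what you assume the reduction behind \Cref{thm:worklb} to be. You treat it as a \emph{function-computing} reduction: from $A,B\in\{0,1\}^{n\times n}$, build a single HAC instance on $\Theta(n^2)$ vertices whose dendrogram lets you decode \emph{every} entry of $AB$. The reduction the paper actually constructs (\Cref{thm:lbreduction}) is much weaker: it starts from \textsc{Triangle Detection} on a $t$-vertex graph, and the merge sequence reveals only a single bit --- whether any core--core edge merges among the first $t^2$ merges, i.e., whether a triangle exists. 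No construction encoding all $n^2$ output bits of $AB$ is given in the paper, and you do not supply one, so your step ``decoding computes $AB$'' is unsupported.

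With the reduction that is actually in hand, your arithmetic yields a \emph{triangle-detection} algorithm running in $O(n^{\omega-2\epsilon})$ time, not a BMM algorithm. Passing from that to BMM requires the subcubic equivalence of triangle detection and Boolean matrix multiplication due to \cite{williams2010subcubic} (cf.\ \Cref{thm:bmmeq}); this is the route the paper takes, and it is not a no-op: the known transfer converts an $O(n^{3-\delta})$-time triangle algorithm into an $O(n^{3-\delta/3})$-time BMM algorithm, which is quantitatively weaker than what your posited direct reduction would give. So to complete your argument you must either (a) actually construct the stronger HAC instance that forces average linkage to expose all $n^2$ inner products of $AB$ --- a genuinely new and harder construction than the paper's --- or (b) route through triangle detection and the Williams--Williams equivalence, and then carefully check the exponent that equivalence delivers at the $n^{\omega}$ threshold rather than at $n^3$.
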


\noindent 
Notably, Theorem~\ref{thm:worklb} shows that the prior running time of $\tilde{O}(n \sqrt{m})$ of \cite{dhulipala2021hierarchical} is {\em optimal} up to logarithmic factors under standard fine-grained complexity assumptions, at least for graphs consisting of $O(n)$ many edges.
We obtain this conditional lower bound by showing that a carefully constructed instance of HAC can be used to solve the triangle detection problem, which is sub-cubically equivalent to Boolean Matrix Multiplication~\cite{williams2010subcubic}.
We obtain a bound of (essentially) $\Omega(n^{3/2})$ since our reduction incurs a quadratic time and space blowup when transforming an input triangle detection instance to an instance of average linkage HAC.

We next turn to the parallel setting. 
Here, we show that HAC---\emph{even on trees}---is unlikely to admit efficient parallel algorithms. 
More formally, we show that average linkage HAC on low diameter trees is as hard as any problem in the complexity class Comparator Circuit (\CC) \cite{subramanian1989new, cook2014complexity}. 
It is believed that \CC is incomparable with \NC and that \CC-hardness is evidence that a problem is not parallelizable \cite{mayr1992complexity, cook2014complexity}.
\begin{restatable}{theorem}{mainLowerPar}\label{thm:mainLowerPar}
Average linkage HAC is \CC-hard, even on trees of diameter $4$.
\end{restatable}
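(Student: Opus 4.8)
The plan is to reduce the \emph{Comparator Circuit Value Problem} (CCVP) --- the canonical \CC-complete problem, i.e.\ the comparator-circuit analogue of the Circuit Value Problem --- to average linkage HAC on a tree of diameter at most $4$; since \CC is closed under the logspace reductions we will use, this yields \CC-hardness. (One could instead start from the stable matching problem, which is also \CC-complete, but the circuit formulation makes the simulation most transparent.) Recall that a comparator circuit is a list of gates $g_1,\dots,g_T$ applied \emph{in sequence} to wires $w_1,\dots,w_k$ carrying bits, where $g_t$ reads an ordered pair of wires $(a_t,b_t)$ and overwrites their values with $(\min,\max)$; the output is the bit on a designated wire after $g_T$ fires. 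I would build, in logspace and with polynomial blow-up, an edge-weighted tree $\mathcal{T}$ consisting of a single center $c$, a layer of ``hub'' vertices adjacent to $c$, and leaves hanging off the hubs --- so every vertex lies within distance $2$ of $c$ and the diameter is at most $4$ --- together with a feature of the output dendrogram (e.g.\ whether two designated clusters ever merge, or the merge-weight labelling one internal node) that is computable from the dendrogram in logspace and equals the CCVP output bit. Average linkage HAC run on $\mathcal{T}$ will simulate the gate-by-gate evaluation of the circuit.

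The idea that reconciles the unbounded sequential depth of a comparator circuit with the bounded diameter of $\mathcal{T}$ is to encode circuit \emph{time} in edge-weight \emph{magnitude} rather than in graph distance: the weights are partitioned into well-separated geometric scales $s_1 \gg s_2 \gg \dots \gg s_T$, one per gate, so that average linkage must complete all merges belonging to the gadget for $g_t$ before it touches anything belonging to $g_{t+1}$. HAC is already inherently sequential in time --- cluster-to-cluster average weights shift as clusters grow --- so this sequentiality, not the graph's depth, plays the role of the circuit's depth, which is exactly why a shallow tree can be hard here even though paths are not (hubs of large degree give room to build gadgets, whereas a path does not). Each wire $w_i$ becomes a cluster $C_i$ that persists throughout the run and accretes structure as gates touch it; its current bit is recorded by a small, scale-appropriate perturbation --- a pendant leaf/edge of carefully chosen weight --- that a gate touching $w_i$ reads (because it tips an average-weight comparison one way) and then re-installs for the next gate using $w_i$. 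The gadget for $g_t$ attaches $C_{a_t}$ and $C_{b_t}$ to a fresh hub $h_t$ equipped with auxiliary leaves; since average linkage merges $h_t$ first with whichever of $C_{a_t},C_{b_t}$ gives the larger average weight, and that comparison is tuned to depend on the two wire bits, the order of these two merges realizes a comparison, after which a constant number of further merges with the auxiliary leaves route the $\min$ bit and the $\max$ bit back onto $C_{a_t}$ and $C_{b_t}$. Fan-out is free: because gates are serialized in time, a wire feeding many gates is simply re-read at each relevant scale.

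The hard part will be designing and analyzing the comparator gadget under the arithmetic of average linkage subject to three simultaneous constraints. First, \emph{correctness against accumulated mass}: by the time a late gate touches a wire-cluster, that cluster may hold many vertices, yet the tiny bit-encoding perturbation must still be the decisive term in the relevant average-weight comparison --- this drives the careful choice of scales and auxiliary-edge weights, and it must be achievable with polynomially bounded weights in logspace. Second, \emph{isolation through the shared center}: \emph{every} hub is adjacent to the single vertex $c$, so the construction must prevent spurious cross-gadget merges, forcing the edges at $c$ and the merges involving $c$ to be scheduled (again via the scales) so that $c$ interacts with each gadget only at its intended moment. Third, \emph{tree-shape and shallowness}: $\mathcal{T}$ has no cycles (ruling out many standard gadget tricks) and every auxiliary vertex must sit within distance $2$ of $c$. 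Once such a gadget is in hand, the simulation follows by induction on $t=1,\dots,T$ --- after the scale-$s_t$ merges finish, the clusters $C_1,\dots,C_k$ encode exactly the circuit's wire values after $g_t$ --- and reading the designated dendrogram feature returns the output bit; diameter exactly $4$ comes for free from having at least two leaf-bearing hubs. Throughout one also checks the constructed HAC execution is tie-free (or consistent with the fixed tie-breaking rule) so that the simulated trajectory is the unique one.
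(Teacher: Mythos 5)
Your high-level strategy shares one real insight with the paper's proof---encoding sequential ``time'' in well-separated weight scales on a depth-$2$ star-of-stars so that HAC's inherent sequentiality, not graph distance, carries the computation---but as written the proposal has a genuine gap, and its central structural plan does not survive the tree constraint. You propose persistent wire-clusters $C_1,\dots,C_k$ that each gate gadget ``attaches'' to a fresh hub $h_t$, with fan-out handled by re-reading a wire at each relevant scale. In a tree this is impossible: any vertex with edges to two distinct hubs $h_t, h_{t'}$ closes a cycle through the center $c$ (via $h_t$--$c$--$h_{t'}$), so a wire used by more than one gate cannot be a vertex adjacent to both gate gadgets. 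The only vertex adjacent to all hubs is $c$ itself, so all inter-gadget communication must be funneled through the single shared center---and then your ``isolation through the shared center'' concern is not a side condition to be checked but the main obstruction: you would need $k$ independent bits to be readable from and writable to one vertex whose only HAC-visible state is its size and its (uniformly rescaled) incident edge weights. Beyond this, the comparator gadget itself---the entire technical content of the reduction---is left as ``the hard part,'' so there is no argument to verify.

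The paper avoids exactly this difficulty by not simulating a general comparator circuit. It reduces from LFM Matching through an intermediate problem (\am: repeatedly take the argmin of a matrix row over a shrinking index set), whose global state is just \emph{which indices have been removed so far}. That state is representable in a diameter-$4$ tree: each index is an internal node adjacent to the root, ``removed'' means ``merged into the root,'' and the row-$i$ entries $A[i,j]$ are encoded as counts of high-weight leaves in the $i$-th leaf group of internal node $v_j$. All cross-gadget communication happens through the root's size, and the phase analysis shows the internal node with the fewest remaining high-weight group-$i$ leaves is the unique one to merge with the root in phase $i$. If you want to rescue your approach, you would either need to restrict to a source problem whose state fits through a single shared vertex (which is essentially what \am is), or give a concrete mechanism for routing two independent wire bits into and out of a gate gadget without violating acyclicity---neither of which the current proposal supplies.
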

\noindent We note that it is known that $\CC \subseteq \PTime$ and so the $\PTime$-hardness of \cite{dhulipala2022hierarchical} already suggests the impossibility of efficient parallel algorithms on \emph{general graphs}. 
However, our result suggests the impossibility of efficient parallel algorithms \emph{even on very simple graphs} (trees of diameter 4).
We obtain this result by reducing from the lexicographically first maximal matching (\lfm) problem and an intermediate problem which we call \am, which captures some of what makes HAC intrinsically difficult to parallelize.

On the positive side, we demonstrate that average linkage HAC on path graphs is in \NC, under the mild assumption that the aspect ratio is polynomial.
While the class of path graphs is restrictive, even on paths average linkage is highly non-trivial and naively running HAC requires resolving chains of $\Omega(n)$ sequential dependencies. For example, consider a path of vertices $(v_1, v_2, \ldots, v_n)$ where the edge $\{v_{i}, v_{i+1}\}$ has weight $1+i \cdot \epsilon$ for some small $\epsilon > 0$ and initially each vertex is in its own cluster. Initially, $v_n$'s most similar neighbor is $v_{n-1}$ and so $v_n$ would like to merge with $v_{n-1}$ but $v_{n-1}$'s most similar neighbor is $v_{n-2}$ and so on. Thus, whether or not $v_n$ gets to merge with $v_{n-1}$ depends on the merge behavior of $\Theta(n)$ other clusters and so it is not at all clear that \NC algorithms should be possible for this setting. Nonetheless, we show the following.
\begin{restatable}{theorem}{mainUpperPar}\label{thm:mainUpperPar}
    Average linkage HAC on paths is in \NC. In particular, there is an algorithm for average linkage HAC that runs in $O(\log^2 n \log\log n)$ depth with $O(n\log n\log\log n)$ work.
\end{restatable}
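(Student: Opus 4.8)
The plan is to reduce the problem to computing the \emph{contraction order} of the $n-1$ path edges --- i.e., for each edge $e_j=\{v_j,v_{j+1}\}$, the rank $r_j$ of the step at which HAC merges the two clusters joined by $e_j$ --- and then to read off the dendrogram from this order. Two structural facts make this work. First, running HAC on a path, every cluster is always a contiguous subpath, so the average-linkage similarity between two adjacent clusters $C=[a,j]$ and $C'=[j{+}1,b]$ equals $w(e_j)/(|C|\cdot|C'|)$; thus each merge corresponds to contracting a unique path edge, and the similarity at which $e_j$ is contracted is $\sigma_j=w(e_j)/(L_j R_j)$ for the sizes $L_j,R_j$ of the clusters it joins. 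Second, average-linkage similarities are non-increasing over the course of HAC: merging $A$ and $B$ replaces $\mathrm{sim}(\cdot,C)$ by the size-weighted average of $\mathrm{sim}(A,C)$ and $\mathrm{sim}(B,C)$, both $\le\mathrm{sim}(A,B)$, and leaves all other similarities unchanged. Consequently edges are contracted in non-increasing order of $\sigma_j$ (breaking ties by HAC's index rule), so the contraction order is determined by the $\sigma_j$, and --- conversely --- once the contraction order is known the dendrogram is exactly the Cartesian tree of the sequence $(r_1,\dots,r_{n-1})$: the node owning $e_j$ has child-intervals delimited on each side by the nearest edge that is contracted \emph{later}. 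This Cartesian tree can be built in $O(\log n)$ depth and $O(n)$ work from all-nearest-larger-value computations, so it remains only to compute the contraction order in \NC.

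For this we simulate HAC in $O(\log n)$ \emph{phases}. In each phase the state is a partition of the path into intervals of known sizes, and the phase performs a maximal batch of merges of currently-highest priority; by monotonicity this batch forms a prefix of the merge sequence that remains. The delicate point, foreshadowed by the increasing-weight example in the introduction, is that this batch need not consist of order-independent merges: contracting an edge lowers the priority of its two neighboring boundary edges, which can in turn become the new maximum and ``unlock'' further contractions, creating a chain of $\Omega(n)$ sequentially dependent merges inside one phase. The key is that, because clusters are intervals and priorities only decrease, this dependency structure is one-dimensional and monotone: for each boundary edge one can determine, by a doubling argument, whether and where it falls in the phase's unlocking chain and at what cluster-size product it is selected, and then the entire chain --- together with the resulting $\sigma_j$ for every edge contracted in the phase --- can be resolved by a pointer-jumping/list-ranking-style computation in $O(\log n\log\log n)$ depth and near-linear work, without ever performing the merges one at a time. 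The polynomial-aspect-ratio hypothesis enters here: it bounds the number of distinct priority ``scales'' and thereby guarantees that $O(\log n)$ phases suffice (an unbounded aspect ratio permits ``caterpillar'' instances that genuinely require $\Omega(n)$ sequential merges), and it also keeps the sorting/selection inside each phase at $\polylog(n)$ cost. Together this yields depth $O(\log^2 n\log\log n)$ and work $O(n\log n\log\log n)$.

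The step I expect to be the main obstacle is making the phase analysis rigorous: one must (i) define a phase's batch so that it is \emph{exactly} a prefix of HAC's merge sequence yet is recognizable without running HAC, (ii) prove that $O(\log n)$ phases exhaust all merges --- which appears to require a charging argument describing how average linkage grows cluster sizes along a path under a polynomial aspect ratio --- and (iii) show that the internal unlocking chains, although potentially of length $\Omega(n)$, admit a description local enough (via the interval structure and the monotone evolution of priorities) to be collapsed by parallel pointer jumping and to have their $\sigma_j$ values computed directly. Care is also needed with ties, since the Cartesian-tree reformulation of the output requires the contraction order to be consistent with HAC's tie-breaking rule.
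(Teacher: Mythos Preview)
Your high-level scaffolding matches the paper's: both partition the merge sequence into $O(\log n)$ phases using the polynomial aspect ratio, both exploit the monotonicity of average-linkage similarities, and both recognize that the real difficulty is the $\Omega(n)$-long chain of sequential dependencies that can arise inside a single phase. Your Cartesian-tree postprocessing is a clean reformulation the paper does not use, but it is orthogonal to the hard part.

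The gap is exactly where you flag it: step (iii). You say the unlocking chain ``admits a description local enough \ldots\ to be collapsed by parallel pointer jumping,'' but you never say what that description is, and this is not a detail that can be filled in routinely. A generic pointer-jumping/list-ranking argument needs each node's successor to be computable from local information, yet here whether edge $e_j$ is contracted in the current phase, and with what cluster sizes, depends on the entire cascade of merges to its left (and right). The paper's solution is a specific structural discovery you are missing: it further splits each nearest-neighbor chain into \emph{subchains} at every index where $S(c_i)<2S(c_{i-1})$. Because sizes at least double along a subchain, each subchain has length $O(\log n)$ and can be processed sequentially. The key lemmas (the paper's Lemmas~3 and~4) then show that once the first cluster of a subchain merges---in either direction---the other boundary edge drops by a factor $\le 2/3$ and leaves the current bucket, so the \emph{only} information that crosses a subchain boundary is a single bit (did the first cluster merge leftward or not). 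That bit-propagation is what gets resolved by an \textsc{XOR} prefix-sum, not by pointer jumping on the raw chain.

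Without this size-doubling decomposition (or an equivalent device), your phase analysis does not go through: you have no bound on how much state must flow along the chain, and hence no reason the chain collapses in polylogarithmic depth. Your ``doubling argument'' hint may be gesturing toward this, but as written it is not an argument. If you want to repair the proposal, the concrete task is to prove an analogue of Lemmas~3 and~4 and to identify the one-bit interface between subchains; the Cartesian-tree step can then stay as you have it.
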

\noindent  The above algorithm leverages the fact that in average linkage HAC the maximum edge similarity monotonically decreases. In particular, it works in $O(\log n)$ phases where each phase consists of merges of equal similarity up to constants. The goal then becomes to efficiently perform merges until every edge is no longer within a constant of the starting maximum similarity of the phase. The starting point of the algorithm is to observe that $\Omega(n)$ sequential dependencies of clusters of equal size can be resolved efficiently in parallel in a phase by noting that in this phase only the odd-indexed edges merge in the chain. Thus, each edge can decide if it is odd-indexed in parallel by, e.g., using $\mathsf{prefix\text{-}sum}$, which is well known to be solvable in linear work in \NC.

For chains with clusters of general weights, we decompose dependency chains into short ($(O(\log n)$-length) subchains where resolving dependencies within the subchain must be done sequentially but in the current phase each subchain's merge behavior only depends on whether or not its closest neighboring subchains merges into it or not. Thus, each subchain can compute its merge behavior for these two cases and then, similar to the equal weights setting, we propagate merge behavior across subchains efficiently in parallel.

To complement our sequential lower bound with a positive result, we demonstrate that it is possible to achieve near-linear running time, provided the dendrogram has low height. Thus, if the output dendrogram is a relatively balanced tree, then near-linear time algorithms are possible.
\begin{restatable}{theorem}{mainUpperSec}\label{thm:mainUpperSeq}
    There is an implementation of the nearest-neighbor chain algorithm for average linkage HAC that runs in $O(m \cdot h \log n)$ time where $h$ is the height of the output dendrogram.
\end{restatable}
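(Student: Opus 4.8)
The plan is to give a careful implementation of the classical nearest\-neighbor chain (NN\-chain) algorithm together with an appropriate set of heap data structures, and then to bound its running time by charging the work done in each merge against the output dendrogram. First I would recall why NN\-chain is applicable: writing $s(\cdot,\cdot)$ for the average\-linkage similarity, average linkage is \emph{reducible}, since merging $A$ and $B$ gives $s(A\cup B,C)=\frac{|A|\,s(A,C)+|B|\,s(B,C)}{|A|+|B|}\le\max\{s(A,C),s(B,C)\}$ for every other cluster $C$, so merging a reciprocal nearest\-neighbor pair never raises any other pairwise similarity. It is classical that for a reducible linkage the NN\-chain algorithm --- repeatedly extend a chain of clusters, each the nearest neighbor of the previous, and merge the pair at the tip of the chain whenever two consecutive chain clusters are reciprocal nearest neighbors --- performs exactly the same set of merges as greedy HAC, in a valid greedy order (with ties broken by a fixed consistent rule so the chain never revisits a cluster), and hence outputs the same dendrogram $T$. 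So it suffices to implement NN\-chain quickly.

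Next I would set up the data structures. We maintain the contracted graph whose nodes are the current clusters and whose edges (``super\-edges'') store, for each pair of adjacent clusters $A,C$, the running similarity $s(A,C)$; each cluster keeps its incident super\-edges both in a hash map keyed by the other endpoint and in a max\-heap keyed by similarity, with handles linking the two. The chain is a stack. A generic step inspects the top cluster $a$, reads its current nearest neighbor $b$ as the top of $a$'s heap in $O(1)$, and either pushes $b$ (if $b$ is not the cluster just below $a$) or pops $a,b$ and merges them. To merge $A$ and $B$ into $D$: scan the super\-edge lists of $A$ and $B$, using the hash maps to fuse the two super\-edges to any common neighbor $C$ into one super\-edge of similarity the size\-weighted average; build $D$'s heap by heapify; and, for each neighbor $C$ of $D$, delete the (at most two) stale handles for $(C,A)$ and $(C,B)$ from $C$'s heap and insert the fresh handle for $(C,D)$. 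This costs $O\big((s_A+s_B)\log n\big)$, where $s_X$ denotes the number of super\-edges currently at $X$, while the stack bookkeeping is $O(1)$.

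Finally I would bound the total cost. There are exactly $n-1$ merges, and a one\-line potential argument on the chain length shows that the total number of pushes --- hence of nearest\-neighbor queries --- is $O(n)$, contributing $O(n\log n)$ overall. The dominant term is $\sum_{\text{merges}}O\big((s_A+s_B)\log n\big)$, so it remains to show $\sum_{\text{merges}}(s_A+s_B)=O(mh)$. When a cluster $A$ is merged, $s_A$ is at most the number of original edges with exactly one endpoint in $A$; charge one such edge to each super\-edge. A fixed original edge $\{u,v\}$ is charged only at merges that involve the current cluster of $u$ or of $v$ while $u$ and $v$ are still in different clusters; and over the whole run the clusters containing a fixed vertex $u$ participate in exactly $\mathrm{depth}_T(u)\le h$ merges, one for each ancestor of the leaf $u$ in $T$. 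Hence $\{u,v\}$ is charged at most $2h$ times, so $\sum_{\text{merges}}(s_A+s_B)\le 2mh$, and the total running time is $O(mh\log n+n\log n)=O(mh\log n)$ (isolated vertices and disconnected components are handled separately so that $m=\Omega(n)$).

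The hard part is not a single calculation but getting two structural facts exactly right. The first is that NN\-chain really does traverse a legitimate greedy merge order for average linkage: this is the reducibility identity above together with a fixed tie\-breaking rule that also guarantees the chain terminates without cycling. The second is the charging step, whose crux is the bijection between ``merges that involve $u$'s current cluster'' and ``ancestors of the leaf $u$ in $T$'', which is precisely where the height $h$ enters the bound. The data\-structure bookkeeping --- keeping heap tops valid across merges and matching super\-edges to shared neighbors in $O(1)$ amortized via hashing --- is routine but must be done with care so that the cost of processing each super\-edge during a merge stays $O(\log n)$.
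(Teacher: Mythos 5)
Your proposal is correct and follows essentially the same route as the paper: a per-merge cost of $O\bigl((|N(A)|+|N(B)|)\log n\bigr)$ from heap-based neighborhood bookkeeping, combined with the charging argument that each (directed) edge is touched at most once per merge involving its endpoint's current cluster, and that cluster participates in at most $h$ merges since each such merge creates a distinct ancestor of the corresponding leaf in the dendrogram. The only differences are cosmetic --- the paper computes neighbor-set intersections by sorting rather than hashing, and you additionally spell out the reducibility of average linkage, which the paper treats as classical --- so nothing further is needed.
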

\noindent The above result is in fact obtained by a relatively simple (but to the best of our knowledge, new) analysis of {\em existing} classic HAC algorithms.
In particular, we show that the nearest-neighbor chain~\cite{benzecri, juan} and heap-based algorithms~\cite{irbook} for HAC, which were developed over 40 years ago achieve this bound. 


\section{Preliminaries}\label{sec:prelims}
The input to the HAC algorithm is an undirected weighted graph $G=(V, E, w)$, where $w : V \times V \rightarrow \mathbb{R}_+ \cup \{0\}$ is a function assigning nonnegative weights to the edges.
For convenience we assume $w(x, y) = 0$ when $xy \not\in E$.
The vanilla version of average linkage HAC is given as \cref{alg:staticgraph}.
It starts by putting each vertex in a cluster of size $1$ and then repeats the following step.
While there is a pair of clusters of positive similarity, find two most similar clusters and merge them together, that is, replace them by their union.
The similarity between two clusters is the total edge weight between them divided by the product of the cluster sizes.
We refer to this version as the \emph{static graph} version, since the graph is not changed throughout the run of the algorithm.

Throughout the paper we usually work with a different (equivalent) way of presenting the same algorithm which is given as \cref{alg:contractions} (e.g., \Cref{alg:nnchain,alg:heapbased}).
In this version we maintain a graph $G$ whose vertices are clusters.
The \emph{size} of the vertex is the size of the cluster it represents.
The \emph{normalized} weight of an edge $xy$ in $G$ is $w(x,y)$ divided by the product of the sizes of $x$ and $y$.

Whenever two clusters merge, their corresponding vertices are merged into one, i.e., the edge between them is contracted and the size of the new vertex is the sum of the sizes of the vertices that merged.
In the following we sometimes say that a vertex $x$ merges into vertex $y$.
In this case we simply assume that the name of the resulting vertex is $y$ and the size of $y$ is increased by the size of $x$. See \Cref{fig:HACMerge}.

\begin{figure}[h]
    \centering
    \vspace{-0.5em}
    \begin{subfigure}[b]{0.19\textwidth}
        \centering
        \includegraphics[width=\textwidth,trim=0mm 0mm 350mm 0mm, clip]{./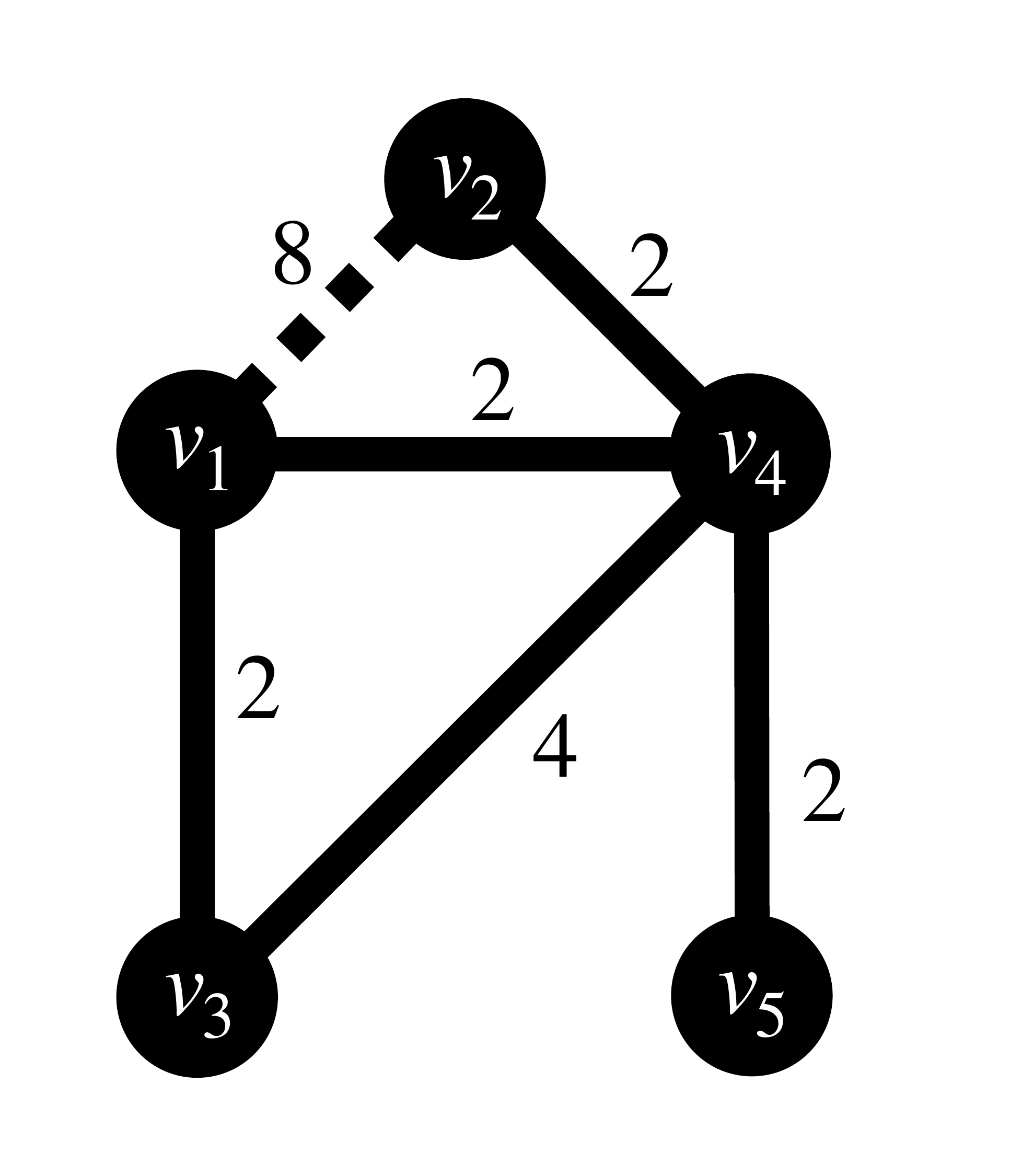}
        \caption{Input $G$.}\label{sfig:hacMerge1}
    \end{subfigure} \hfill
    \begin{subfigure}[b]{0.19\textwidth}
        \centering
        \includegraphics[width=\textwidth,trim=0mm 0mm 350mm 0mm, clip]{./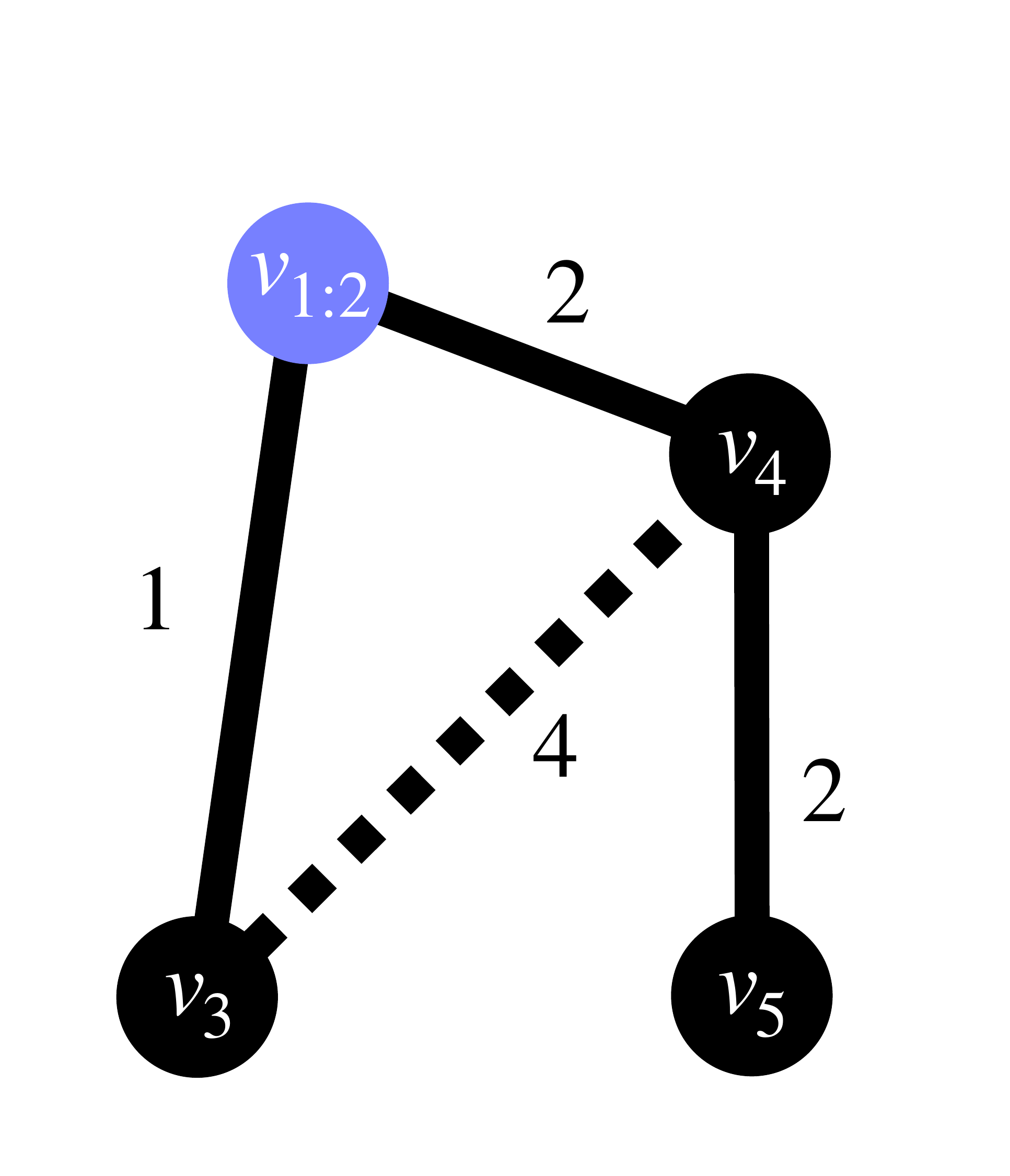}
        \caption{1 merge.}\label{sfig:hacMerge2}
    \end{subfigure}  \hfill
    \begin{subfigure}[b]{0.19\textwidth}
        \centering
        \includegraphics[width=\textwidth,trim=0mm 0mm 320mm 0mm, clip]{./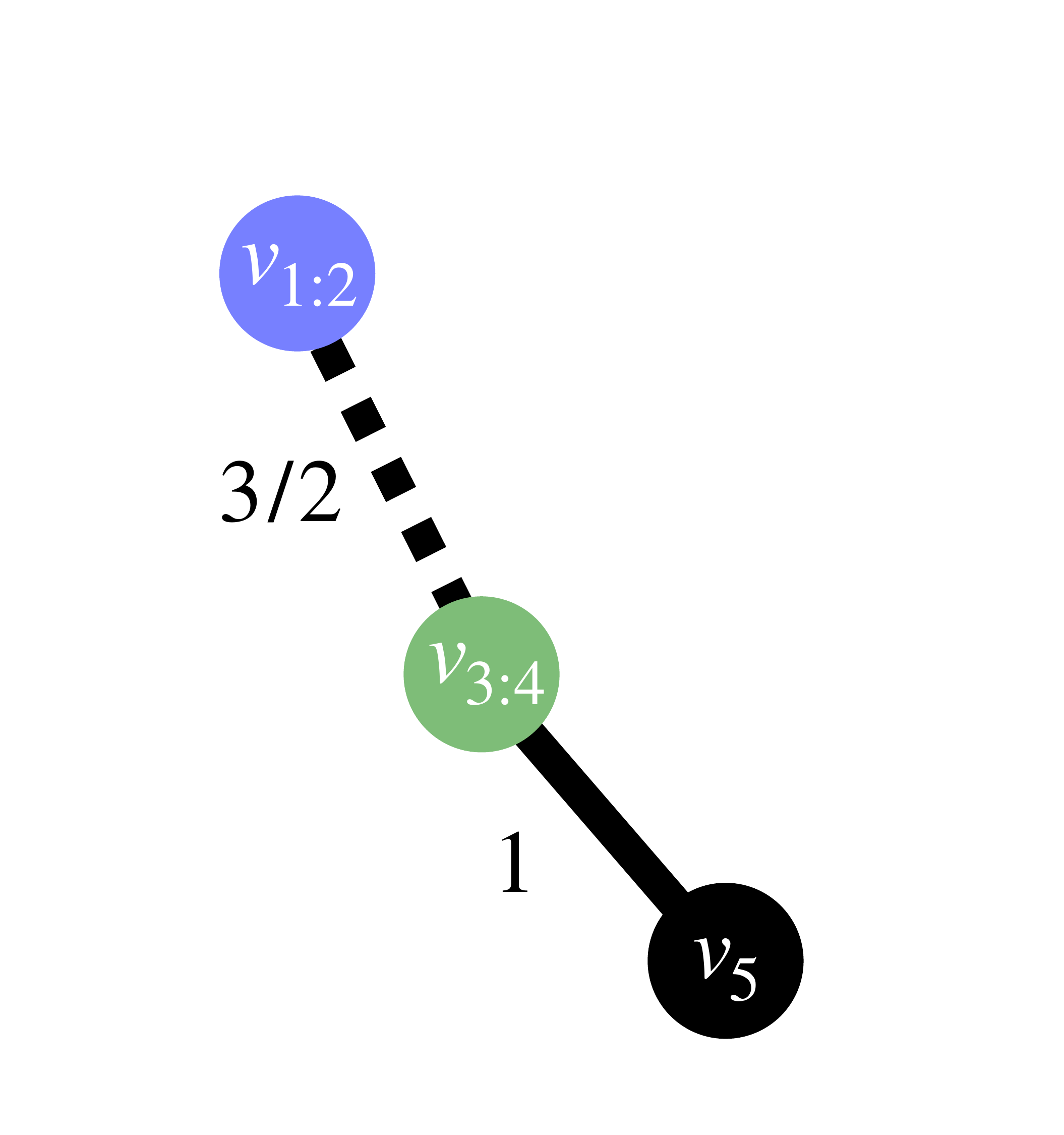}
        \caption{2 merges.}\label{sfig:hacMerge3}
    \end{subfigure}   \hfill
        \begin{subfigure}[b]{0.19\textwidth}
        \centering
        \includegraphics[width=\textwidth,trim=0mm 0mm 320mm 0mm, clip]{./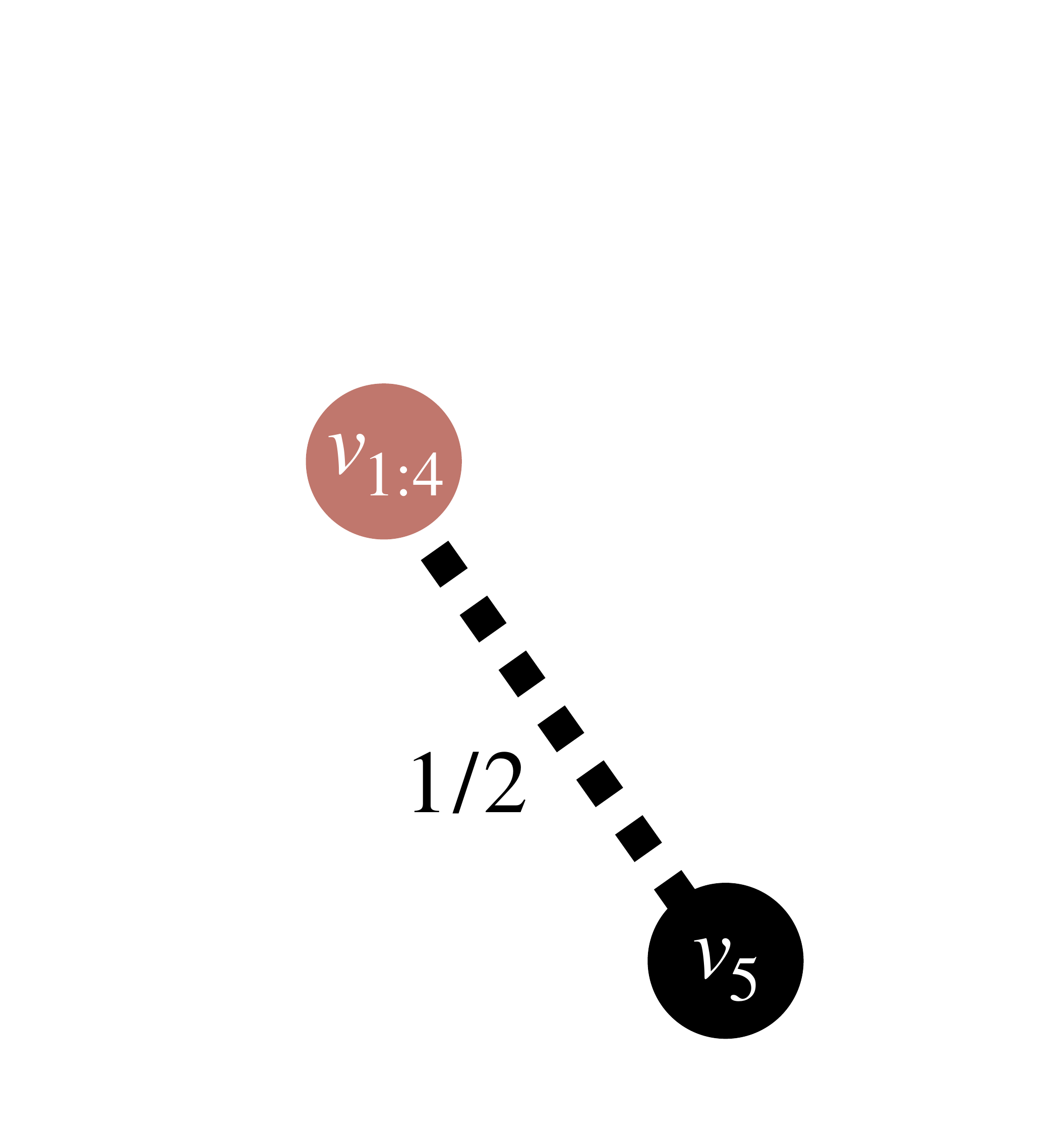}
        \caption{3 merges.}\label{sfig:hacMerge4}
    \end{subfigure}   \hfill
        \begin{subfigure}[b]{0.19\textwidth}
        \centering
        \includegraphics[width=\textwidth,trim=0mm 0mm 320mm 0mm, clip]{./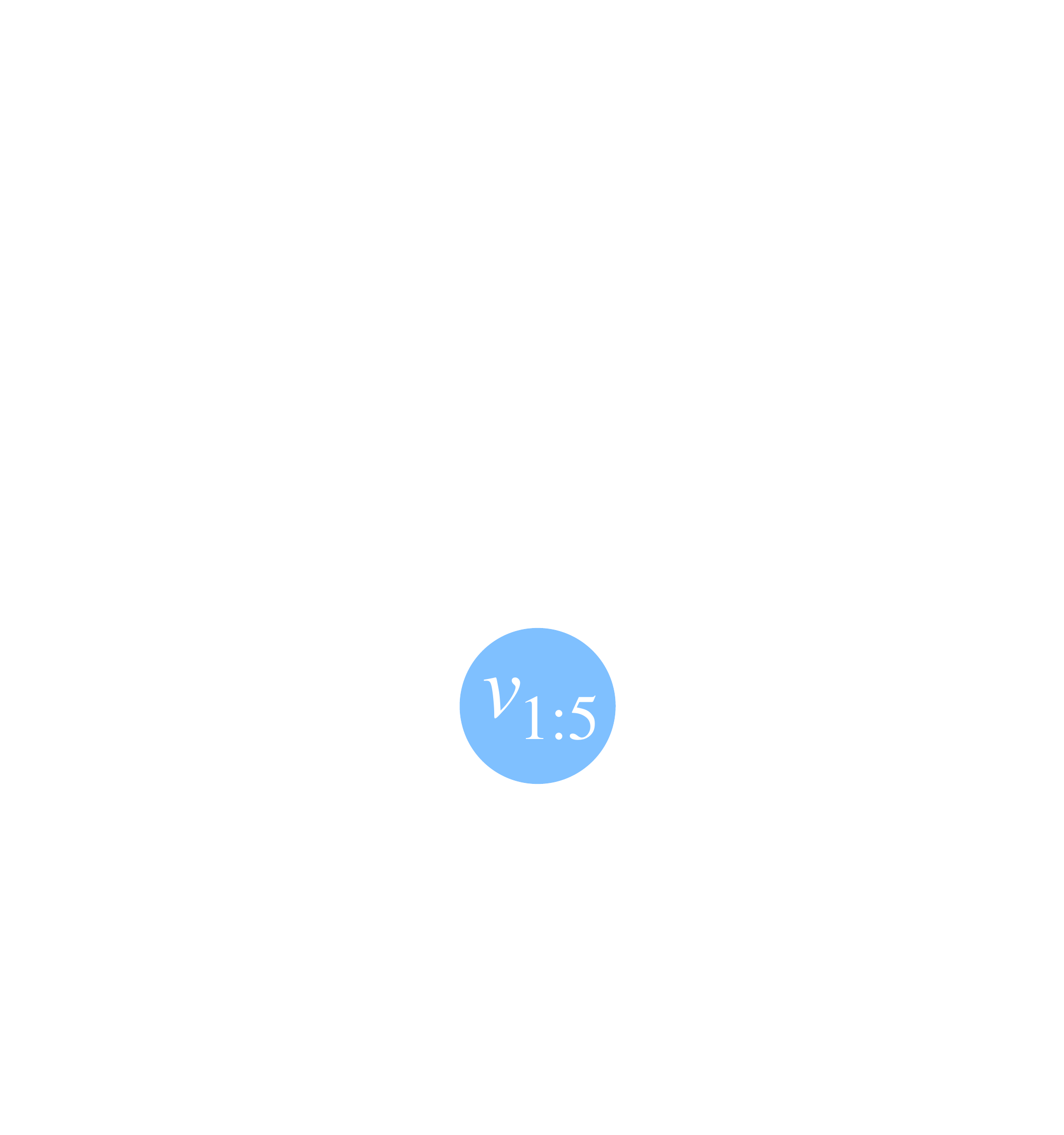}
        \caption{4 merges.}\label{sfig:hacMerge5}
    \end{subfigure}   
    \vspace{-0.5em}
    \caption{An example of average linkage HAC run on an input graph $G$ where we imagine we contract merged clusters. Intermediate vertices labeled with the vertices of $G$ their corresponding cluster contains. Edges labeled with their weight and next merged edge is dashed. } \label{fig:HACMerge}
\end{figure}

The output of HAC is a \emph{dendrogram}---a rooted binary tree representing the cluster merges performed by the algorithm.
Every node of the dendrogram is a cluster built by the algorithm.
There are exactly $|V|$ leaves corresponding to the single-element clusters that are formed in the beginning of the algorithm.
Whenever two clusters $C_1$ and $C_2$ are merged, we add to the dendrogram a new node $C_1 \cup C_2$ whose children are $C_1$ and $C_2$. See \Cref{sfig:hac3} for the dendrogram of \Cref{fig:HACMerge}.

\newcommand{\forkins}{\texttt{fork}}
\newcommand{\insend}{\texttt{end}}
\newcommand{\allocateins}{\texttt{allocate}}
\newcommand{\freeins}{\texttt{free}}
\newcommand{\mpram}{MT-RAM}
\newcommand{\thread}{thread}
\newcommand{\bfmodel}{binary-forking model}

We use the classic multithreaded model~\cite{BL98,ABP01,blelloch2020optimal} (formally, the MP-RAM~\cite{blelloch2020optimal}) to analyze the parallel algorithms.
We assume a set of \thread{}s that share the memory.
Each \thread{} acts like a sequential RAM plus a \forkins{} instruction  that forks two new child threads.
When a thread performs a fork, the two child threads can both start by running their next instructions, and the original thread is suspended until both children terminate.
A computation starts with a single root thread and finishes when that root thread finishes.
A parallel for-loop can be viewed as executing \forkins{}s for a logarithmic number of levels.
A computation can thus be viewed as a DAG (directed acyclic graph).
We say the {\em work} is the total number of operations in this DAG and {\em span (depth)} is equal to the longest path in the DAG.
We note that computations in this model can be cross-simulated in standard variants of the PRAM model in the same work (asymptotically), and losing at most a single logarithmic factor in the depth~\cite{blelloch2020optimal}.

\begin{algorithm}[t]
\caption{Average linkage HAC---static graph version.}\label{alg:staticgraph}
\KwIn{$G = (V, E, w)$}
\SetKwFunction{FSim}{Similarity}
\SetKwFunction{FHac}{HAC}
\SetKwProg{Fn}{Function}{:}{}

\Fn{\FSim{$C_1, C_2, w$}}{
\Return $\sum_{x \in C_1, y \in C_2} w(x,y) / (|C_1| \cdot |C_2|)$
}

\Fn{\FHac{$G$}}{
$\clustering \gets$ clustering where each vertex of $G$ is in a separate cluster\\
\While{$\exists_{C_1, C_2 \in \clustering}$ s.t. $C_1 \neq C_2$ and $\FSim(C_1, C_2, w) > 0$}{
$(C_1, C_2) = \argmax_{(C_1, C_2) \in \clustering \times \clustering} \FSim(C_1, C_2, w)$\\
$\clustering := (\clustering \setminus \{C_1, C_2\}) \cup \{C_1 \cup C_2\}$.
}
}
\end{algorithm}

\begin{algorithm}[t]
\caption{Average linkage HAC---graph contraction version.}\label{alg:contractions}
\KwIn{$G = (V, E, w)$}
\SetKwFunction{FSim}{Similarity}
\SetKwFunction{FHac}{HAC}
\SetKwProg{Fn}{Function}{:}{}

\Fn{\FSim{$x, y, w, S$}}{
\Return $w(x,y) / (S(x) \cdot S(y))$
}

\Fn{\FHac{$G$}}{
$S := $ a function mapping each element of $V$ to 1\\
\While{$\exists_{xy \in E}$ s.t. $\FSim(x, y, w, S) > 0$}{
$xy = \argmax_{xy \in E} \FSim(x, y, w, S)$\\
Contract $x$ with $y$ in G creating a vertex $z$. The parallel edges that are created are merged into a single edge whose weight is the sum of the merged edge weights. Any resulting self-loops are removed.\\
Set $S(z) := S(x) + S(y)$
}
}
\end{algorithm}

\section{An \texorpdfstring{\boldmath$\Omega(n^{3/2 - \epsilon})$}{Omega(n\^(3/2-epsilon))} Conditional Lower Bound for Average Linkage HAC}\label{sec:worklb}

In this section, we show an $\Omega(n^{3/2 - \epsilon})$ conditional lower bound on the time required to solve average linkage HAC on general weighted graphs. Specifically, we show this lower bound assuming the Combinatorial Boolean Matrix Multiplication (BMM) conjecture, a central conjecture in fine-grained complexity about the time required to multiply two $n \times n$ boolean matrices~\cite{williams2010subcubic, abboud2024faster}.

\begin{conjecture}[Combinatorial BMM]\label{conj:bmm}
Combinatorial algorithms cannot solve Boolean Matrix Multiplication in time $O(n^{3-\epsilon})$ for $\epsilon > 0$.
\end{conjecture}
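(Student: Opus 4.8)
The final displayed statement is the \emph{Combinatorial BMM Conjecture}, which is a standard hardness hypothesis in fine-grained complexity rather than a theorem with a proof. Consequently there is no proof to sketch in the usual sense: the conjecture is \emph{assumed}, and it serves as the axiom from which the conditional lower bound of \Cref{thm:worklb} is derived. The plan for the surrounding section is therefore not to establish this statement but to take it as given and reduce Boolean Matrix Multiplication (equivalently, triangle detection) to an instance of average linkage HAC, so that a sufficiently fast combinatorial HAC algorithm would contradict it. In this sense the ``proof obligation'' attached to the conjecture is discharged elsewhere in the literature and by the reduction, not by an argument internal to the statement itself.

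If one nonetheless wished to genuinely \emph{prove} the conjecture, the task would be to establish an unconditional lower bound of $\Omega(n^{3-\epsilon})$ against all combinatorial algorithms for multiplying two $n \times n$ Boolean matrices. The first step would be to fix a formal model capturing what ``combinatorial'' is meant to exclude, for instance a restricted circuit or data-structure model that rules out the algebraic cancellation tricks underlying Strassen-style and Coppersmith--Winograd-style algorithms. One would then need a lower-bound argument---for example an information-theoretic or communication-complexity bottleneck---showing that any algorithm in this model must examine essentially $n^3$ witnessing triples. The evidence supporting the conjecture is precisely that decades of effort have stalled here: the best genuinely combinatorial method, the Four Russians technique, removes only polylogarithmic (and recently slightly super-polylogarithmic) factors from $n^3$, while every truly subcubic algorithm relies on fast matrix multiplication, which is algebraic and widely regarded as non-combinatorial.

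The main obstacle is twofold and, with current techniques, out of reach. First, there is no agreed-upon formal definition of a ``combinatorial'' algorithm, so the statement is not even a precise claim about a fixed model of computation; any proof attempt must begin by \emph{proposing} such a model, and different reasonable choices yield different and debatable theorems. Second, even for a cleanly specified model, proving a lower bound of the form $\Omega(n^{3-\epsilon})$ would constitute an unconditional, near-cubic lower bound, which is far beyond what we can currently establish---we cannot even prove strong general circuit lower bounds. For exactly these reasons the statement is, and is intended to be, a conjecture: the productive path taken by the paper is to adopt it as a hypothesis and transfer its assumed hardness to average linkage HAC via the reduction developed in the remainder of \Cref{sec:worklb}.
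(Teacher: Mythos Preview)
Your assessment is correct and matches the paper's treatment: the Combinatorial BMM Conjecture is stated as a standard fine-grained complexity hypothesis (with citations) rather than proved, and the paper simply assumes it in order to derive the conditional lower bound via the triangle-detection reduction. There is nothing further to compare, since neither you nor the paper attempts a proof of this statement.
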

\noindent We refer to~\cite{abboud2024faster} for an in-depth discussion of the somewhat informal notion of ``combinatorial'' algorithms and more on Conjecture~\ref{conj:bmm} and its history.

In this work we will make use of an equivalent characterization of the BMM conjecture due to \cite{williams2010subcubic}. Specifically, \cite{williams2010subcubic} shows that the BMM problem is sub-cubically equivalent to the Triangle Detection problem: the problem of deciding whether or not an input graph $G$ contains a triangle (i.e., cycle of $3$ vertices). The following summarizes this result.
\begin{theorem}[Theorem 1.3 of \cite{williams2010subcubic}]\label{thm:bmmeq}
    Combinatorial algorithms cannot solve Triangle Detection in time $O(n^{3-\epsilon})$ for  $\epsilon > 0$ unless \Cref{conj:bmm} is false.
\end{theorem}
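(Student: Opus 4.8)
The statement is one direction of a subcubic equivalence, so the plan is to prove its contrapositive: assuming a combinatorial triangle-detection algorithm with running time $T(n) = O(n^{3-\epsilon})$ on $n$-vertex graphs, I would construct a combinatorial algorithm for Boolean Matrix Multiplication running in time $O(n^{3-\epsilon'})$ for some $\epsilon' > 0$, contradicting \Cref{conj:bmm}. (Only this single implication is needed here, even though the original equivalence is stated in both directions.) The entire content is therefore a reduction from BMM to Triangle Detection, and the starting gadget is the standard tripartite encoding: to compute $C = A \cdot B$ for Boolean $A, B \in \{0,1\}^{n \times n}$, form a tripartite graph on vertex classes $I, K, J$ (each a copy of $[n]$), placing an edge $\{i,k\}$ whenever $A[i][k]=1$, an edge $\{k,j\}$ whenever $B[k][j]=1$, and a \emph{candidate} edge $\{i,j\}$ for each pair we wish to test. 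Then $\{i,j\}$ lies on a triangle if and only if $C[i][j]=1$.

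First I would block each of $I, K, J$ into $n/s$ groups of size $s$, with $s = n^{1/3}$ to be fixed at the end. For a fixed pair of groups $(I_a, J_b)$ the goal is to recover the $s \times s$ sub-block $C[I_a][J_b]$, and here $C[i][j]=1$ for $(i,j) \in I_a \times J_b$ exactly when some middle group $K_c$ contains a witness $k$. I would compute this sub-block by maintaining the set $U \subseteq I_a \times J_b$ of pairs not yet known to equal $1$ and looping over the $n/s$ middle groups $K_c$: on each group I run triangle detection on the $3s$-vertex graph induced by $I_a, K_c, J_b$ with candidate-edge set $U$; as long as a triangle is reported I extract one witness candidate edge $(i,j)$, set $C[i][j]=1$, delete it from $U$, and re-run, advancing to the next $K_c$ only once no triangle remains. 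Extracting a witness edge from the detection oracle is a routine binary search over $U$ that costs an $O(\log s)$ factor.

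The crux of the accounting is that, although the naive ``one detection per block-triple'' schedule would re-examine each $1$-entry once per middle group and thereby incur $\Theta((n/s)^3 \cdot s^2) = \Theta(n^3/s)$ probes, the delete-on-success rule charges every \emph{successful} detection to a distinct $1$-entry of $C$. Hence, summed over all $(n/s)^2$ group-pairs, the successful probes number at most the global output size $\|C\|_1 \le n^2$, while the unsuccessful probes (one per middle group per group-pair) number $(n/s)^2 \cdot (n/s) = (n/s)^3$. The total number of detection calls is thus $\tilde{O}(n^2 + (n/s)^3)$, each on an $O(s)$-vertex graph and so costing $O(\log s) \cdot T(s)$. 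With $T(s) = O(s^{3-\epsilon})$ and $s = n^{1/3}$ both terms evaluate to $\tilde{O}(n^{3-\epsilon/3})$, yielding a truly subcubic combinatorial BMM algorithm and refuting \Cref{conj:bmm}; the reduction is manifestly combinatorial, consisting only of graph constructions and bookkeeping, so the combinatorial nature of the assumed algorithm is preserved.

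I expect the main obstacle to be exactly this accounting step: ensuring that the reduction does not pay per-middle-block for each output $1$. The delete-on-success charging is what keeps the dominant cost tied to $\|C\|_1$ rather than to $n^3/s$, and the choice $s = n^{1/3}$ is what simultaneously balances the $n^2 \cdot T(s)$ ``output'' term against the $(n/s)^3 \cdot T(s)$ ``probing'' term. A secondary point to verify carefully is the witness-extraction subroutine that turns the detection oracle into a finding oracle, confirming that its $O(\log s)$ overhead, together with the blocking, never pushes either term back up to $n^3$.
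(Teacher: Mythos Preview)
The paper does not give its own proof of this statement: it is quoted as Theorem~1.3 of \cite{williams2010subcubic} and used as a black box. Your proposal correctly reconstructs the standard Vassilevska Williams--Williams reduction (tripartite encoding, block into $n/s$ groups per side, delete-on-success over candidate $I_a\times J_b$ edges, charge successful calls to output $1$'s and unsuccessful calls to the $(n/s)^3$ block-triples, set $s=n^{1/3}$ to get $\tilde O(n^{3-\epsilon/3})$), which is exactly the argument the cited paper gives and is consistent with the $\delta \mapsto \delta/3$ loss the present paper later quotes when discussing Theorem~\ref{cor:workmatmul}.
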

Thus, we give a reduction from Triangle Detection to average linkage HAC. Our reduction will quadratically increase the number of vertices of the input Triangle Detection instance, and therefore give an $\Omega(n^{3/2 - \epsilon})$ lower bound for average linkage HAC. In the rest of this section, we show the following quadratic-blowup reduction from Triangle Detection to average linkage HAC.

\begin{theorem}\label{thm:lbreduction}
Given a Triangle Detection instance on graph $G$ with $t$ vertices and $m$ edges, there is a reduction that runs in $O(t^2)$ time and constructs an instance of average linkage HAC on graph $G'$ with $t + t^2$ vertices and $t^2 + m$ edges. 
Furthermore, given the sequence of merges performed by average linkage HAC on $G'$, we can solve Triangle Detection on $G$ in time $O(t^2)$.
\end{theorem}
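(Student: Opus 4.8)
The plan is to reduce Triangle Detection to average linkage HAC and invoke its sub-cubic equivalence with Combinatorial BMM (\Cref{thm:bmmeq}). Concretely, I would build from a Triangle Detection instance $G$ on $t$ vertices and $m$ edges an average linkage HAC instance $G'$ on $n := t + t^2 = \Theta(t^2)$ vertices and $t^2 + m$ edges, in $O(t^2)$ time and space, so that whether $G$ has a triangle is determined by an $O(t^2)$-time scan of the merge sequence that average linkage HAC produces on $G'$. Since $n^{3/2-\epsilon} = O(t^{3-2\epsilon})$ and $n^{\omega/2-\epsilon}=O(t^{\omega-2\epsilon})$, while the reduction and post-processing cost only $O(t^2)$, any such HAC algorithm would beat the (conditional) lower bounds for Triangle Detection, giving \Cref{thm:worklb} and \Cref{cor:workmatmul}. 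So all the content is in the reduction of \Cref{thm:lbreduction}.

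\textbf{The construction.} Keep the $t$ vertices of $G$ as \emph{core} vertices and the $m$ edges of $G$ among them, but replace each unit weight by $1+\varepsilon_e$ for tiny, pairwise-distinct rationals $\varepsilon_e$ with $\sum_e \varepsilon_e < 1$ (to break ties; these are trivially computable and storable in $O(t^2)$). To each core vertex $v_i$ attach a \emph{bundle} of $t$ fresh pendant vertices, each joined to $v_i$ by a single edge of large weight $W$ (say $W := 3t$, with these weights also slightly perturbed so the whole run is tie-free). This is exactly $t+t^2$ vertices and $t^2+m$ edges, built in $O(t^2)$ time and space.

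\textbf{Phase analysis and the structural lemma.} Working in the graph-contraction formulation (\cref{alg:contractions}), the first step is to show the run splits into a \emph{pendant phase} and a \emph{core phase}. As long as any pendant edge survives, its normalized weight is at least $W/t > 2$ (a pendant has size $1$, and, inductively, its core cluster has absorbed at most $t-1$ pendants and merged with no other core yet, so has size at most $t$), whereas every core--core edge is still a single edge of $G$ between clusters of size at least $1$, hence has normalized weight at most $1+\varepsilon_e < 2$; so average linkage HAC performs all $t^2$ pendant merges first, after which each core is a cluster of size exactly $s := t+1$ and the contracted graph is a copy of $G$ with each surviving edge $e$ of normalized weight $(1+\varepsilon_e)/s^2$. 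In the core phase, a cluster $C$ always corresponds to a \emph{connected} vertex set $S_C$ of $G$ (HAC only merges clusters joined by an edge) of size exactly $s\cdot|S_C|$, so $\mathrm{sim}(C,D) = \big(\sum_{e\in E_G(S_C,S_D)}(1+\varepsilon_e)\big)/\big(s^2|S_C||S_D|\big)$; since $\sum_e \varepsilon_e < 1$, this is at least $1/s^2$ only when $E_G(S_C,S_D)$ is complete between $S_C$ and $S_D$, and if additionally $\max(|S_C|,|S_D|)\ge 2$ then---because the larger side is connected and hence contains an edge---completeness forces a triangle in $G$. Thus \emph{if $G$ is triangle-free}, the only core-phase merges of similarity $\ge 1/s^2$ join two singleton cores across an edge of $G$; as all other merges are strictly below $1/s^2$, HAC begins the core phase by forming a maximal matching of $G$ (greedily in $\varepsilon$-order) with every later merge below $1/s^2$. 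Contrapositively: \emph{if the recorded sequence contains a core-phase merge of similarity $\ge 1/s^2$ involving a cluster of size $\ne s$, then $G$ has a triangle}---and this is testable by one $O(t^2)$-time scan maintaining cluster sizes.

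\textbf{The main obstacle.} The remaining, and hardest, direction is the converse: guaranteeing that a triangle in $G$ \emph{necessarily} produces such an observable deviation. For the bare construction above it does not: if each vertex of a triangle is matched, in the initial maximal matching, to a non-triangle partner, the triangle only ever surfaces as (at least) two $G$-edges between two size-$2s$ clusters, i.e.\ near similarity $1/(2s^2)$, a value also produced by innocuous $4$-cycles, and hence not a certificate by itself. I expect the fix is to make the pendant bundles encode adjacency rather than merely inflate sizes: the $t$ pendant slots of $v_i$ should carry $v_i$'s neighborhood vector (via their edge weights), arranged so that whenever two cores merge any common neighbor becomes a \emph{strictly dominant} next merge, making the telltale event unavoidable while preserving the two-phase behavior and tie-freeness above. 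Pinning down these weights so that (a) the pendant phase still completes first, (b) the core phase behaves as analyzed when $G$ is triangle-free, and (c) every triangle provably forces an observable merge, all with polynomially bounded, $O(t^2)$-computable weights and an $O(t^2)$ post-processing scan, is the bulk of the argument. The quadratic blow-up in the vertex count---which is exactly what converts a hypothetical $n^{3/2-\epsilon}$ (or $n^{\omega/2-\epsilon}$) HAC algorithm into a sub-cubic Triangle Detection algorithm---is an intrinsic cost of padding every core into a cluster of size $\Theta(t)$.
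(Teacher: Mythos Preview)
Your high-level strategy---reduce from Triangle Detection, attach $t$ pendants per core vertex, and read off a certificate from the merge sequence---matches the paper, and you have correctly diagnosed why your first construction fails: once all pendants absorb uniformly, triangles can hide behind a maximal matching and are indistinguishable from $4$-cycles in the core phase. But the proposal stops exactly at the hard part. Your proposed fix (``make pendant bundles encode adjacency so that after a core--core merge a common neighbor becomes strictly dominant'') is stated only as an expectation, and in fact the mechanism you sketch is not the one that works: pendant weights live on core--leaf edges and cannot directly boost a core--core similarity, so it is unclear how they would force a common neighbor to become the next merge.

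The missing idea is that the pendant merges should \emph{not} all happen first in a single phase; they should be \emph{interleaved} across $t$ rounds so that each round $i$ momentarily exposes the neighborhood $N_G(v_i)$. Concretely, the paper sets the weight of the round-$i$ pendant of $v_j$ to $(1/i)+\epsilon$ if $v_j\notin N_G(v_i)$ and $(1/i)-\epsilon$ if $v_j\in N_G(v_i)$. Then at the start of round $i$ every core has size $i$; the non-neighbors of $v_i$ immediately absorb their round-$i$ pendant (normalized weight $1/i^2+\epsilon/i$), while the neighbors of $v_i$ stay at size $i$. At that instant any $G$-edge between two such neighbors has normalized weight exactly $1/i^2$, beating the remaining $(1/i^2-\epsilon/i)$ pendant edges, so it merges---and this is precisely a triangle on $v_i$. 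If no such edge exists, the remaining pendants absorb and round $i+1$ begins. The certificate is therefore simply ``some core--core edge merged within the first $t^2$ merges,'' which is both necessary and sufficient for a triangle and checkable in $O(t^2)$. Your two-phase design cannot produce this kind of synchronized exposure, which is why the converse direction was unreachable.
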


As a corollary of this reduction and \Cref{thm:bmmeq}, we obtain the following conditional lower-bound on the running time of HAC.

\mainLowerSeq*

As a second corollary, we obtain a conditional lower-bound in terms of the optimal running time of matrix multiplication for two $n \times n$ binary matrices. 
Matrix multiplication can be solved in time $O(n^{\omega})$ where $2 \leq \omega < 2.3716$~\cite{williams2024new}. 
An extensive line of research on matrix multiplication over the past thirty years has only improved $\omega$ from $2.376$ to $2.3716$, with the current state-of-the-art being due to a very recent result of Williams et al.~\cite{williams2024new} (for a subset of the historical advances in this area see, e.g., \cite{coppersmith1982asymptotic, le2012faster, alman2021refined, williams2012multiplying}).
The fastest known algorithm for triangle detection works by simply reducing the problem to matrix multiplication and therefore runs in $O(n^{\omega})$ time.
Surprisingly, despite triangle detection only returning a single bit (whether a triangle exists or not in $G$), the problem can be used to give a sub-cubic reduction for boolean matrix multiplication (where the output is $n^2$ bits).
In particular, an algorithm for triangle detection running in time $O(n^{3-\delta})$ for some $\delta > 0$ yields an algorithm for matrix multiplication in time $O(n^{3-\delta/3})$~\cite{williams2010subcubic}.
Using this fact, we can derive a conditional lower bound based on the value of $\omega$.

\workmatmul*


An interesting open question is whether there are faster non-combinatorial algorithms that can leverage fast matrix multiplication or Strassen-like techniques and improve over the $\Omega(n^{3/2 - \epsilon})$ barrier for combinatorial algorithms for average linkage HAC.

\subsection{Reduction}
We now prove Theorem~\ref{thm:lbreduction} by
giving a quadratic-time reduction from triangle detection to average linkage HAC.
The reduction is loosely inspired by a recent lower-bound result for multidimensional range queries~\cite{lau2021algorithms}.
The input to the reduction is an unweighted graph $G$ on $t$ vertices with $m$ edges; the problem is to detect whether $G$ has a triangle.
To do this, we will construct a HAC instance on an edge-weighted graph $G'$ with $t + t^2$ vertices and $t^2 + m$ edges. 
We will show that the specific way in which an exact HAC algorithm merges the edges in this instance reveals whether or not $G$ has a triangle.

\paragraph{Constructing $G'$}
Let $N_{G}(v)$ denote the neighbors of a vertex $v \in G$ (note that $v \notin N_{G}(v)$).
We define $G'$ as follows. 
We start by adding all vertices and edges from $G$, that is the $t$ vertices $v_1, \ldots, v_t$ from $G$, including all of their incident edges $N_{G}(v_i)$.
We call these the {\em core} vertices. 
The initail weight of the edges between any two core vertices is set to $1$.

In addition to the core vertices, we add an additional $t^2$ {\em leaf} vertices that we connect to the core vertices with specific edge weights.
We add the $t^2$ leaf vertices over a sequence of $t$ {\em rounds} where the $i$-th round connects one new leaf vertex to every core vertex.
The weights to the newly added leaves depend on the neighbors of the node $v_i$ in the original graph $G$, and are set as follows:
\begin{enumerate}[label=(\textbf{\arabic*})]
\item A core vertex $v_j$ is connected to its new leaf with an edge of weight $(1/i) - \epsilon$ if $v_j \in N_{G}(v_i)$.\label{typeone}
\item A core vertex $v_j$ is connected to its new leaf with an edge of weight $(1/i) + \epsilon$ if $v_j \notin N_{G}(v_i)$.\label{typetwo}
\end{enumerate}

\noindent See \Cref{fig:HACSeqRed} for an illustration of our reduction.
\begin{figure}[h]
    \centering
    \begin{subfigure}[b]{0.25\textwidth}
        \centering
        \includegraphics[width=\textwidth,trim=0mm 0mm 370mm 160mm, clip]{./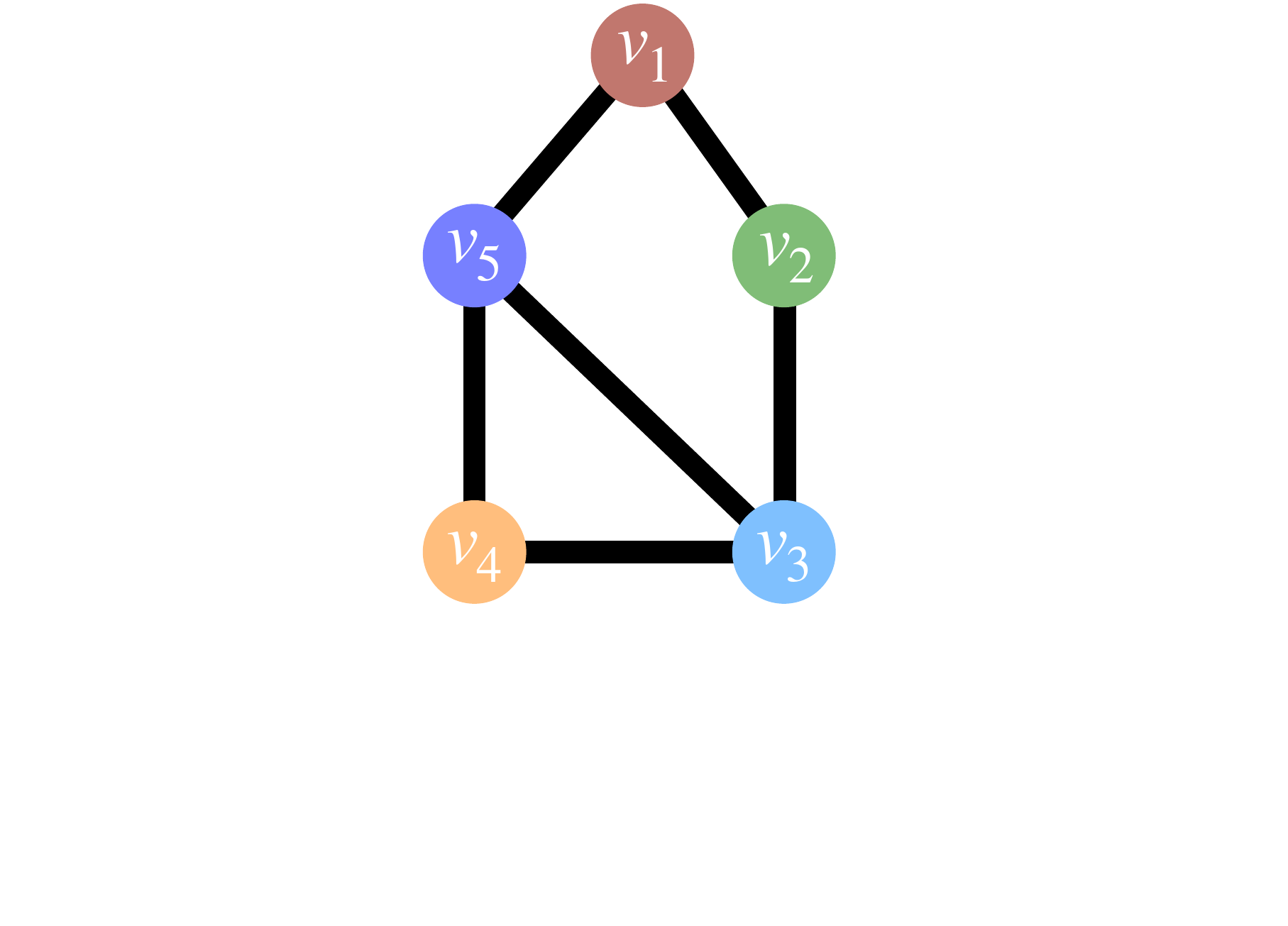}
        \caption{Input $G$.}\label{sfig:CBMMRed1}
    \end{subfigure} \hfill
       \begin{subfigure}[b]{0.25\textwidth}
        \centering
        \includegraphics[width=\textwidth,trim=0mm 0mm 370mm 80mm, clip]{./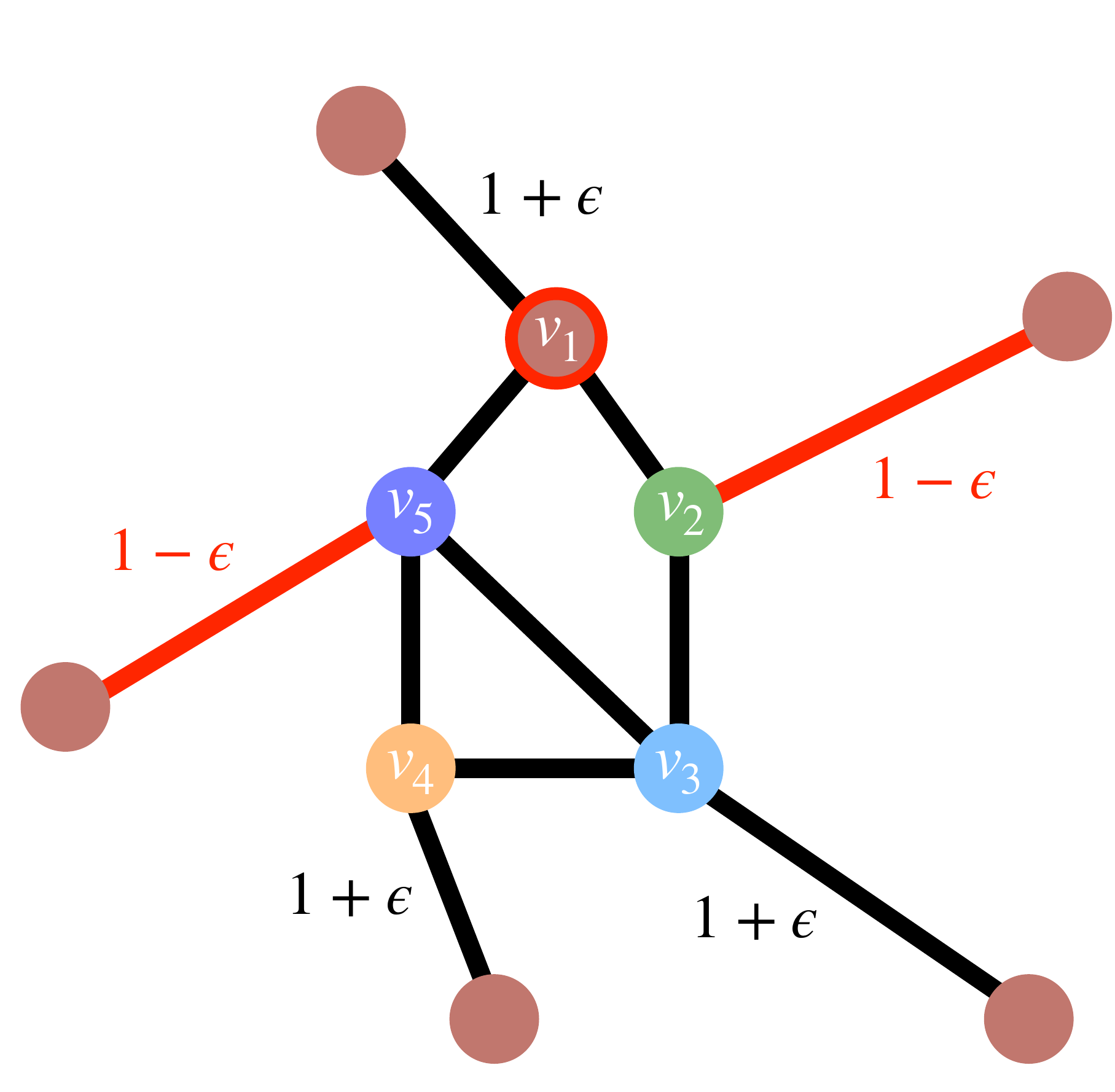}
        \caption{Round $1$.}\label{sfig:CBMMRed2}
    \end{subfigure} \hfill
       \begin{subfigure}[b]{0.25\textwidth}
        \centering
        \includegraphics[width=\textwidth,trim=0mm 0mm 370mm 80mm, clip]{./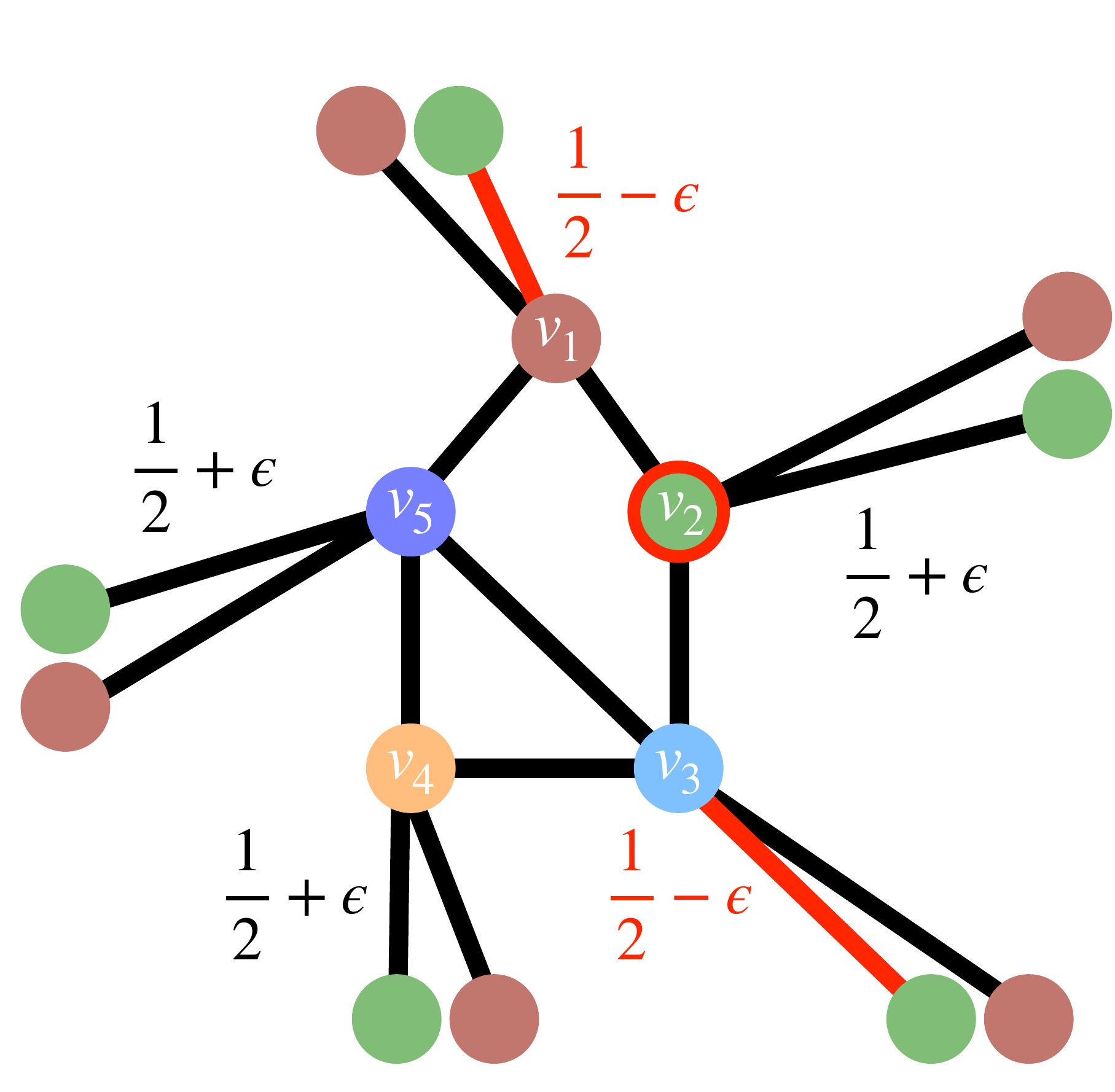}
        \caption{Round $2$.}\label{sfig:CBMMRed3}
    \end{subfigure} \hfill
    \begin{subfigure}[b]{0.25\textwidth}
        \centering
        \includegraphics[width=\textwidth,trim=0mm 0mm 370mm 80mm, clip]{./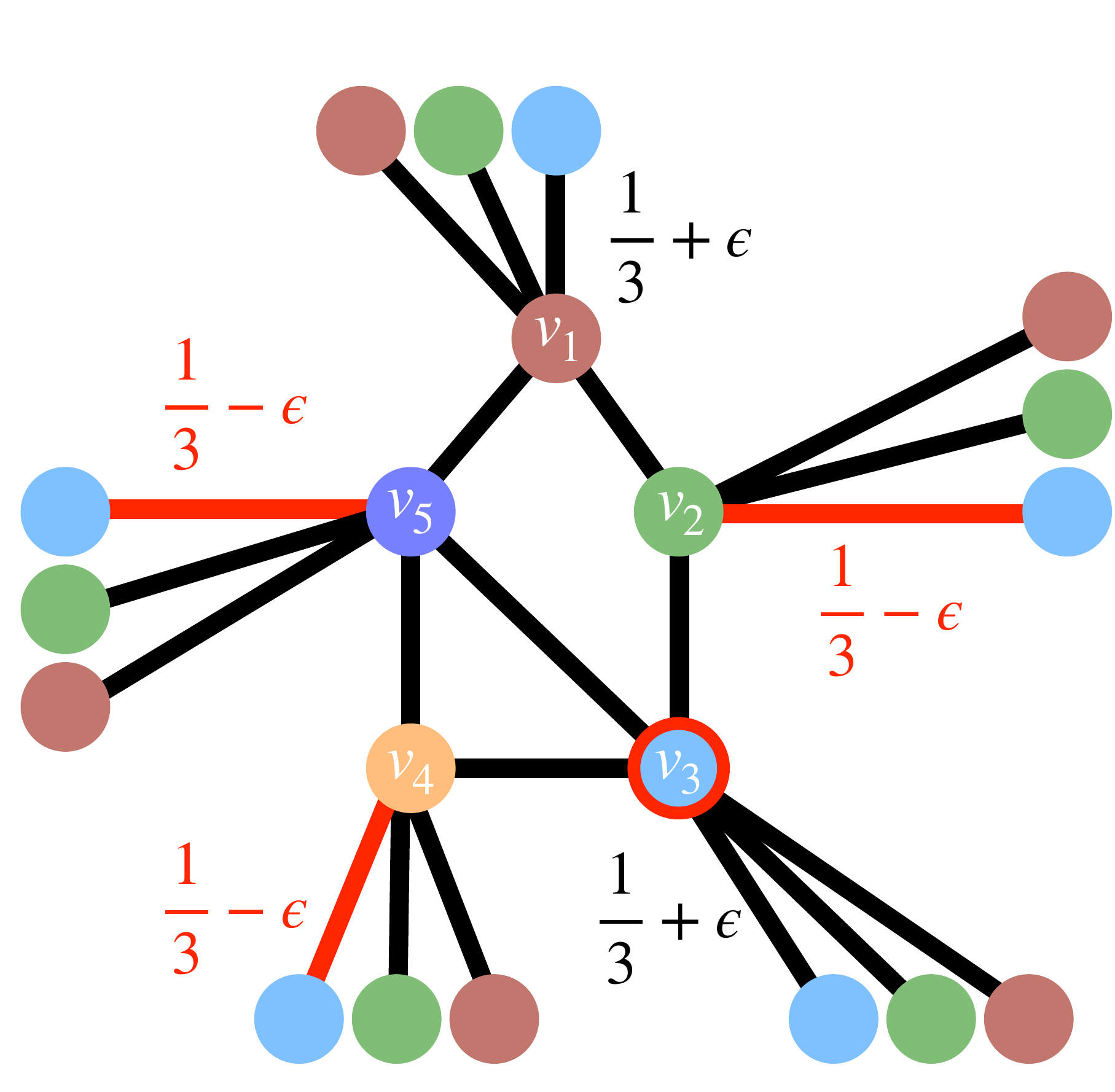}
        \caption{Round $3$.}\label{sfig:CBMMRed4}
    \end{subfigure} \hfill
    \begin{subfigure}[b]{0.25\textwidth}
        \centering
        \includegraphics[width=\textwidth,trim=0mm 0mm 370mm 80mm, clip]{./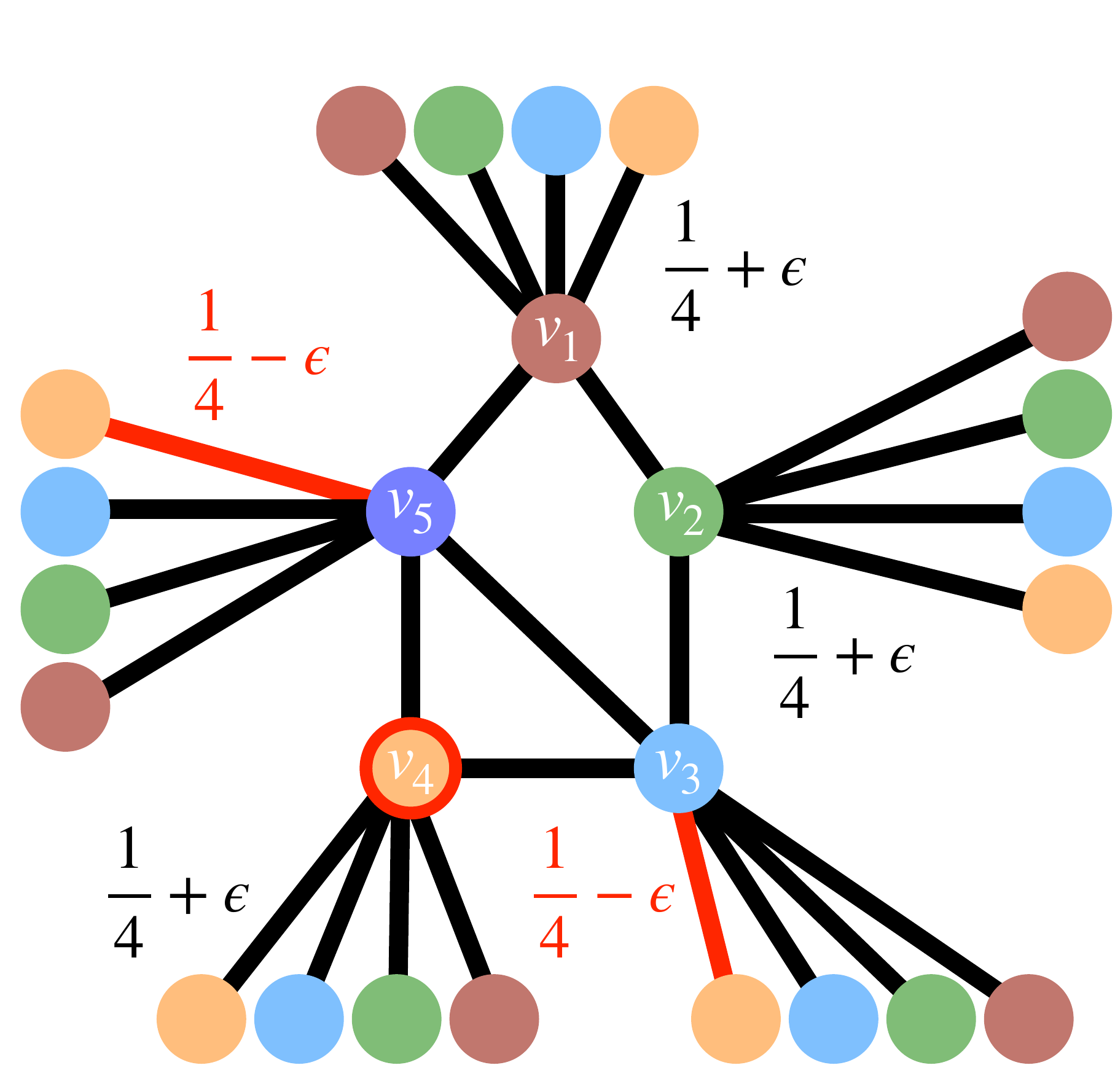}
        \caption{Round $4$.}\label{sfig:CBMMRed5}
    \end{subfigure} \hfill
    \begin{subfigure}[b]{0.25\textwidth}
        \centering
        \includegraphics[width=\textwidth,trim=0mm 0mm 370mm 80mm, clip]{./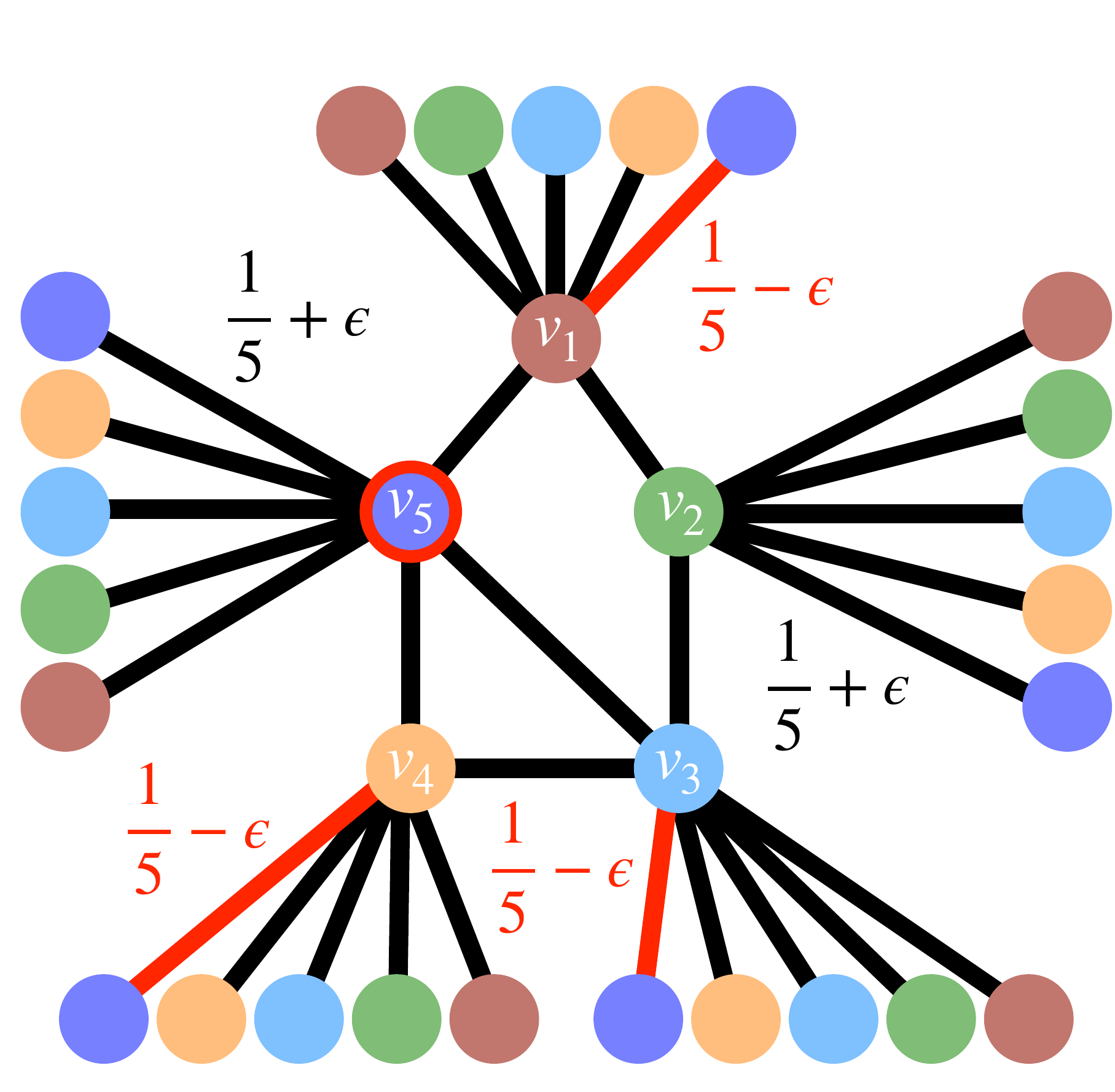}
        \caption{Round $5$.}\label{sfig:CBMMRed6}
    \end{subfigure} \hfill
    \caption{Our triangle detection reduction where we compute $G'$ from $G$ by adding $t=5$ nodes over $t$ rounds. \ref{sfig:CBMMRed6} gives $G'$. Each node labeled according to its round and corresponding vertex in $G$. Edges labelled with their weight in the round they are added (edges of $G$ have weight $1$). For the $i$th round we highlight in red $v_i$ and the edges added with weight $1/i-\epsilon$.} \label{fig:HACSeqRed}
\end{figure}

\paragraph{Running HAC on $G'$}
Having defined $G'$, let us consider the merges that the exact HAC algorithm will make on this instance.
In this section, for succinctness we use {\em weight} to refer to the normalized (i.e., average linkage) weight.
HAC will begin by merging the maximum weight edges.
The maximum weight initially depends on the structure of $N_{G}(v_1)$.
First, all core vertices that are not in $N_{G}(v_1)$ will merge with their round 1 leaves (since these edges have weight $1+\epsilon$) increasing their cluster size to $2$.
This leaves all core vertices in $N_{G}(v_1)$. Any edge in $G'$ between two such core vertices will have weight $1$, and will be merged next.
Crucially, any merge in this round with a weight of $1$ indicates a triangle incident on $v_1$ since the two core vertex endpoints of the edge must be contained in $N_{G}(v_1)$, hence connected by 
edges to $v_1$.
If no edges of weight $1$ merge, the remaining leaves that we added to core vertices in $N_{G}(v_1)$ merge into their neighboring core vertex.

Assuming we did not merge any weight $1$ edges in the first round, at this point the cluster size of each core vertex is $2$, and so the weight of any edge originally in $G$ (between two core vertices) will be $1/4$.
The edge weights to leaves in round $2$ will be $((1/2) \pm \epsilon)/2 = 1/4 \pm (\epsilon / 2)$, which is larger than $1/4$ for edges of Type~\ref{typetwo}.
Therefore, the same argument for how edges merge in round $1$ can be inductively applied to the next round.
The edge weights in round $i$ for edges between core vertices will be $1/i^2$, and by the same argument as before, the Type~\ref{typetwo} (Type~\ref{typeone}) edges will be larger (smaller) by $\epsilon/i$.
As a result of how an exact average linkage HAC will merge the edges of $G'$, we obtain the following lemma:

\begin{lemma}
Consider the sequence of merges performed by the HAC algorithm on $G'$. If the merge sequence consists of $t^2$ merges, which first merge all leaf vertices, and only then makes merges between core vertices, then $G$ does not contain any triangles.
If the merge sequence merges any edge between two core vertices in the first $t^2$ merges, then $G$ contains a triangle.
\end{lemma}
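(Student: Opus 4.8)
The plan is to formalize the inductive argument sketched in the paragraphs immediately preceding the lemma. The key invariant I would maintain is: \emph{if $G$ has no triangle, then after the first $i$ rounds of "leaf-processing" have completed, every core vertex is a cluster of size $i+1$ consisting of itself together with its first $i$ leaves, no core--core edge has yet been contracted, and the current maximum normalized weight in the graph is exactly $1/(i+1)^2$ (attained only by core--core edges)}. The base case $i=0$ is immediate since all vertices are singletons and every core--core edge has weight $1$, every round-$j$ leaf edge has weight $1/j \pm \epsilon \le 1$ (for $\epsilon$ small and $j \ge 1$), so the maximum is $1$; actually I need to be slightly careful here and begin the induction after observing the round-$1$ behaviour, so I would phrase the invariant to start the count appropriately. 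The crucial point for the inductive step is the ordering of weights within round $i+1$: after $i$ rounds each core vertex has size $i+1$, so a core--core edge has normalized weight $1/(i+1)^2$; a round-$(i+1)$ leaf is a singleton, so a Type~\ref{typetwo} edge to it has normalized weight $((1/(i+1))+\epsilon)/(i+1) = 1/(i+1)^2 + \epsilon/(i+1)$, which strictly exceeds $1/(i+1)^2$, while a Type~\ref{typeone} edge has weight $1/(i+1)^2 - \epsilon/(i+1)$, strictly below it. Hence HAC first contracts all Type~\ref{typetwo} round-$(i+1)$ leaves into their core vertices (bringing those cores to size $i+2$); then the only remaining edges of weight $\ge 1/(i+1)^2$ are the core--core edges between vertices of $N_G(v_{i+1})$ (still at size $i+1$); then the Type~\ref{typeone} leaves (now of normalized weight $((1/(i+1))-\epsilon)/(i+1)$ since their core endpoint still has size $i+1$) get contracted, bringing those cores to size $i+2$ as well. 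I must also check that all these round-$(i+1)$ weights dominate every leaf edge of later rounds $j > i+1$ and every core--core edge "post-contraction" value $1/(i+2)^2$, which follows from monotonicity of $1/j$ and choosing $\epsilon$ small (say $\epsilon < 1/(4t^2)$); this bookkeeping — verifying that nothing from a future round or an already-shrunk edge ever jumps ahead — is the main obstacle and the part requiring the most care.

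Given the invariant, the first direction of the lemma follows: if $G$ is triangle-free, then no core--core edge ever has weight $\ge 1/(i+1)^2$ at the moment round $i+1$ is being processed except between two vertices of $N_G(v_{i+1})$, and triangle-freeness means no edge of $G$ joins two vertices of $N_G(v_{i+1})$, so in fact no such core--core edge exists and the algorithm proceeds purely by contracting leaves, round by round, until all $t^2$ leaves are consumed; only afterwards (all cores at size $t+1$) do core--core merges occur. Thus the first $t^2$ merges are exactly the leaf contractions. For the contrapositive of the first direction — equivalently the second sentence of the lemma — I would argue: if $G$ \emph{does} contain a triangle, let $v_a v_b v_c$ be one with $a = \min\{a,b,c\}$; then during round $a$ the cores $v_b, v_c$ both lie in $N_G(v_a)$, so by the invariant (which holds up through the start of round $a$ regardless, since triangle-freeness was only needed to rule out the core--core merge \emph{happening}, not to establish the sizes and weights up to that point) the edge $\{v_b, v_c\}$ of $G$ has normalized weight $1/(a+1)^2$ at the moment the Type~\ref{typetwo} round-$a$ leaves have been absorbed, and this is the unique maximum, so HAC contracts a core--core edge — which is among the first $t^2$ merges since at most $(a-1)t + (\text{some leaves})< t^2$ leaf contractions have happened before it. So a core--core merge appears within the first $t^2$ merges, as claimed.

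One subtlety I want to flag and handle explicitly: when two core vertices $v_b, v_c \in N_G(v_a)$ are actually adjacent in $G$ and get merged during round $a$, the resulting merged cluster still has pending round-$a$ leaves of its own, and the weights get renormalized by the new (size $2(a+1)$) cluster; I do not need to track what happens \emph{after} the first core--core merge for the lemma, so I would simply stop the analysis at the first such merge. I would also double-check the degenerate boundary cases: round $i=1$ uses weight $1/1 \pm \epsilon = 1 \pm \epsilon$, and a core--core edge of weight $1$ indicates a triangle incident to $v_1$ precisely when two neighbors of $v_1$ are adjacent, consistent with the general argument; and isolated core vertices or core vertices whose entire $G$-neighborhood has already merged away are handled automatically since the invariant only asserts the existence of a max-weight core--core edge, never its uniqueness of location. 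With these checks the lemma follows, and $O(t^2)$ time to read off the answer is immediate since we just scan the merge sequence for the first merge whose endpoints are both core vertices.
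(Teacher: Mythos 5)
Your proposal is correct and follows essentially the same round-by-round inductive argument the paper sketches in the ``Running HAC on $G'$'' paragraph (tracking cluster sizes, the ordering Type-2 $>$ core--core $>$ Type-1 within each round, and the domination of later rounds for small $\epsilon$), just written out with the invariant made explicit. The only blemish is a harmless off-by-one in the triangle direction: with your indexing, during round $a$ the cores in $N_G(v_a)$ still have size $a$, so the witnessing core--core edge has normalized weight $1/a^2$ rather than $1/(a+1)^2$; this does not affect the argument since only the relative ordering of weights matters.
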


\paragraph{Completing the Reduction}
We will now complete the proof of Theorem~\ref{thm:lbreduction}.
Suppose we are given an instance of \textsc{Triangle Detection} on $n$ vertices.
Conjecture~\ref{conj:bmm} implies that this instance cannot be solved by combinatorial algorithms in $O(n^{3-\epsilon})$ time for any $\epsilon > 0$.

Let the time complexity of HAC on a graph $G$ with $n$ vertices and $m$ edges be $T_{\mathsf{HAC}}(n,m)$.
Suppose HAC can be solved combinatorially in $O(n^{3/2 - \epsilon})$ time.
Given a \textsc{Triangle Detection} instance on $n$ vertices we create a graph $G'$ with $O(n^2)$ vertices and $O(n^2)$ edges, and run HAC on $G'$.
The running time of the reduction is $O(n^2)$, and the running time of HAC on $G'$ is $O((n^{2\cdot (3/2-\epsilon)}) = O(n^{3 - 2\epsilon})$, which will falsify Conjecture~\ref{conj:bmm} by \Cref{thm:bmmeq}.
Thus, conditional on Conjecture~\ref{conj:bmm}, there is no algorithm for HAC running in time $T_{\mathsf{HAC}}(n, m) = O(n^{3/2 - \epsilon})$ for any constant $\epsilon > 0$, completing the proof of Theorems~\ref{thm:lbreduction} and \ref{thm:worklb}. 
The same argument, under the assumption that triangle detection cannot be solved in $O(n^{\omega - \epsilon})$ time for any constant $\epsilon > 0$ implies Theorem~\ref{cor:workmatmul}.

\section{Average Linkage HAC is Hard to Parallelize Even on Trees}
In this section we prove that average linkage HAC is likely hard to parallelize by showing it is \CC-hard even on low depth trees.
%
%
We begin with some preliminaries. The formal definition of \CC-hardness we will use is as follows.
\begin{definition}[\CC-Hard]\label{dfn:CCComplete}
    A problem is \CC-hard if all problems of \CC are logspace-reducible to it.
\end{definition}

For our purposes we will not need to define the class \CC. Rather, we only need the above definition of \CC-hardness and a single \CC-hard problem, \lfm. Recall that a matching of a graph $G = (V, E)$ is a subset of edges $M \subseteq E$ if each vertex is incident to at most one edge of $M$. A matching is said to be maximal if each $e = \{u, v\} \not \in M$ satisfies the property that either $u$ or $v$ is incident to an edge of $M$. The greedy algorithm for maximal matching initializes $M$ as $\emptyset$ and then simply iterates over the edges of $E$ in some order and adds the current edge $e$ to $M$ if the result of doing so is a matching.

\begin{problem}[\lfm]\label{dfn:LFMM} An instance of lexicographically first maximal matching (LFM Matching) consists of a bipartite graph $G=(V = L \sqcup R, E)$ with vertices ordered as $L = (l_0, l_1, \ldots, l_{n-1})$ and $R = (r_0, r_1, \ldots, r_{n-1})$. The lexicographically first maximal matching is the matching obtained by running the greedy algorithm for maximal matching on edges ordered first by their endpoint in $L$ and then by their endpoint in $R$. That is, in this ordering $e = \{l_i, r_j\}$ precedes $e' = \{l_{i'}, r_{j'}\}$ iff (1) $i < i'$ or (2) $i = i'$ and $j < j'$. Our goal is to decide if a designated input edge is in the LFM Matching.
\end{problem}
The following summarizes known hardness of \lfm.
\begin{theorem}[\cite{subramanian1989new,mayr1992complexity}]\label{thm:LFMCCComplete}
    \lfm is \CC-hard.
\end{theorem}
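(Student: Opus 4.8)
The plan is to establish \CC-hardness by giving a logspace many-one reduction from the canonical \CC-complete problem, the Comparator Circuit Value Problem (CCVP), to \lfm; since \CC is by definition the class of languages logspace-reducible to CCVP, this suffices. Recall that a comparator circuit is specified by $w$ wires carrying bits, a sequence of comparator gates $g_1,\dots,g_s$ where $g_k$ acts on an ordered pair of wires $(a_k,b_k)$ and replaces their current values $(x,y)$ by $(x\wedge y,\, x\vee y)$, together with input bits on the wires and a designated output wire, and CCVP asks for the bit carried by that wire after all gates have fired. This is essentially the route taken by \cite{subramanian1989new,mayr1992complexity}.

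First I would build the bipartite graph $G=(V=L\sqcup R,E)$ gadget by gadget, following the ``time-expanded'' picture of the circuit: introduce one constant-size \emph{segment} of vertices for each wire $\ell$ in each of the $s+1$ time layers (the configurations between consecutive gates), so that the matched/unmatched status of a distinguished vertex in segment $(\ell,k)$ encodes the value of wire $\ell$ immediately after gate $g_k$. The input bits $x_\ell$ are encoded in layer $0$ by attaching pendant ``source'' edges whose availability reflects $x_\ell$. For each gate $g_k$ on wires $(a_k,b_k)$ I insert a comparator gadget linking the layer-$(k-1)$ segments of $a_k$ and $b_k$ to their layer-$k$ segments, and every other wire simply passes through via a trivial propagation gadget. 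Crucially, I must choose the global orderings $L=(l_0,l_1,\dots)$ and $R=(r_0,r_1,\dots)$ so that (i) every edge of the layer-$k$ gadgets is scanned by the greedy process strictly before any edge of the layer-$(k+1)$ gadgets, which is arranged by ordering vertices primarily by layer index, and (ii) within a gate gadget the scan order forces the first output segment to become ``matched'' exactly when both inputs were matched and the second to become ``matched'' exactly when at least one input was. The designated \lfm input edge is then the natural read-out edge in the final-layer segment of the target wire, chosen so that ``edge $\in$ matching'' $\iff$ output wire carries $1$.

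Correctness is an induction over $k=1,\dots,s$: assuming the greedy lexicographically-first matching restricted to gadgets $1,\dots,k-1$ leaves the layer-$(k-1)$ segments in states encoding the circuit values $x^{(k-1)}_\ell$, I trace precisely which edges of the layer-$k$ gadgets the greedy process accepts and verify that the resulting layer-$k$ states encode $x^{(k)}_\ell$; because the ordering guarantees that gadget $k$ is processed as an atomic block with no later edge able to interfere, this reduces to a finite case analysis over the at most four input patterns of a single comparator gate (and the trivial pass-through case). Finally I would check that the reduction is logspace: the vertex set, edge set, and both orderings are described by simple arithmetic in the indices $(\ell,k)$ and the gate list, so a logspace transducer can output them.

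The main obstacle is item (ii) together with the gadget design: a comparator gate is genuinely two-output and its outputs depend jointly on both inputs, so the greedy matching must be coaxed into ``collecting information from both input segments before committing its two outputs'' using only the blunt instrument of lex-order edge scanning, and it must do so without the two wires' gadgets contaminating each other or the neighbouring pass-through gadgets. Producing a constant-size gadget that is provably correct on all input combinations and simultaneously compatible with a single global vertex ordering that cleanly sequences all $s$ gates is the delicate part; my plan is to reconstruct such a gadget (as in \cite{subramanian1989new,mayr1992complexity}) and then discharge correctness via the structural induction above.
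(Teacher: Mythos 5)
The paper does not prove this statement at all: it is imported as a black-box citation to \cite{subramanian1989new,mayr1992complexity}, so there is no in-paper proof to compare against. Your overall strategy---a logspace many-one reduction from the Comparator Circuit Value Problem to \lfm, with a time-layered graph whose lexicographic scan order simulates the gates in sequence---is indeed the route taken in those references, and your framing of what must be verified (layer-by-layer induction, logspace computability of the orderings) is the right one.

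However, as a proof the proposal has a genuine gap, and you identify it yourself: the comparator gadget is never constructed. That gadget is not a routine detail---it is essentially the entire technical content of the theorem. The difficulty is real: the obvious one-vertex gadget (a left vertex scanned against the two input carriers in order) does not compute $(x\wedge y,\,x\vee y)$; a short case analysis shows it destroys one wire and computes a single Boolean function of the inputs, so one must either build a multi-edge gadget with auxiliary vertices that restores both outputs, or first normalize the circuit into a form whose gates the greedy matching can simulate directly. Until that gadget is written down---together with the explicit interleaving of its edges into the global orderings of $L$ and $R$ so that property (ii) actually holds and neighbouring gadgets cannot steal each other's vertices---the induction in your second paragraph has no base objects to reason about, and the claim ``this reduces to a finite case analysis'' cannot be discharged. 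In short: correct architecture, standard approach, but the load-bearing construction is deferred rather than supplied, so the argument is a plan rather than a proof.
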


Next, we introduce the search variant of the HAC problem whose \CC-hardness we will prove. 

\begin{problem}[\hac]\label{prob:hac}
An instance of \hac consists an undirected graph $G = (V,E)$, along with edge weights $w:E \to \mathbb{R}_{\geq 0}$. Consider the sequence $(C_1,C_1'),(C_2,C_2'),\dots$ of cluster merges produced by the procedure \FHac{$G$} from Algorithm \ref{alg:staticgraph}. Given any pair of vertices $u,v \in G$, the goal of the \hac problem is to output the index $i$ such that $u,v$ first merge together at step $i$, namely $u \in C_i$ and $v \in C_i'$ (or $u \in C_i'$ and $v \in C_i$). 
\end{problem}

To prove the hardness of \hac, we will first prove the hardness of an intermediate problem, called \am. The construction of the \am problem will be more amenable to our reductions, and therefore simplify the following exposition. See \Cref{fig:aMin} for an illustration of \am.

\begin{problem}[\am]\label{l:am}
An instance of \am consists of a ($0$-based indexed) $n \times n$ matrix $A$ where each row contains a permutation of $\{0, \ldots, n-1\}$ and some index $x \in [0, n)$.
The goal is to simulate the following algorithm.
Start with $I = \{0, \ldots, n-1\}$ and execute the following steps for $i = 0, \ldots, x$:
\begin{enumerate}
    \item Let $k_i = \argmin_{j \in I} A[i, j]$.
    \item Set $I := I \setminus \{k_i\}$. 
\end{enumerate}
Our goal is to compute $k_x$.
\end{problem}

\begin{figure}[h]
    \centering
    \begin{subfigure}[b]{0.15\textwidth}
        \centering
        \includegraphics[width=\textwidth,trim=0mm 0mm 250mm 0mm, clip]{./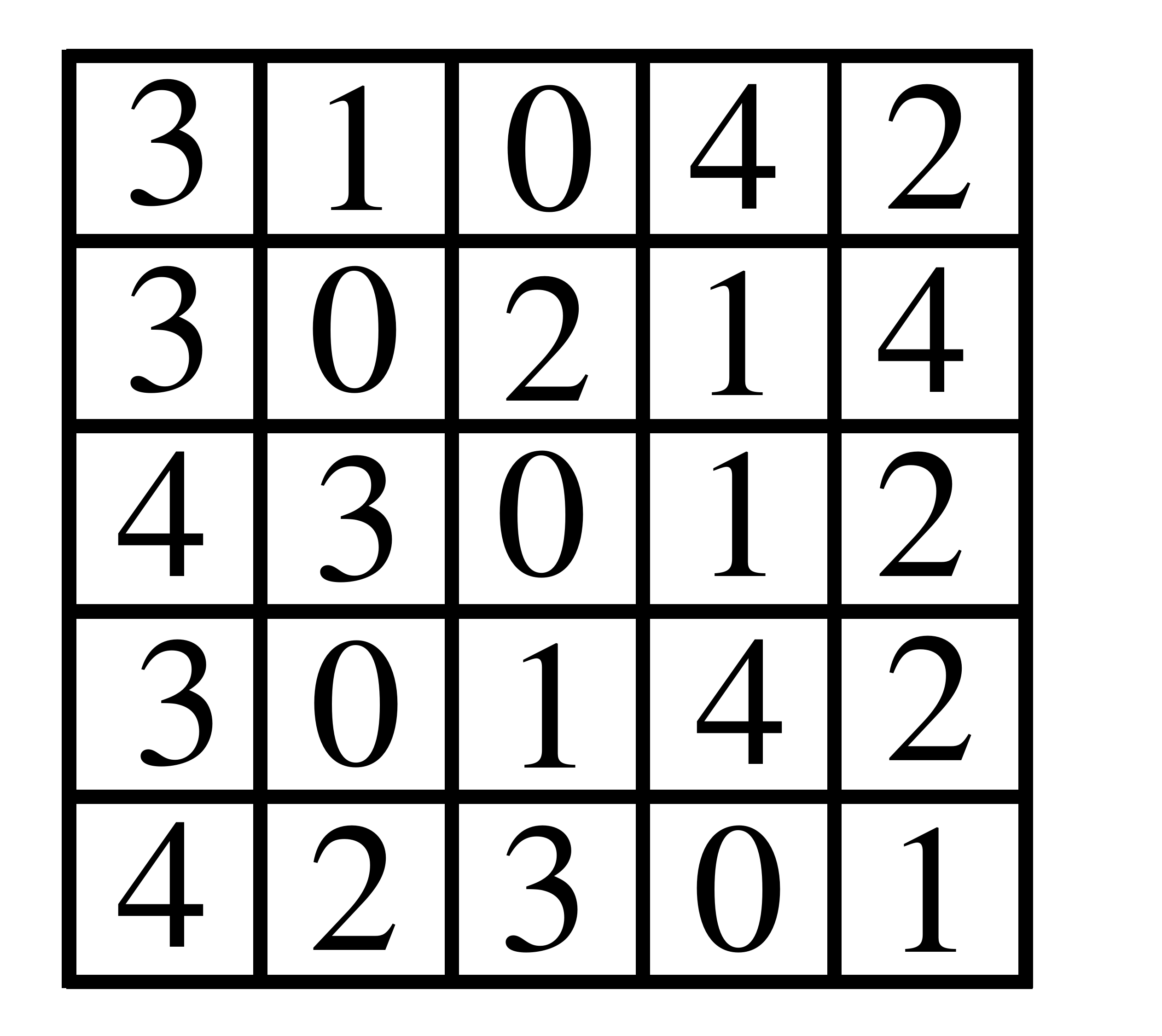}
        \caption{Input $A$.}\label{sfig:aMin1}
    \end{subfigure} \hfill
       \begin{subfigure}[b]{0.15\textwidth}
        \centering
        \includegraphics[width=\textwidth,trim=0mm 0mm 250mm 0mm, clip]{./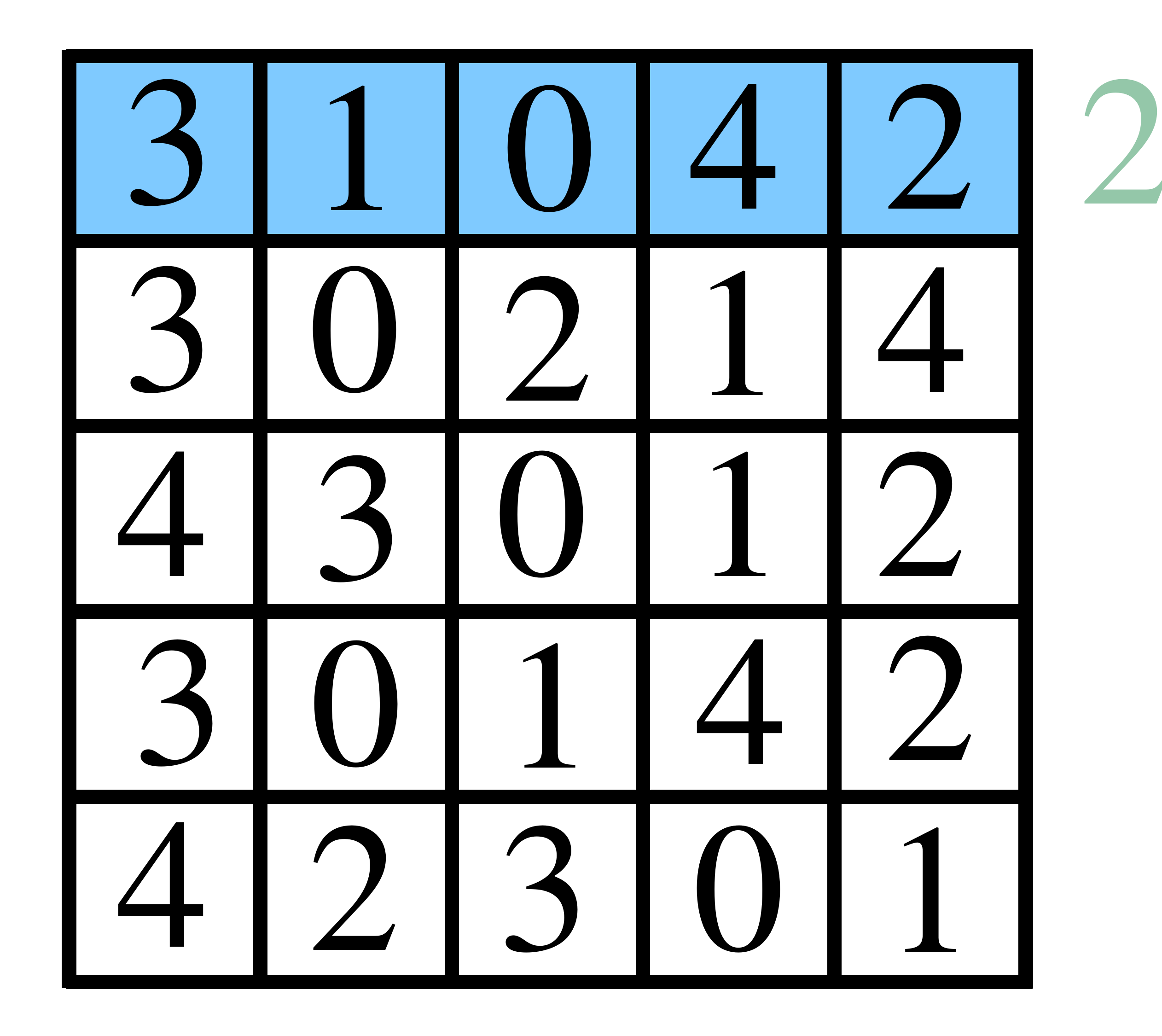}
        \caption{Step $1$.}\label{sfig:aMin2}
    \end{subfigure} \hfill
       \begin{subfigure}[b]{0.15\textwidth}
        \centering
        \includegraphics[width=\textwidth,trim=0mm 0mm 250mm 0mm, clip]{./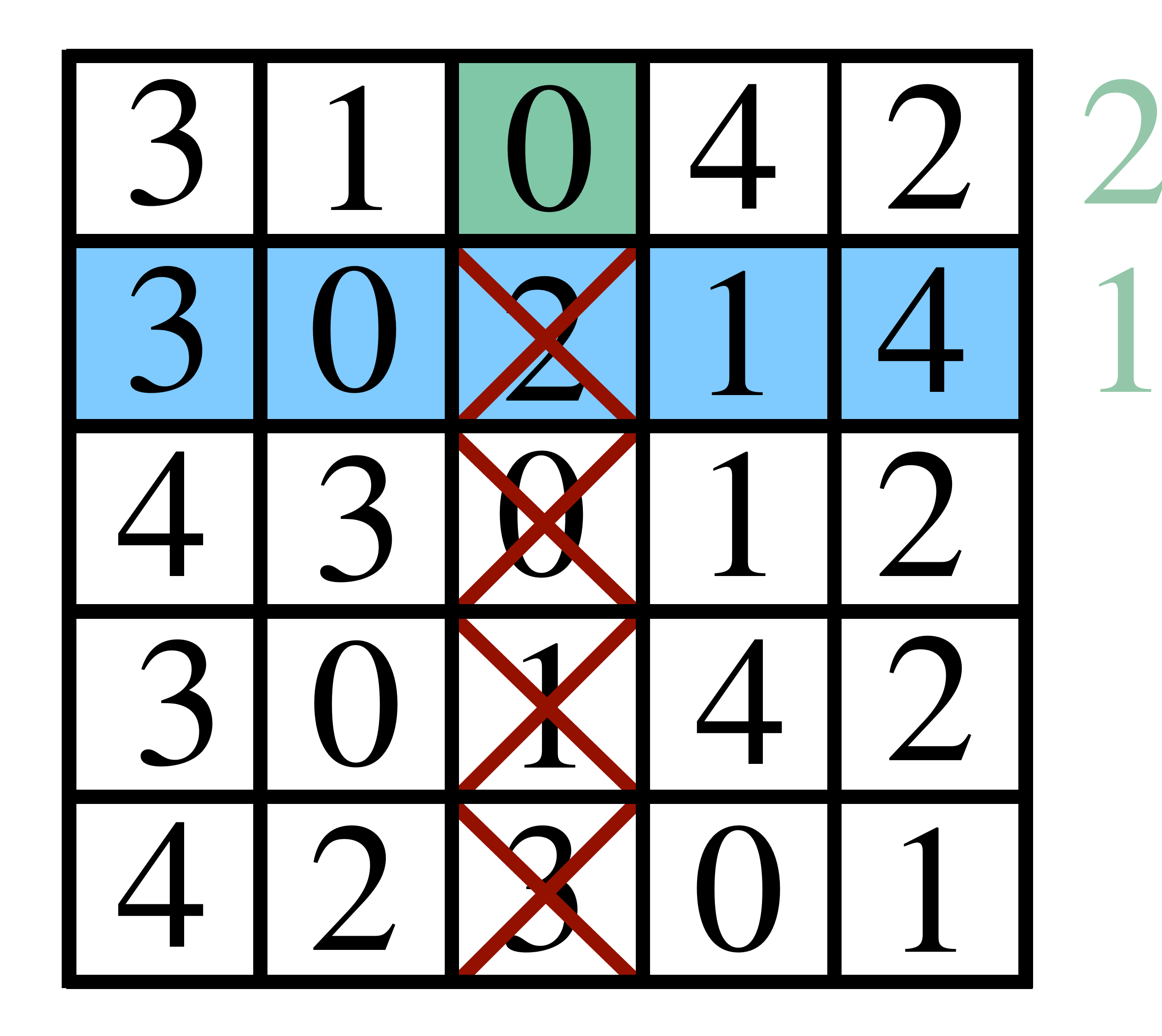}
        \caption{Step $2$.}\label{sfig:aMin3}
    \end{subfigure} \hfill
    \begin{subfigure}[b]{0.15\textwidth}
        \centering
        \includegraphics[width=\textwidth,trim=0mm 0mm 250mm 0mm, clip]{./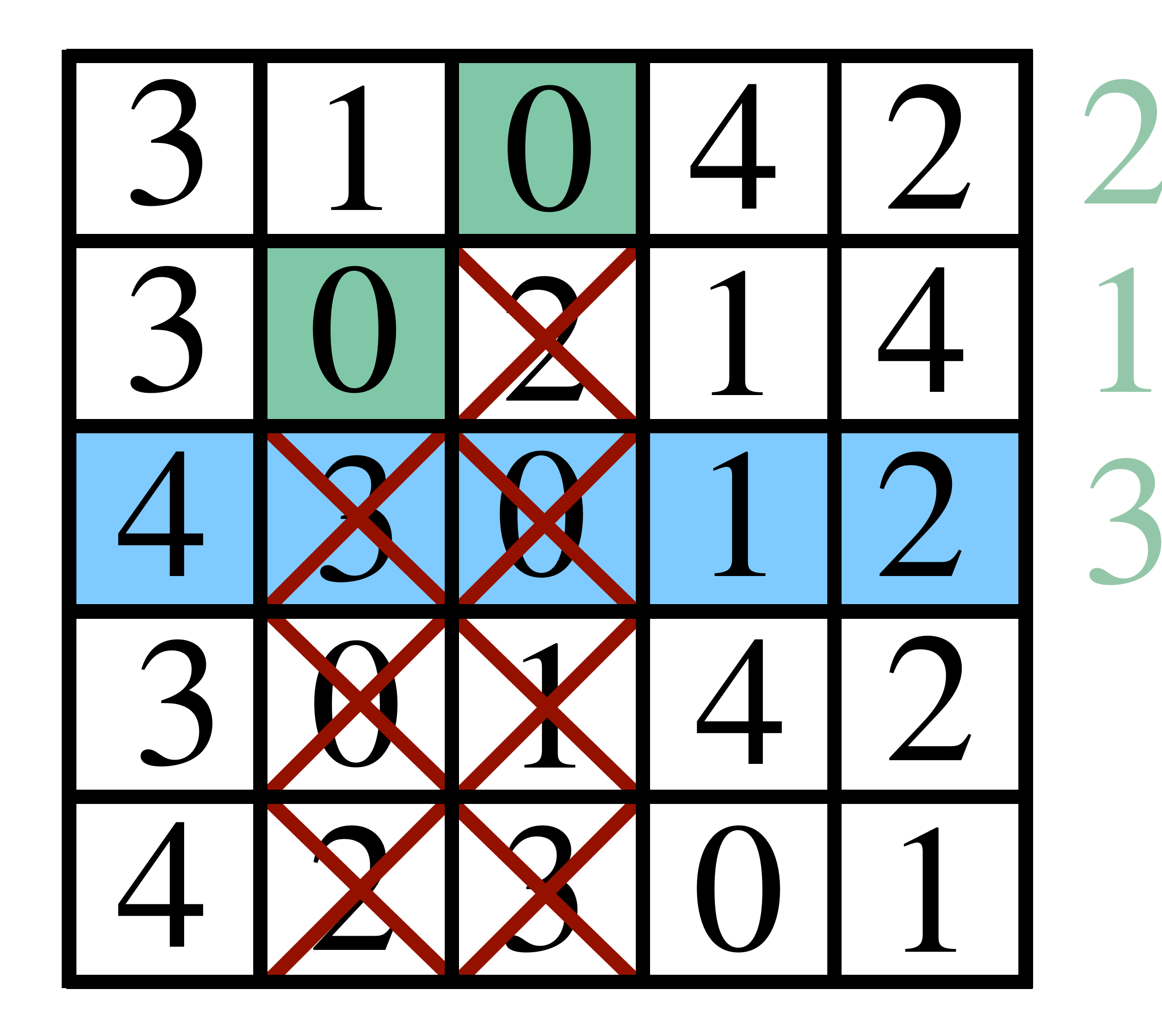}
        \caption{Step $3$.}\label{sfig:aMin4}
    \end{subfigure} \hfill
    \begin{subfigure}[b]{0.15\textwidth}
        \centering
        \includegraphics[width=\textwidth,trim=0mm 0mm 250mm 0mm, clip]{./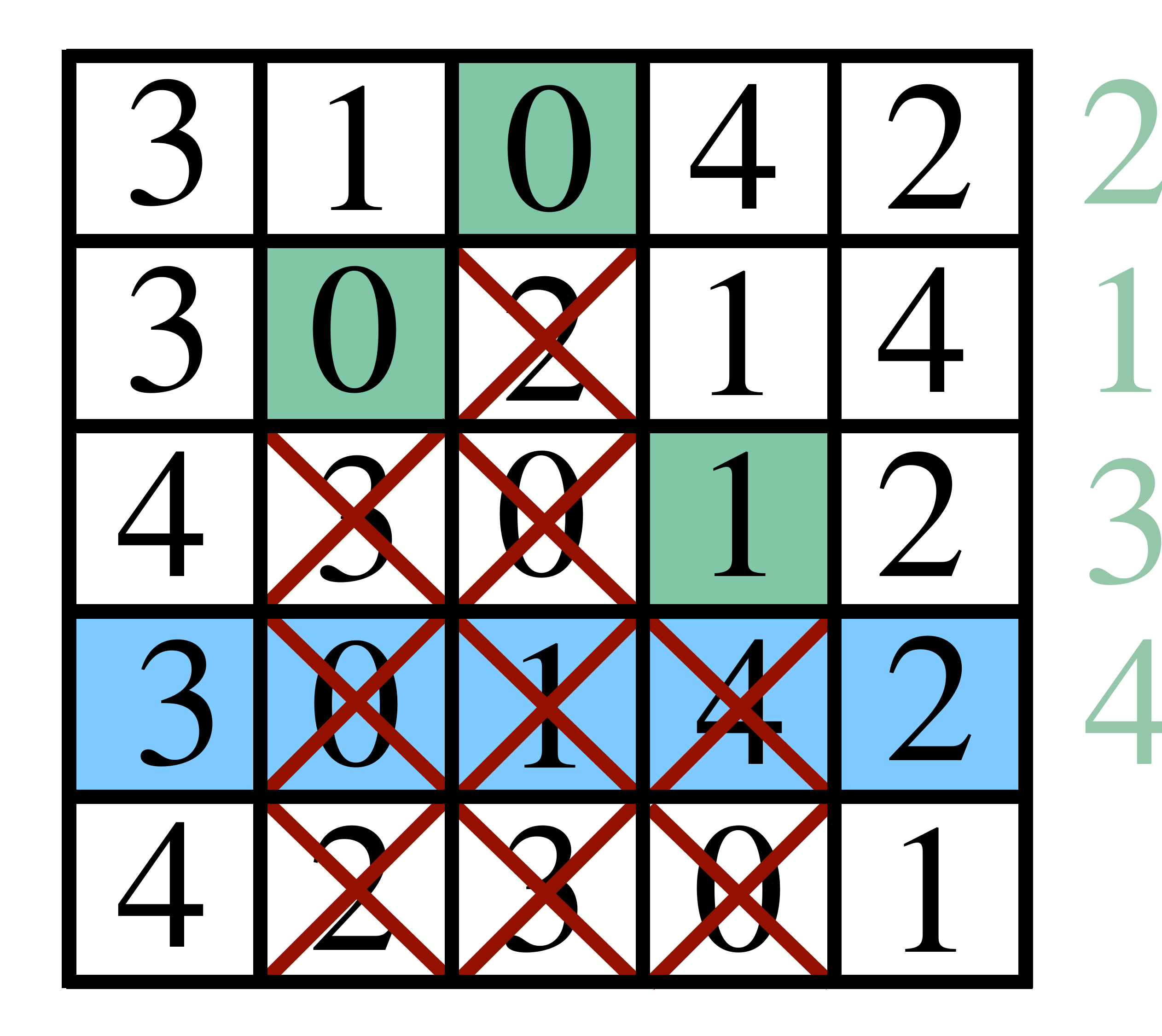}
        \caption{Step $4$.}\label{sfig:aMin5}
    \end{subfigure} \hfill
    \begin{subfigure}[b]{0.15\textwidth}
        \centering
        \includegraphics[width=\textwidth,trim=0mm 0mm 250mm 0mm, clip]{./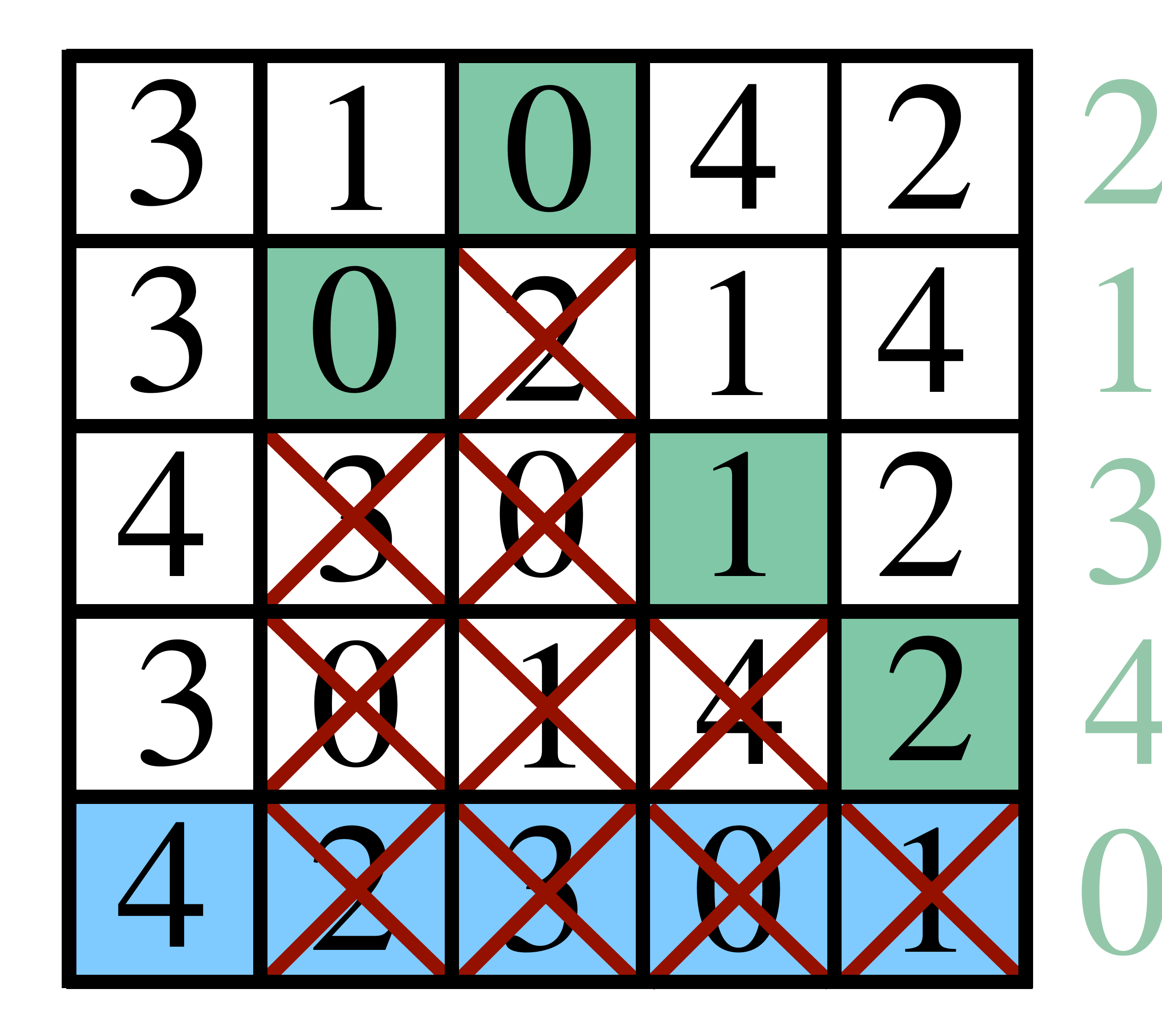}
        \caption{Step $5$.}\label{sfig:aMin6}
    \end{subfigure} \hfill
    \caption{\am on matrix $A$. The row considered in each step is shown in blue. 
    $k_i$ for the $i$-th row written to the right of $A$ in green with witnessing entry of $A$ also in green. Indices removed from $I$ in relevant rows crossed out in red.} \label{fig:aMin}
\end{figure}

Observe that both problems \ref{prob:hac} and \ref{l:am} are defined as having an algorithm output an index $i \in \{1,2,\dots\}$. Thus, these can be considered search problems instead of decision problems. We choose to work with the search versions of these problems for simplicity of our reductions, however, our reduction naturally extends to the decision variants  (e.g., where the algorithm is given $u,v \in V$ and an index $i$ and asked if $u,v$ merge on step $i$). 

We first prove the \CC-hardness of this intermediate problem. See \Cref{fig:aMinRed} for an illustration of our reduction from \lfm to \am.
\begin{lemma}
\am is \CC-hard.
\end{lemma}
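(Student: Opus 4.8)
The plan is to reduce \lfm to \am, which is \CC-hard by \Cref{thm:LFMCCComplete}. Given an \lfm instance on the bipartite graph $G = (L \sqcup R, E)$ with $|L| = |R| = n$ and a designated edge $\{l_a, r_b\}$, I will build an \am instance whose matrix $A$ has one row per left vertex $l_i$ (so $A$ is $n \times n$, with possibly a small amount of padding). The key idea is that running the greedy maximal-matching algorithm with edges ordered first by their $L$-endpoint is exactly a left-vertex-by-left-vertex process: when we reach $l_i$, we scan its neighbors in $R$ in increasing index order and match $l_i$ to the first still-unmatched $r_j$; that $r_j$ is then removed from the pool of available right vertices. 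This is precisely the \am dynamics if row $i$ of $A$ encodes the priority order in which $l_i$ "wants" right vertices: $A[i,j]$ should be small exactly when $r_j$ is a neighbor of $l_i$ with small index, and large (but still forming a permutation of $\{0,\dots,n-1\}$) when $r_j$ is not a neighbor of $l_i$.

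\textbf{Key steps.} First I would define $A[i,j]$ concretely: for each row $i$, list the neighbors of $l_i$ in $R$ in increasing index order, assigning them the values $0, 1, 2, \dots$ in that order; then list the non-neighbors of $l_i$ in increasing index order, assigning them the remaining values, continuing upward. This makes each row a permutation of $\{0,\dots,n-1\}$ as required. Second, I would argue by induction on $i$ that after processing rows $0,\dots,i-1$ in the \am algorithm, the set $I$ of surviving column indices equals exactly the set of right vertices not yet matched by the greedy algorithm after processing $l_0,\dots,l_{i-1}$, \emph{provided} every $l_{i'}$ with $i' < i$ had at least one unmatched neighbor at its turn. The inductive step: $k_i = \argmin_{j \in I} A[i,j]$ selects, among surviving columns, the neighbor of $l_i$ of smallest index if any neighbor survives (since all neighbor-entries are smaller than all non-neighbor-entries in row $i$), which is exactly the greedy choice. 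Third, I need to handle the case where $l_i$ has \emph{no} surviving neighbor — then greedy leaves $l_i$ unmatched, but \am would still remove some $k_i$ (a non-neighbor). To keep the correspondence exact, I would pad: add $n$ extra "dummy" right vertices (columns $n, \dots, 2n-1$) that are non-neighbors of everyone in the original graph but are ordered \emph{between} the real neighbors and the real non-neighbors in every row — equivalently, enlarge to a $2n \times 2n$ matrix where in row $i$ the values are assigned first to real neighbors of $l_i$, then to the $n$ dummy columns, then to real non-neighbors. Then whenever $l_i$ has no surviving real neighbor, \am consumes a dummy column instead of corrupting the real right-vertex pool, and the dummy pool (size $n$, one per left vertex) never runs out. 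Finally, to read off the answer: the designated edge $\{l_a, r_b\}$ is in the LFM matching iff $k_a = b$; so I set the \am query index to $x = a$ and accept iff the returned $k_a$ equals $b$.

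\textbf{Main obstacle.} The main subtlety is exactly the bookkeeping around unmatched left vertices and making the reduction logspace-computable: each entry $A[i,j]$ must be produced from $G$ using only logarithmic workspace. The neighbor/non-neighbor classification and the "rank within neighbors of $l_i$ of index $\le j$" computation are both simple counting tasks over the adjacency relation, computable in logspace, so this is routine but must be stated carefully; the dummy-column padding is what makes the invariant clean. One should also double-check the boundary: if the original \lfm instance has $l_a$ adjacent to $r_b$ — if not, trivially reject — and that $x = a < 2n$ so the \am process indeed runs through row $a$. With the padded construction the invariant "$I \cap \{0,\dots,n-1\}$ = unmatched real right vertices after processing $l_0,\dots,l_{i-1}$" holds unconditionally, the greedy choice and the \am choice agree on row $a$, and hence $k_a = b \iff \{l_a,r_b\}$ is in the lexicographically first maximal matching, completing the logspace reduction and establishing \CC-hardness of \am.
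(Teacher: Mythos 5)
Your proposal is correct and follows essentially the same route as the paper: the same $2n\times 2n$ matrix with neighbors ranked first, $n$ dummy columns in the middle to absorb rounds where a left vertex has no surviving neighbor, non-neighbors last, the same query $x=a$ with acceptance iff $k_a=b$, and the same entrywise logspace computability argument. No gaps.
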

\begin{proof}
    By \Cref{thm:LFMCCComplete} and the definition of \CC-hardness (\Cref{dfn:CCComplete}), it suffices to argue that \lfm (\Cref{dfn:LFMM}) is logspace reducible to \am (\Cref{l:am}).
    We begin by describing our reduction and then observe that it only requires logarithmic space. The basic idea of the reduction is to associate with each vertex on the left side of our instance of \lfm a row of the matrix of \am and each vertex on the right side of \lfm a column of the matrix of \am.

    More formally, consider an instance of \lfm on graph $G = (V = L \sqcup R, E)$ where our goal is to decide if a given edge $e$ is in the LFM matching. 
    We consider the following instance of \am to solve this on a $2n \times 2n$ size matrix $A$. We will refer to an index $i$ as a \emph{dummy index} if $i \geq n$. We will construct the $i$th row of $A$ to correspond to a permutation that first gives the indices of all neighbors of $l_i$ in $R$ sorted according to the ordering of $R$ then gives all dummy indices then gives the indices of non-neighbors of $l_i$ in $R$. 
    
    More formally, for a vertex $l_i \in L$, we let $R_i \subseteq R$ be the vertices of $G$ connected to $l_i$ and we let $\bar{R_i} = R \setminus R_i$ be all vertices not adjacent to $l_i$. Furthermore, for $r_j \in R_i$, we let $\text{ind}(r_j) \in [|R_i|]$ be the index of $r_j$ in $R_i$ when we sort the vertices of $R_i$ according to the ordering on $R$. Note that, in general, it need not be the case that $\text{ind}(r_j) = j$. Symmetrically, for a vertex $r_j \in \bar{R_i}$, we let $\text{ind}(r_j) \in [|\bar{R_i}|]$ be the index of $r_j$ when we sort $\bar{R_i}$ by the ordering on $R$. Then, we define $A$ as follows.
    \begin{align*}
        A[i,j] := \begin{cases} \text{ind}(r_j) &\text{if $j < n$ and $r_j \in R_i$}\\
        j - n + |R_i| & \text{if $j \geq n$}\\
        \text{ind}(r_j) + n + |R_i|  & \text{if $j < n$ and $r_j \not \in R_i$}
        \end{cases}
    \end{align*}

    Lastly, let $x$ (the index for which we would like to compute $k_x$ in our instance of \am) be the index of the endpoint of $e$ in $L$.
    Once \am computes $k_x$, we verify whether or not it corresponds to the endpoint of $e$ in $R$ to determine the final output of the \lfm instance. Again, see \Cref{fig:aMinRed}.

    We now argue correctness of the reduction. A straightforward proof by induction on $i$ demonstrates that at the beginning of the $i$th round of the \am algorithm we have that $I$ consists of at least $n-i$ dummy indices and $j < n$ is not in $I$ only if $r_j$ is in the \lfm and is matched to some $l_{i'}$ for $i' < i$. It follows that $e = (l_i, r_j)$ is in the \lfm iff $k_i = j$, showing correctness of our reduction. 

    It remains to show that the above reduction can be done with logspace. In order to do so, we must argue that $A[i,j]$ can be computed with logspace for every $i$ and $j$. Doing so is trivial given the above definition of $A[i,j]$. 
\end{proof}

\begin{figure}[h]
    \centering
    \begin{subfigure}[b]{0.24\textwidth}
        \centering
    \includegraphics[width=\textwidth,trim=0mm 50mm 350mm 0mm, clip]{./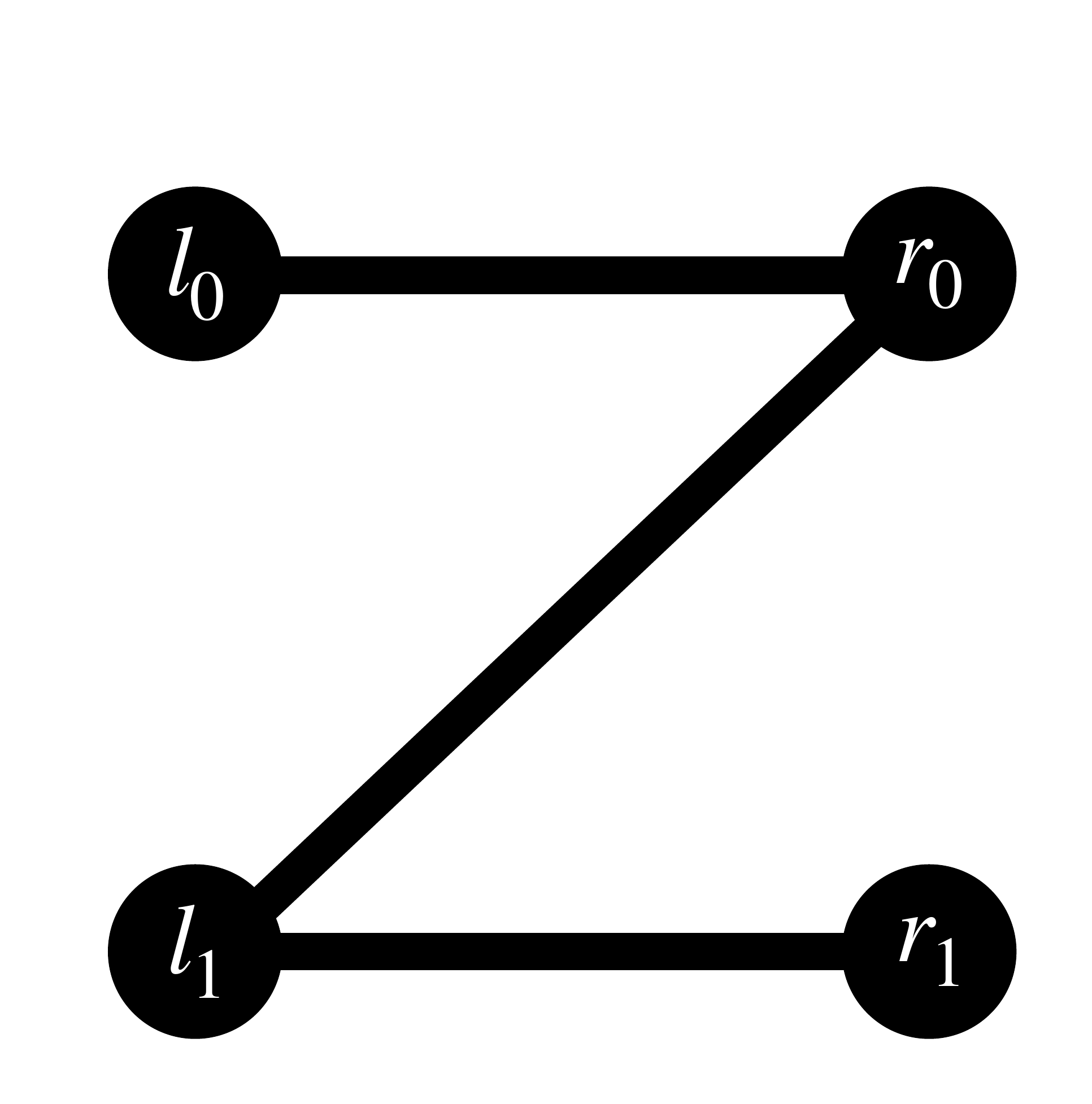}
        \caption{\lfm.}\label{sfig:aMinRed1}
    \end{subfigure} \hfill
       \begin{subfigure}[b]{0.24\textwidth}
        \centering
    \includegraphics[width=\textwidth,trim=0mm 50mm 350mm 0mm, clip]{./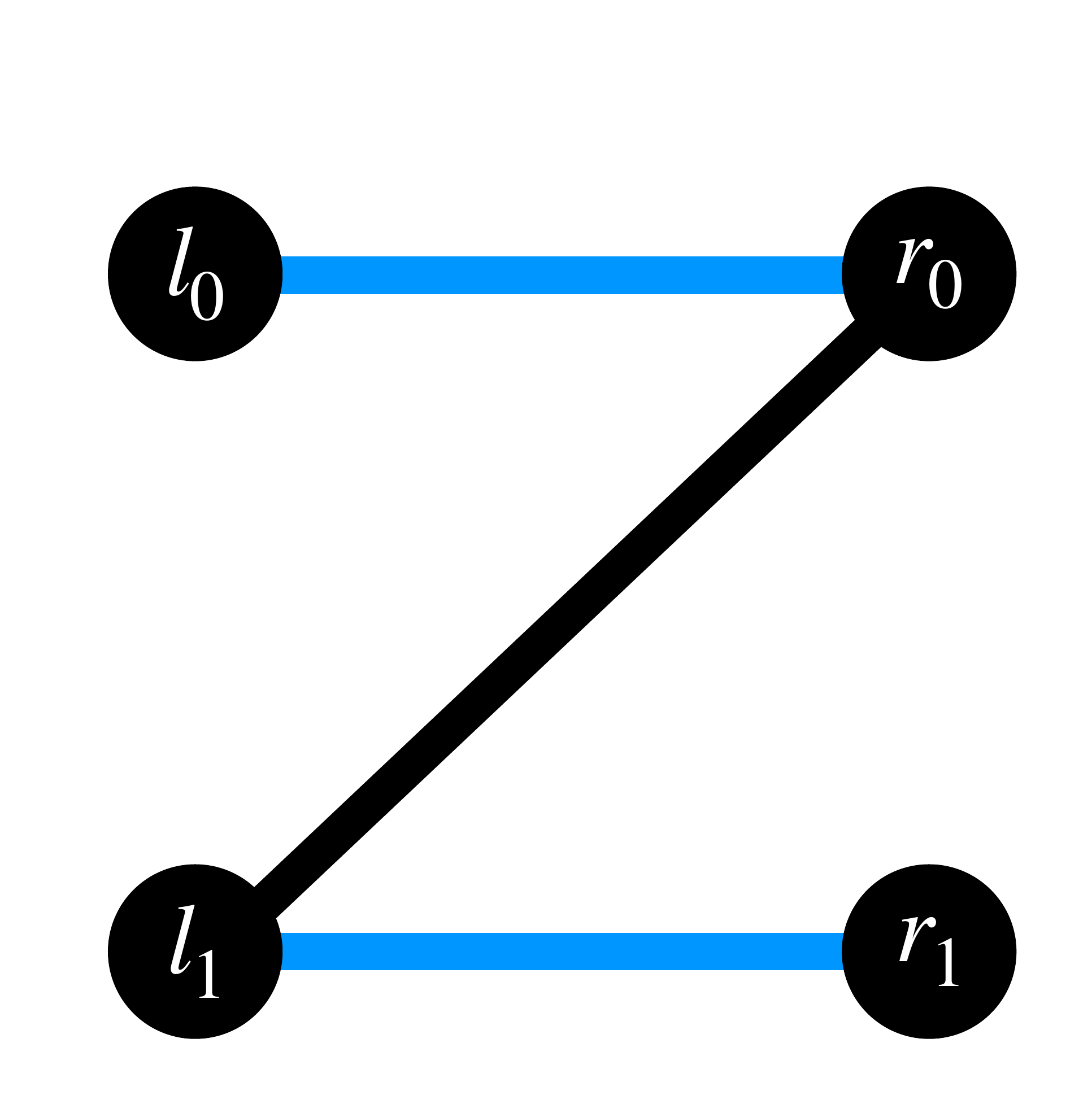}
        \caption{Solution.}\label{sfig:aMinRed2}
    \end{subfigure} \hfill
       \begin{subfigure}[b]{0.24\textwidth}
        \centering
    \includegraphics[width=\textwidth,trim=0mm 0mm 230mm 0mm, clip]{./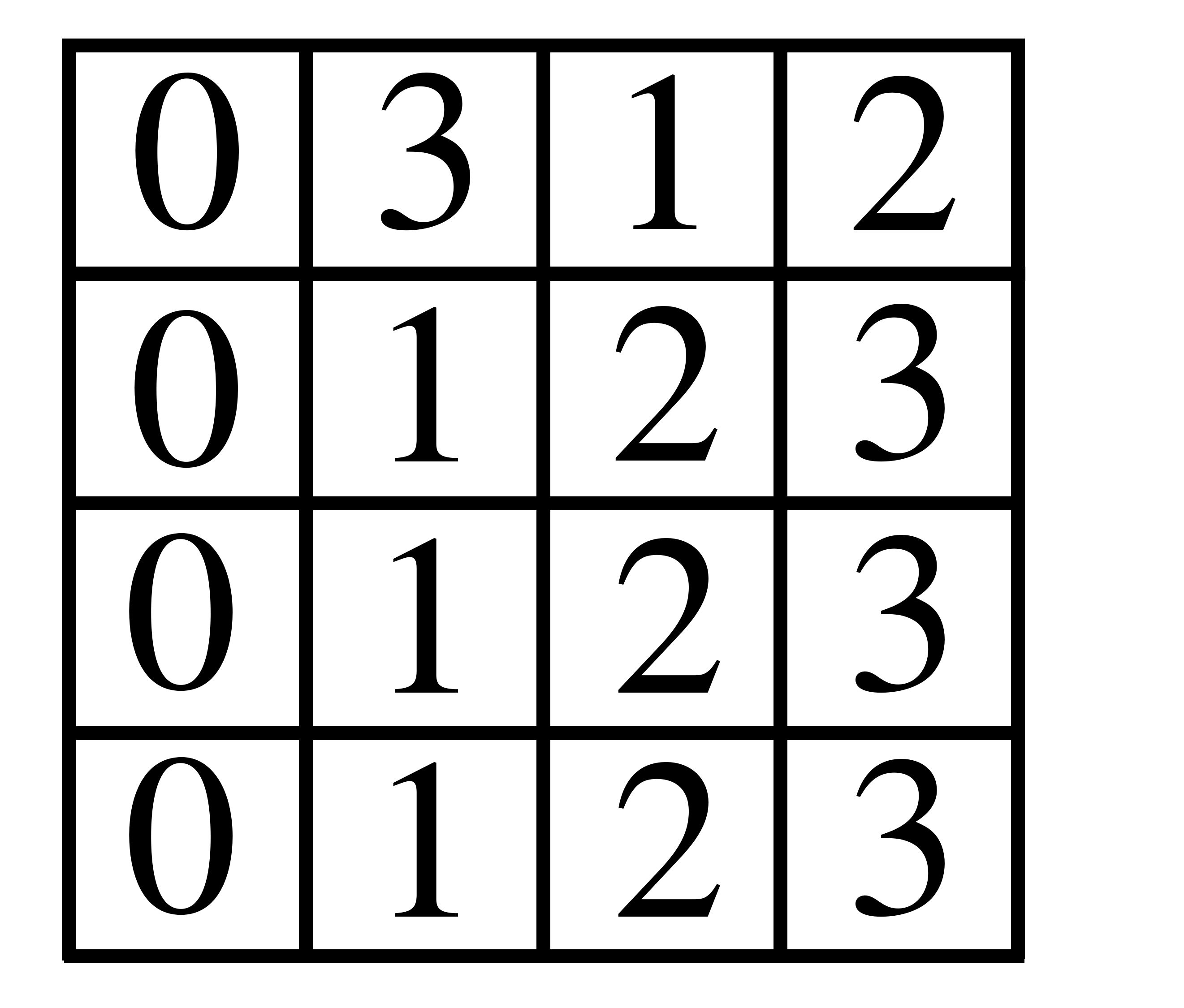}
        \caption{\am.}\label{sfig:aMinRed3}
    \end{subfigure} \hfill
    \begin{subfigure}[b]{0.24\textwidth}
        \centering
        \includegraphics[width=\textwidth,trim=0mm 0mm 230mm 0mm, clip]{./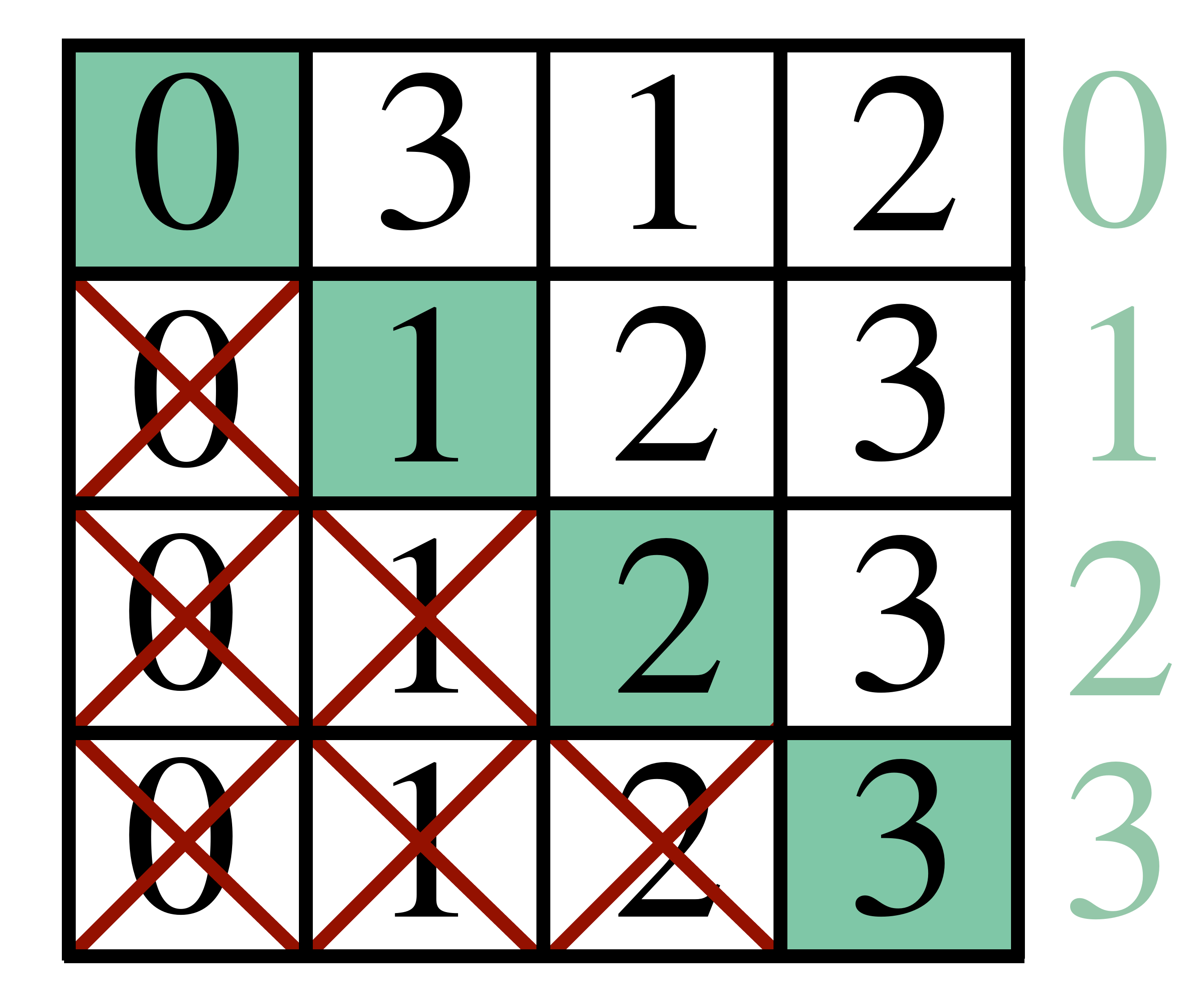}
        \caption{Solution.}\label{sfig:aMinRed4}
    \end{subfigure} \hfill
    \caption{Reduction from \lfm to \am. \ref{sfig:aMinRed1} gives the \lfm instance and \ref{sfig:aMinRed2} its solution. \ref{sfig:aMinRed3} gives the \am instance from the reduction and \ref{sfig:aMinRed4} its solution.} \label{fig:aMinRed}
\end{figure}

Concluding, we use the \CC-hardness of \am to prove the \CC-hardness of \hac. 
\mainLowerPar*

\begin{proof}
Our reduction shows how to reduce an instance of \am (\cref{l:am}) of size $n$ to an instance of average linkage HAC on a tree.
We build a rooted tree, in which each root-to-leaf path has length $2$ (i.e., the tree has depth $2$). 
We call the neighbors of the root \emph{internal} nodes.
Observe that each node is either the root, an internal node or a leaf.
The fact that the tree is rooted is only for the convenience of the description.
In the construction, we will begin by assigning each node an initial size (see the definition of \textit{size} in Section \ref{sec:prelims}) which is possibly larger than $1$ (but at most $\poly(n)$). We will later show how to remove these variable sizes, and reduce to the case where all nodes have initial size $1$ (as in the original definition of HAC).

The basic idea of our construction is as follows. Our HAC instance will consist of a rooted tree where each child of the root corresponds to a column of $A$ in our \am instance. HAC merges will then happen in phases where each phase corresponds to a row of $A$. In a given phase, exactly one internal node will merge with the root which will correspond to this internal node's column being minimum for the corresponding row of $A$. In order to guarantee this, each child of the root will have its own carefully selected children such that merging with these children guarantees the desired behavior in every phase. 

More formally, the tree is constructed as follows.
The root $r$ of the tree has initial size $n^8$.
It has $n$ children, each of initial size $n^4$---we denote them by $v_0, \ldots, v_{n-1}$.
The root is connected to its children using edges of weight $1$, i.e., $w(r,v_i) = 1$ for all $i=0,1,\dots,n-1$ (thus, the normalized weight of the edges $\{rv_i\}_{i=0}^{n-1}$ are each $\frac{1}{n^{12}}$ at the start).
Each internal node has $n(n+1)$ children (leaf nodes) grouped into $n$ groups of $n+1$ leaves each.  We write $C_{i,j} = \{v_{i,j,0}, v_{i,j,1}, \dots, v_{i,j,n}\}$ to denote the $j$-th group of children of the $i$-th internal vertex. 
All the leaves $v_{i,j,k}$ have initial size $1$. Thus, the vertices in the full graph in our construction consists of the root $r$, internal nodes $\{v_0,v_1,\dots,v_{n-1}\}$ and leaves $\cup_{i=0}^{n-1} \cup_{j=0}^{n-1} \cup_{k=0}^n \{v_{i,j,k}\}$

Let $r_i = n^8 + i \cdot n^4$ (for $0 \leq i < n$).
For each pair of an internal node and each of its groups there are only two distinct edge weights for edges between the internal node and the leaves in the group.
Specifically, for an internal node $v_j$ and group $C_{j,i}$ 
of its children we have $A[i, j]+1$ edges of weight $\frac{1}{r_i - 1}$ and $n - A[i, j]$ edges of weight $\frac{1}{r_i + i \cdot n^3}$. Specifically, we set

\[ w(v_j, v_{j,i,k}) =\begin{cases} \frac{1}{r_i - 1} & \text{if } 0 \leq k \leq A[i,j] \\
\frac{1}{r_i + i \cdot n^3} & \text{if }  A[i,j] < k \leq n \\
\end{cases}\]

We call the two weights \emph{high-weight} and \emph{low-weight} edges, respectively. 
Note that their normalized weights are $\frac{1}{n^4(r_i - 1)}$ and $\frac{1}{n^4(r_i + i\cdot n^3)}$ respectively. Observe that the setting of \emph{high-weight} and \emph{low-weight} edges is independent of the internal node $v_j$, although the number of high versus low-weight edges depends on $A[i,j]$.
Moreover, note that even low-weight edges of any group $C_{j,i}$ have higher weights than high-weight edges of group $C_{j,i+1}$:
\begin{align*}
\frac{1}{n^4(r_i + i\cdot n^3)} & > \frac{1}{n^4(r_{i+1} + (i+1)\cdot n^3)} \iff \\
\frac{1}{n^8 + i \cdot n^4 + i\cdot n^3} & > \frac{1}{n^8 + (i+1)n^4 -1} \iff \\
i \cdot n^3 & < n^4 - 1.
\end{align*}
which follows from the fact that $i \leq n-1$.
We now demonstrate that the average linkage HAC on this instance works in $n$ phases numbered from $0$ to $n-1$, where in each phase $i$,
\begin{itemize}
\item $n-1$ internal nodes contract all of their incident group $i$ edges, and
\item one internal node contracts all of its high-weight edges to group $i$, after which it merges with the root.
\end{itemize}

Because of the internal node merging with the root, the root has incident leaves, but they are connected with edges of (normalized) edge weights $\leq \frac{1}{n^{15}}$ and so they will be irrelevant until all phases have been completed.
We  show that if we denote by $k_i$ the index of the internal node which merges with the root in phase $i$, the sequence $k_0, \ldots, k_{n-1}$ is a correct solution to the \am problem.


In order to analyze the algorithm, we prove the following claim.
For convenience, let us define $w_i = n^4 + i \cdot (n+1)$. 

\begin{claim}
In the beginning of phase $i$, the graph is as follows:
\begin{enumerate}
    \item The size of the root node is $r_i+ i^2 \cdot \Delta_i$ for some $\Delta_i \in [0, n+1]$.\label{l:rootweight}
    \item Exactly $i$ internal nodes have been merged with the root, and the corresponding values $k_0, \ldots, k_{i-1}$ have been computed correctly.\label{l:correct}
    \item The size of each of the $n-i$ remaining internal nodes is $w_i$.\label{l:intweight}
    \item For all remaining internal nodes, \emph{all} leaves in groups $0, \ldots, i-1$ have been merged into their parents, and no leaves in groups $i, \ldots, n-1$ have been merged.\label{l:groups}
    \item The root may have incident leaves (resulting from internal nodes contracting into it) connected to the root with edges of normalized weights $\leq \frac{1}{n^{15}}$.\label{l:leaves}
\end{enumerate}
\end{claim}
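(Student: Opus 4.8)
The plan is to prove the five-part invariant by induction on the phase number $i$, with the base case $i=0$ following essentially by inspection of the construction: the root has size $r_0 = n^8$ (so $\Delta_0$ is irrelevant/$0$), no internal nodes have merged, each internal node has size $w_0 = n^4$, no leaves have been absorbed, and the root has no incident leaves yet. The substance is the inductive step: assuming the graph looks as described at the start of phase $i$, I would analyze exactly which merges average linkage HAC performs during phase $i$ and show the graph then looks as described at the start of phase $i+1$.

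For the inductive step, the first sub-step is to argue that, among all edges currently present, the highest normalized weight is achieved by a group-$i$ high-weight edge at one of the remaining internal nodes. This uses part~\ref{l:intweight} (all remaining internal nodes have the same size $w_i$) together with part~\ref{l:groups} (group-$i$ edges are the ``active'' ones), the already-established monotonicity inequality showing low-weight edges of group $i$ beat high-weight edges of group $i+1$, and part~\ref{l:leaves} plus the root-size bound in part~\ref{l:rootweight} to confirm the root's incident leaf edges and the root-to-internal edges (normalized weight $\approx 1/n^{12}$ or smaller) are far too small to interfere. The second sub-step is to track what happens as these group-$i$ leaves get absorbed: since every remaining internal node has the same size and the leaf sizes are $1$, the normalized weights of group-$i$ edges evolve identically across all remaining internal nodes, so HAC absorbs group-$i$ high-weight leaves essentially in lockstep; crucially, each internal node $v_j$ has exactly $A[i,j]+1$ high-weight edges in group $i$, so after absorbing $A[i,j]+1$ leaves its size is $w_i + A[i,j]+1 = r_i + (\text{stuff})$ — I would compute this and compare to the root's size $r_i + i^2\Delta_i$ to show that the internal node with the \emph{smallest} $A[i,j]$ (equivalently, the one that runs out of high-weight leaves first and reaches the critical size) is exactly the one whose edge to the root now becomes the maximum-weight edge, triggering its merge with the root. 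This identifies $k_i = \argmin_{j \in I} A[i,j]$ over the surviving internal nodes, which is precisely the \am update rule, giving part~\ref{l:correct}. The third sub-step handles the aftermath within phase $i$: once $v_{k_i}$ merges into the root, its leftover low-weight group-$i$ leaves and all its higher-indexed groups become attached to the (now huge) root via tiny normalized weights $\le 1/n^{15}$ (re-establishing part~\ref{l:leaves} and the $\Delta_i$ bookkeeping in part~\ref{l:rootweight}), while the $n-1-$wait, the $n-i-1$ surviving internal nodes finish absorbing \emph{all} remaining group-$i$ leaves (both high- and low-weight), bringing each to size $w_i + (n+1) = w_{i+1}$ (part~\ref{l:intweight}) and emptying group $i$ for everyone (part~\ref{l:groups}), with group-$(i+1)$ edges still untouched because of the monotonicity inequality.

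The main obstacle I anticipate is the careful arithmetic bookkeeping around part~\ref{l:rootweight}: I need to verify that throughout phase $i$ the root's size stays in the window $[r_i + i^2\cdot\Delta_i\text{-ish}, r_{i+1}]$ so that (a) no root-incident edge ever beats a group-$i$ internal edge, and (b) the "size $r_i$ plus small correction" computation for the triggering internal node genuinely picks out the $\argmin$ and not some off-by-one neighbor — the spacing between the candidate sizes (controlled by the $i\cdot n^3$ and $n^4$ scales) must dominate the $\Delta_i$-uncertainty in the root's size and the $\epsilon$-free integer gaps among the $A[i,j]$ values. Concretely, I'd show the relevant normalized-weight comparison reduces, after clearing denominators, to an inequality of the form $i\cdot n^3 < n^4 - O(n)$ or similar, which holds since $i \le n-1$; isolating the right such inequality and confirming it has the needed slack is the delicate part, but it is routine once set up. Everything else — the lockstep absorption, the identification of $k_i$, and the phase-boundary cleanup — follows cleanly from the invariant and the uniformity of internal-node sizes.
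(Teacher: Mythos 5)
Your plan matches the paper's proof essentially step for step: induction on the phase, the same three-tier ordering of edge weights (high-weight group-$i$ edges, then root--internal edges, then low-weight group-$i$ edges), lockstep absorption of high-weight group-$i$ leaves, the $\argmin$ node running out first and merging with the root, and the same cleanup and $\Delta_i$ bookkeeping. The one detail to be careful about when writing it out (which the paper isolates as its ``Type 2 subphase'') is that after the $\argmin$ node exhausts its high-weight leaves, the other $n-i-1$ internal nodes each absorb one more high-weight leaf \emph{before} the root merge fires, since $r_i-1 < r_i+i^2\Delta_i$; this affects only the size bookkeeping, not the identity of $k_i$.
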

\begin{proof}
We prove the above claim using induction on $i$.
The base case of $i=0$ follows directly from how the tree is constructed.

We now simulate a single phase.
The edges between the root and the internal nodes have (normalized) weights $\frac{1}{w_i(r_i + i^2 \cdot \Delta_i)} > \frac{1}{n^{15}}$.
Hence, the additional leaf nodes incident to the root (see \cref{l:leaves} of the Claim) are irrelevant.
Thus, the highest weight edge in the graph is surely incident to one of the internal nodes.
Observe that the relative order of edge weights between an internal node $v$ and its children does not change as the leaves are merged into $v$.
Therefore, given that groups $0, \ldots, i-1$ do not exist anymore, among edges between the internal nodes and leaves, the edges of group $i$ have the highest weights.
In the beginning of a phase the high-weight edges in that group have normalized weights $\frac{1}{w_i(r_i -1)}$ and the low-weight edges have weight $\frac{1}{w_i (r_i + i \cdot n^2)}$.

Hence, we have that if we sort the edges by their normalized weights, the top 3 classes of edges are, starting from the highest weight:
\begin{enumerate}
    \item High-weight edges between internal nodes and leaves of group $i$.
    \item Edges between the root and the internal nodes.
    \item Low-weight edges between internal nodes and leaves of group $i$.
\end{enumerate}

\noindent We will show that the phase consists of the following sub-phases.
\begin{enumerate}
\item First, there is some number of subphases, where each of $n-i$ internal nodes contract one incident high-weight edge.
\item Then, there is exactly one subphase, where $n-i-1$ nodes contract an incident high-weight edge and one internal node merges with the root.
\item Then, the remaining $n-i-1$ internal nodes merge with all of their group $i$ leaves (we do not analyze the order in this subphase, as it is irrelevant).
\end{enumerate}

Assume that each internal node has at least one high-weight edge in group $i$.
Then, the algorithm will execute a Type 1 subphase: the first $n-i$ steps of the algorithm would merge exactly one high-weight edge incident to each internal node.
Note that when an edge incident to an internal node $v$ merges, the weight of $v$ increases, and so the incident edge weights decrease.
This guarantees that in the considered $n-i$ steps exactly one merge per internal node happens.

Type 1 subphases of $n-i$ steps continue as long as each each internal node has at least one high-weight group $i$ edge in the beginning of the subphase.
Each Type 1 subphase also causes the weight of each internal node to increase by $1$.
Clearly, since nodes are being merged into internal nodes, the ordering of edge weights incident to any internal node does not change.

At some point, in the beginning of a subphase there is an internal node that does not have any incident high-weight edge in group $i$.
Assume that this happened after $p$ Type 1 subphases have completed.
Thus, the size of each internal node is $w_i + p$.
Since by the construction each internal node had a different number of high-weight edges in group $i$, there is exactly one node $v$ with no high-weight incident edges and that node merges with the root.
This is when Type 2 subphase happens.
First, $n-i-1$ internal nodes contract with a high-weight incident group $i$ edge.
At this point the edge weights are as follows.
The weight of an edge between $v$ and the root is $\frac{1}{(w_i+p)(r_i + i^2\cdot \Delta_i)}$ and the weight of a high-weight edge in group $i$ is
$\frac{1}{(w_i+p+1)(r_i-1)}$. We have that the former is larger since
\begin{align*}
(w_i+p)(r_i + i^2\cdot \Delta_i) & < (w_i+p+1)(r_i-1) \iff\\
(w_i+p)\cdot i^2\cdot \Delta_i & < r_i-1 \iff\\
(n^4 + i \cdot (n+1) + p) \cdot i^2 \cdot \Delta_i & < n^8 + i \cdot n^4 - 1 \Leftarrow \\
(n^4 + n \cdot (n+1) + n) \cdot n \cdot n^2 & < n^8 + i \cdot n^3 - 1.
\end{align*}
Thus, the internal node with no incident high-weight edges in group $i$ merges with the root.
Observe that this is exactly the internal node which had the lowest number of high-weight edges in group $i$ among all remaining internal nodes.
This immediately implies that $k_i$ is computed correctly, proving \cref{l:correct}.

We now show that in the remaining part of the phase the $n-i-1$ remaining internal nodes contract their incident group $i$ edges.
First, observe that the new size of the root node is
\[
r_i + w_i + p + i^2 \cdot \Delta_i = \left(n^8 + i \cdot n^4 + n^4\right) + \left(i \cdot (n+1) + i^2 \cdot \Delta_i + p\right) = r_{i+1} + (i+1)^2 \Delta_{i+1}
\]
for some $\Delta_{i+1} \in [0, n+1]$.
Note that we use the fact that both $\Delta_i$ and $p$ are upper bounded by $n+1$, which implies $i \cdot (n+1) + i^2 \cdot \Delta_i + p \leq (i+1)^2(n+1)$.
This proves \cref{l:rootweight}.
Thus for an internal node of weight $w$, the weight of its edge to the root is  
\[
\frac{1}{w \cdot (r_{i+1} + (i+1)^2 \Delta_{i+1})} \leq \frac{1}{w \cdot r_{i+1}} = \frac{1}{w \cdot (n^8 + (i+1) \cdot n^4)}.\]
On the other hand, its low-weight edges to group $i$ leaves have weight
\[
\frac{1}{w(n^8 + i(n^4 + n^2))}.
\]
As a result, in the remaining part of the current phase all internal nodes will contract all their incident group $i$ edges. This implies \cref{l:groups}.
Thus, the size of each internal node within the phase  increases to $w_i + (n+1) = n^4 + i \cdot (n+1) + n + 1 = n^4 + (i+1)(n+1)$, as required.
This proves \cref{l:intweight} and completes the proof.
\end{proof}

Finally, we now claim that, given an instance of \am with input index $x \in [0,n)$ and  an algorithm which can compute the solutions to Problem \ref{prob:hac}, we can compute the solution $k_x$ to Problem \ref{l:am} in logspace. To see this, note that it suffices to determine the value $k_x$ as defined above given an algorithm for \hac. To see this, note that for any internal node $i$, we can query the \hac algorithm to determine which time step $t_i$ it merged with the root. 
This does not directly tell us which phase $i$ merged with the root, but for a given $i$ we can determine if it merged in phase $x$ 
by comparing $t_i$ with $t_j$ for all $j \in \{0,1,\dots,n-1\} \setminus \{i\}$, and checking if there are exactly $x-1$ values of $t_j$ smaller than $t_i$. This clearly can be verified in log-space. 
Repeating for all $i \in  \{0,1,\dots,n-1\}$ allows us to correctly determine the identity of the internal node that merged with the root in phase $x$, and therefore the value of $k_x$, in logspace as required.

To complete the proof of the Lemma it remains to show how to drop the assumption on the node sizes being initially not all equal to $1$.
In order to obtain a node of size $w$ it suffices to create a node of weight $1$ and initially connect it to $w-1$ auxiliary nodes using very high weight edges.
This will force the algorithm to merge all these auxiliary nodes and increase the size of that node to $w$.
Since the auxiliary leaves are connected only to the root and internal nodes (the leaves in our construction have weight $1$), the diameter of the tree does not increase.
\end{proof}

\section{Average Linkage HAC on Paths in \NC}\label{sec:pathupperbound}
In this section, we present an $\tilde{O}(n)$ work and $O(\mathsf{polylog}(n))$ depth algorithm for solving average linkage HAC on path graphs, provided that the aspect ratio of the input instance is bounded by $\poly(n)$.
The {\em aspect ratio} is defined as $\mathcal{A} = W_{\max}/W_{\min}$, where $W_{\max} = \argmax_{e \in E} w(e)$ and $W_{\min} = \argmin_{e\in E} w(e)$ (note that this definition excludes all non-edges, which implicitly have a weight of 0).
The main algorithm for this section is presented in Algorithm~\ref{alg:paths}, which we first give a high-level overview of next, followed by a detailed description.

\begin{algorithm}[ht]
\caption{Average linkage HAC on paths; $\mathtt{ProcessChain}$ assumes that the input chain has a certain structure. However, this assumption can be removed easily (see \Cref{sec:pathupperbound})}\label{alg:paths}
\SetKwInput{Input}{Input}
\SetKwFunction{FHac}{PathHAC}
\SetKwFunction{Fchain}{ProcessChain}
\SetKwProg{Fn}{Function}{:}{}
\SetKwProg{parfor}{parallel for}{:}{}

\Input{A path graph $G = (V, E, w)$}
\Fn{\FHac{$x, y, w, S$}}{
Compute buckets $B_1,B_2,\ldots,B_T$ such that $B_i$ contains edges with weights in $((2/3)^{t}w_{\max}, (2/3)^{t-1}w_{\max}]$.\\
\For{$t=1$ to $T$}{
Let $\mathcal{C} \gets$ Nearest-neighbor chains on the induced graph $G[B_t]$.\\
\parfor{each $C \in \mathcal{C}$}{
\Fchain{$C$}
}
}
}
\BlankLine
\Input{A chain $C=(c_1,c_2,\ldots,c_N)$ such that $c_{i-1}$ is the nearest neighbor of $c_i$ and $(c_1,c_2)$ is a reciprocal pair.}
\Fn{\Fchain{$C,t$}}{
Split the chain $C$ at all indices $i$ such that $S(c_i) < 2S(c_{i-1})$, except $i=2$.\\
We get a set of contiguous \emph{subchains} $C_1,C_2,\ldots,C_K$. Subchain $C_j$ falls into one of two categories: 
(A) the nearest clusters in $C_{j-1}$ and $C_j$ merge, or 
(B) no two clusters from $C_{j-1}$ and $C_j$ merge.\\
Find the merge categories for all subchains to obtain the modified subchains $C_1',C_2',\ldots, C_K'$.\\
\parfor{each $j=1$ to $k$}{
Run average linkage HAC on $C_j'$ until no edge remains with weight within the threshold of $B_t$.
}

}
\end{algorithm}

In average linkage HAC, the weight (i.e., similarity) of edges monotonically decreases over time. 
Thus, our idea is to partition the edges into \emph{buckets} where the edges in any bucket have the same similarity, up to constant factors. 
Next, we process these buckets in {\em phases}, from the highest similarity bucket to the lowest. 
In each phase, we perform a modified version of the classic nearest-neighbor chain algorithm (\Cref{alg:nnchain}), wherein we compute the nearest-neighbor chains for the graph induced on the edges in that bucket, and process each chain independently. 
We note that when we use the terminology {\em nearest neighbor} of a vertex in what follows, we refer to the neighbor along the {\em highest weight edge} incident to the vertex.

Initially, each cluster is a singleton, and we might end up with $\Omega(n)$ sequential dependencies to resolve. 
However, we observe that in this special case when the size of every cluster is equal, starting with the reciprocal pair, every {\em alternate} edge in this chain can be merged independently, and the rest of the edges will be moved to a later bucket. 
We can compute the edges that will be merged easily via a simple $\mathsf{prefix\text{-}sum}$ routine~\cite{blellochscan}. 
However, when the cluster sizes are arbitrary, this observation no longer holds.
Nonetheless, we show that we can partition each chain further into $O(\log n)$-sized subchains such that, even though the dependencies within a subchain must be resolved sequentially, the dependencies across subchains can be resolved in parallel using a similar application of $\mathsf{prefix\text{-}sum}$. 
In this section we show (1) that our parallel algorithm is highly efficient (it runs near-linear time in the number of nodes) and runs in poly-logarithmic depth and (2) that our algorithm implies that the dendrogram height of a path input with polynomial aspect ratio is always poly-logarithmic.


\paragraph{Implementation Details and Correctness}
We now elaborate on certain implementation aspects of \Cref{alg:paths}. In \Cref{alg:paths}, there are two implicit assumptions made about the chains:
\begin{itemize}
    \item Every edge in $B_t$ is present in some chain in $\mathcal{C}$.
    \item Each chain can be represented as $(c_1,c_2,\ldots, c_N)$ such that $c_{i-1}$ is the nearest neighbor of cluster $c_i$, and $(c_1,c_2)$ is the \emph{reciprocal pair}. 
    Thus, the weights are in non-increasing order from left to right.
\end{itemize}
We will see later how to remove these assumptions. By the above assumptions, each chain will be a separate connected component in the graph induced on $B_i$, thus, allowing us to process each chain independently. In $\mathtt{ProcessChain}$, we split each chain at all indices such that $S(c_i) < 2 S(c_{i-1})$ giving us a set of subchains. This implies that for two adjacent clusters $c_{j-1},c_{j}$ in the same subchain, $S(c_j) \ge 2S_{j-1}$ (see \Cref{fig:path-subchains}). Thus, the size of a subchain cannot exceed $\log n$, and we can afford to process these subchains sequentially, while still obtaining low depth overall. 

\begin{figure}[ht]
    \centering
    \includegraphics[width=0.9\textwidth]{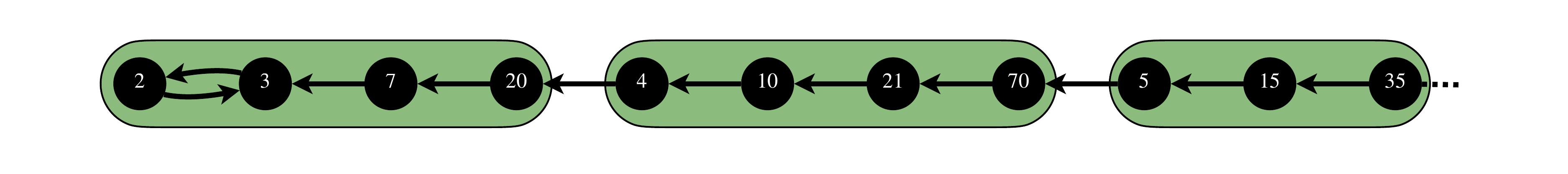}
    \caption{An example describing the partition of a chain into subchains based on cluster sizes.}
    \label{fig:path-subchains}
\end{figure}

We now prove certain key properties about subchains. Let $u_0,u_1,u_2$ denote three contiguous clusters in a chain such that $u_0$ is the last cluster in its subchain, and $u_1$ is the first cluster of its subchain. $u_2$ may or may not belong to the same subchain as $u_1$.

\begin{lemma}\label{lem:path_A}
    If cluster $u_1$ first merges with the cluster (containing) $u_0$, then the weight of the edge $(u_1,u_2)$ reduces to at most $2/3$ times its previous weight.
\end{lemma}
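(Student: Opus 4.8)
The plan is to analyze what happens to the edge $(u_1, u_2)$ when $u_1$ merges with (the cluster containing) $u_0$. Since we are in phase $t$, all edges under consideration, including $(u_1, u_2)$, have weight (i.e., normalized similarity) in the bucket $B_t = ((2/3)^t w_{\max}, (2/3)^{t-1} w_{\max}]$. First I would set up notation: let $w$ denote the raw (unnormalized) edge weight of $\{u_1, u_2\}$ in the contracted graph, and recall that the normalized weight is $w / (S(u_1) \cdot S(u_2))$. When $u_1$ merges with the cluster $u_0'$ containing $u_0$, the new cluster has size $S(u_1) + S(u_0')$, so the normalized weight of the edge to $u_2$ becomes $w / ((S(u_1) + S(u_0')) \cdot S(u_2))$. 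The ratio of the new normalized weight to the old is therefore $S(u_1) / (S(u_1) + S(u_0'))$.

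Next I would bound $S(u_0')$ from below. The key structural fact is that $u_1$ is the first cluster of its subchain while $u_0$ is the last cluster of the preceding subchain, and the split was made precisely because $S(u_1) < 2 S(u_0)$. I need to relate $S(u_0')$ — the size of the cluster that $u_1$ actually merges into, which may have grown via earlier merges within its own subchain $C_{j-1}$ — to $S(u_0)$. Since $u_0$ is the last (rightmost, hence along the chain the one with a smaller cluster) cluster of its subchain and clusters only grow, $S(u_0') \geq S(u_0)$ at the moment of merging; combined with $S(u_0) > S(u_1)/2$ we get $S(u_0') > S(u_1)/2$. Plugging into the ratio: $S(u_1) / (S(u_1) + S(u_0')) < S(u_1) / (S(u_1) + S(u_1)/2) = 1/(1 + 1/2) = 2/3$. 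Hence the new normalized weight of $(u_1, u_2)$ is at most $2/3$ of its previous value, as claimed.

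The main obstacle I anticipate is making precise the claim that the cluster containing $u_0$ at the time $u_1$ merges with it has size at least $S(u_0)$ as computed at the start of the phase (or, more carefully, at least $S(u_1)/2$), because within the phase the subchain $C_{j-1}$ undergoes its own sequence of merges, and I must ensure $u_0$ is indeed the endpoint that participates in the merge with $u_1$ and that its cluster has only accumulated mass, never shed it. This follows from the chain structure — $u_0$ is $u_1$'s nearest neighbor, so the edge $\{u_0, u_1\}$ persists across merges within $C_{j-1}$ that only enlarge $u_0$'s cluster — but it deserves a careful sentence. A secondary subtlety is that the split condition $S(c_i) < 2S(c_{i-1})$ is stated in terms of sizes at the start of the phase; since the lemma concerns the first merge of $u_1$, which happens before $u_1$ itself has grown, $S(u_1)$ is unchanged, so the inequality $S(u_0) > S(u_1)/2$ transfers directly. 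I would close by noting this is exactly the bound needed to guarantee that $(u_1, u_2)$ drops out of bucket $B_t$ (case A of $\mathtt{ProcessChain}$), so subchains in category A can be handled as described.
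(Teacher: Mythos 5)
Your proposal is correct and follows essentially the same route as the paper: both arguments reduce to the observation that the merged cluster containing $u_0$ has size at least $S(u_0)+S(u_1) \ge \tfrac{3}{2}S(u_1)$ by the subchain split condition $S(u_1) < 2S(u_0)$, so the normalized weight of $(u_1,u_2)$ shrinks by a factor of at most $S(u_1)/(S(u_1)+S(u_0)) \le 2/3$ (the paper phrases this via reciprocals, $1/x' \ge 1/x + S(u_0)S(u_2)/w(e) \ge 3/(2x)$, which is the same computation). Your extra care about the cluster containing $u_0$ having only grown by the time of the merge is the same point the paper absorbs into the inequality ``the merged cluster will be of size at least $S(u_0)+S(u_1)$.''
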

\begin{proof}
    Let $x = w(e)/(S(u_1)S(u_2))$ denote the weight of edge $e=(u_1,u_2)$. If cluster $u_1$ merges with a cluster in its preceding subchain, the merged cluster will be of size at least $S(u_0)+S(u_1)$. Let $x'$ denote the new weight of edge $e$ after this merge. Then,
    \begin{align*}
        x' \le \frac{w(e)}{(S(u_0)+S(u_1))S(u_2)} \implies \frac{1}{x'} &\ge \frac{1}{x} + \frac{S(u_0)S(u_2)}{w(e)}\\
        \implies \frac{1}{x'} &\ge \frac{1}{x} + \frac{S(u_1)S(u_2)}{2w(e)}, & \text{since $S(u_0) \ge S(u_1)/2$}\\
        \implies \frac{1}{x'} &\ge \frac{3}{2x},
    \end{align*}
    Therefore, $x' \le 2x/3$.
\end{proof}

\begin{lemma}\label{lem:path_B}
    If cluster $u_1$ first merges with the cluster (containing) $u_2$, then the weight of the edge $(u_0,u_1)$ reduces to at most $1/3$ times its previous weight.
\end{lemma}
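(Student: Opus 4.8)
The plan is to argue in the spirit of Lemma~\ref{lem:path_A}. Let $S''$ denote the size of the cluster containing $u_2$ at the moment $u_1$ first merges with it, and let $x$ be the normalized weight of the edge $(u_0,u_1)$ just before that merge. Since $u_0$ and $u_2$ are non-adjacent on the path, this merge changes only the endpoint size $S(u_1)$ of the edge $(u_0,u_1)$, to $S(u_1)+S''$, so its new normalized weight is $x'=x\cdot\frac{S(u_1)}{S(u_1)+S''}$. Hence it suffices (in most cases) to show $S''\ge 2\,S(u_1)$, which already gives $x'\le x/3$. I would split on whether $u_2$ lies in the same subchain as $u_1$. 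If it does, then since $u_1$ is the first cluster of that subchain the chain was not split at $u_2$, so the splitting criterion gives $S(u_2)\ge 2\,S(u_1)$; as cluster sizes never decrease, $S''\ge S(u_2)\ge 2\,S(u_1)$, and we are done. This case is essentially identical to Lemma~\ref{lem:path_A}.

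The case where $u_2$ lies in a different subchain (so that $u_1$ forms a singleton subchain) is the crux. Here the splitting criterion only gives $S(u_2)<2\,S(u_1)$, so $S''$ alone need not reach $2\,S(u_1)$; instead I would account for the weight of $(u_0,u_1)$ as a product of the factor it lost while the cluster containing $u_0$ grew and the factor $\tfrac{S(u_1)}{S(u_1)+S''}$ it loses at the $u_1$--$u_2$ merge. The point is that $u_1$'s highest-weight incident edge points to $u_0$ (edge weights along a chain are non-increasing), so $u_1$ can merge with $u_2$ before $u_0$ only once the edge $(u_0,u_1)$ has been driven below $(u_1,u_2)$, which forces the cluster containing $u_0$ to have already grown. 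Quantitatively, at the time of the merge the normalized weight of $(u_1,u_2)$ is at least that of $(u_0,u_1)$; combining this with the bucketing invariant that at the start of the phase these two edges are within a factor $3/2$ of each other (with $(u_0,u_1)$ the larger) and with the splitting criterion $S(u_1)<2\,S(u_0)$ at $u_1$, I would conclude that the combined loss factor is at most $1/3$.

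I expect the second case to be the main obstacle: unlike Lemma~\ref{lem:path_A}, a single merge no longer suffices, so one must carefully combine two separate weight losses, and the argument is tight enough that the bucket ratio $3/2$ has to be used rather than merely bounded away from $1$. A secondary point is to verify that, under the hypothesis, the edge $(u_0,u_1)$ is not itself contracted during the phase, so that ``reduces to $1/3$ of its previous weight'' is meaningful; this should follow from the monotone non-increase of the maximum normalized similarity together with the fact that after these losses $(u_0,u_1)$ has fallen below the phase's similarity threshold.
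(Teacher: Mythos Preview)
Your case split is sound, and your treatment of the case where $u_2$ lies in the same subchain as $u_1$ (so that the non-split condition gives $S(u_2)\ge 2S(u_1)$) is exactly the paper's intended ``similar argument'': from $x'=x\cdot S(u_1)/(S(u_1)+S'')$ and $S''\ge S(u_2)\ge 2S(u_1)$ one gets $x'\le x/3$ immediately.

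The singleton case, however, is not a difficulty you can overcome by a finer argument---it is a genuine gap in the statement as written. The paper's one-line proof tacitly uses $S(u_2)\ge 2S(u_1)$, and when $u_1$ is a singleton subchain the conclusion with factor $1/3$ is simply false. Concretely, take a four-cluster chain $(c_1,c_2,c_3,c_4)$ with sizes $11,\,1000,\,10,\,1$ and normalized edge weights $1,\ 0.99,\ 0.98$ (all in the bucket $(2/3,\,1]$). The splitting rule yields subchains $\{c_1,c_2\},\{c_3\},\{c_4\}$, so with $u_0=c_2$, $u_1=c_3$, $u_2=c_4$ we are in the singleton case. Exact HAC first merges $c_1$ with $c_2$ (the $u_0$-side grows to size $1011$, dropping the weight of $(u_0,u_1)$ to $9900/10110\approx 0.979$), and then merges $c_3$ with $c_4$; hence $u_1$ first merges with the cluster containing $u_2$, satisfying the hypothesis. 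After that merge the weight of $(u_0,u_1)$ is $9900/(1011\cdot 11)\approx 0.890$---a drop by a factor of only $10/11$ from its value just before the merge, and about $0.9$ from its initial value $0.99$. Not only is this far from $1/3$; the edge does not even leave the current bucket. Both ingredients you planned to exploit---the bucket ratio $3/2$ and the forced growth of the $u_0$-side---are present here but are far too weak: the $u_0$-cluster only had to grow by a factor $\approx 1.01$ to push the weight of $(u_0,u_1)$ below that of $(u_1,u_2)$, and $S(u_2)=1$ is so small relative to $S(u_1)=10$ that absorbing it barely moves the edge. So no argument along your proposed lines can recover the $1/3$ factor in the singleton case; the ``similar argument'' the paper alludes to really only covers the same-subchain case.
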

The proof follows by a similar argument as in \Cref{lem:path_A}. By \Cref{lem:path_A,lem:path_B}, each subchain falls into one of the following categories:
\begin{enumerate}[label=(\Alph*)]
    \item The first cluster of the subchain merges with a cluster in the preceding subchain (if exists).
    \item The first cluster of the subchain doesn't merge with a cluster from the preceding subchain.
\end{enumerate}
In either case, one of the edges incident on the first cluster of the subchain moves to a later bucket after it merges, disconnecting the two subchains in the induced graph. Thus, if a subchain falls into category (A), we can move its first cluster to the preceding subchain; if it falls into category (B), we make no changes. By this process, we obtained the modified subchains $C_1',C_2',\ldots,C_K'$, and by the above argument we can run average linkage HAC independently on them. Note that we do not run HAC to completion at these subchains, rather until we have edges with weight within the threshold defined by its bucket. 

Note that the first subchain of a chain will always fall into category (B) since it contains a reciprocal pair which is guaranteed to merge.
Running HAC on this subchain (starting with merging the reciprocal pair) determines whether the next subchain falls into the merge category (A) or (B), which in turn determines the merge category for the following subchain, and so on. 
We now show that we can propagate this merge behavior across subchains with the help of a simple prefix-sum routine in $\poly(\log n)$ depth.

\begin{lemma}\label{lem:path_prefix_sum}
    Given a chain of size $N$ partitioned into subchains $C_1,C_2,\ldots,C_K$, we can determine the merge category ((A) or (B)) for each subchain in $\tilde{O}(N)$ work and $\poly(\log n)$ depth.
\end{lemma}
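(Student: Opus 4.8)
The plan is to recognize that the dependency structure across subchains forms a chain of state transitions, and that the state of each subchain ends up being a function that is \emph{composable}, so a parallel prefix (scan) computation suffices. First I would set up the state. For a subchain $C_j$, its behavior is fully determined once we know whether the cluster immediately preceding it (i.e., the last cluster of $C_{j-1}$, after $C_{j-1}$ has been processed in this bucket) merges \emph{into} the first cluster of $C_j$ or not; by Lemmas~\ref{lem:path_A} and~\ref{lem:path_B}, this is a single bit, call it the \emph{incoming bit} $\texttt{in}_j \in \{\text{(A)}, \text{(B)}\}$. The key observation, which is already set up in the surrounding text, is that each subchain $C_j$ can, \emph{locally and in parallel}, precompute two short HAC simulations: one assuming $\texttt{in}_j = $ (A) (first cluster absorbed into the preceding subchain), one assuming $\texttt{in}_j = $ (B). Each such simulation runs average-linkage HAC on $O(\log n)$ clusters until no incident edge is within the bucket's weight threshold, costing $\tilde O(|C_j|)$ work and $\poly(\log n)$ depth per subchain, and hence $\tilde O(N)$ work and $\poly(\log n)$ depth total across all subchains in parallel. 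From each simulation, $C_j$ extracts its \emph{outgoing bit}: whether, under that assumption, the last cluster of $C_j$ ends up wanting to merge into the first cluster of $C_{j+1}$ --- equivalently, whether $C_{j+1}$'s incoming bit is (A) or (B).

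Next I would phrase this as a scan. Define $f_j : \{\text{(A)},\text{(B)}\} \to \{\text{(A)},\text{(B)}\}$ by $f_j(\texttt{in}_j) = \texttt{in}_{j+1}$, read off from the two precomputed simulations of $C_j$. There are only four functions from a two-element set to itself, so each $f_j$ is an $O(1)$-size object and function composition is an associative $O(1)$-time operation. The first subchain $C_1$ always has incoming bit (B) (it contains the reciprocal pair, which merges). Then $\texttt{in}_j = (f_{j-1} \circ \cdots \circ f_1)(\text{(B)})$, so the vector $(\texttt{in}_1, \ldots, \texttt{in}_K)$ is exactly the output of an exclusive parallel prefix/scan over the sequence $f_1, f_2, \ldots, f_{K-1}$ under composition, started from the identity-evaluated-at-(B). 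Parallel prefix sum is computable in $O(N)$ work and $O(\log N)$ depth~\cite{blellochscan}, and here the monoid elements are constant-size, so we stay within $\tilde O(N)$ work and $\poly(\log n)$ depth. Once $\texttt{in}_j$ is known for every $j$, each subchain selects its correct precomputed simulation, which is its merge category, and outputs it.

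Finally I would tally the costs: $O(\log n)$ bound on each subchain's length (from the size-doubling splitting rule, as argued just before the lemma) makes the per-subchain simulations cheap; the two-simulations-per-subchain precomputation is embarrassingly parallel across subchains; the scan over the composition monoid contributes an extra $O(\log N)$ depth and $O(N)$ work; and reading off categories is $O(1)$ per subchain. Summing gives $\tilde O(N)$ work and $\poly(\log n)$ depth. The main obstacle to get right is the \emph{correctness} of the claim that a subchain's entire in-bucket merge behavior --- and in particular its outgoing bit --- is determined by the single incoming bit, independent of anything further left in the chain; this is exactly what Lemmas~\ref{lem:path_A} and~\ref{lem:path_B} buy us, since once the first cluster of $C_j$ resolves its merge on one side, the edge on the other side drops out of the current bucket, severing $C_j$ from its neighbor, so no later information can propagate back. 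The remaining subtlety is the boundary bookkeeping when a subchain is in category (A) and its first cluster has been reassigned to the preceding (modified) subchain $C_{j-1}'$; one has to check that the two-case precomputation is done on the modified subchains consistently, which follows because whether a subchain is category (A) or (B) depends only on its incoming bit, so the modified decomposition $C_1', \ldots, C_K'$ is itself determined by the same scan.
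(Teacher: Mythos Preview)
Your proposal is correct and follows essentially the same approach as the paper: precompute, for each subchain, the two possible transitions (under incoming category (A) or (B)) by locally simulating HAC on $O(\log n)$ clusters, and then propagate via a parallel prefix computation. The only minor difference is that the paper first splits the chain wherever $f_j$ is a constant function (so that only the identity and swap maps remain) and then reduces the scan to a prefix \texttt{XOR}, whereas you work directly over the four-element monoid of endofunctions on $\{\text{(A)},\text{(B)}\}$ under composition; your formulation avoids the extra case split at no additional cost.
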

\begin{proof}
    For subchain $C_j$, let $f(j,A)$ denote the resulting merge category of $C_{j+1}$ if $C_j$ belongs to category (A). Define $f(j,B)$ similarly. Firstly, recall that the category of $C_1$ is determined (it is always category (B)). 
    If $f(j,A) = f(j,B)$, for some $j$, this implies that the category of $C_{j+1}$ is determined regardless of the category of $C_j$. Thus, we can split the chain at subchain $C_{j+1}$ and deal with the two chains obtained, independently. We now assume that $f(j,A)\ne f(j,B)$ for all $j$.

    We create an array $A$ of size $K$ such that,
    \begin{align*}
        A[j] = \begin{cases}
            0 & \text{if $j=1$ and $f(1,B)=A$},\\
            1 & \text{if $j=1$ and $f(1,B)=B$},\\
            0 & \text{if $j>1$ and $f(j,A)=A$},\\
            1 & \text{if $j>1$ and $f(j,A)=B$}.
        \end{cases} 
    \end{align*}
    Let $B= \mathsf{prefix\text{-}sum}(A,\mathtt{XOR},0)$, i.e., $B[1]=0$ and $B[i]=A_1\oplus A_2 \oplus\ldots A_{i-1}$.
    In other words, we compute the prefix sum of the array $A$ with respect to the $\mathtt{XOR}$ function, using $0$ as the left identity element.
    Then, we claim that the merge category of subchain $C_j$ is (A) if $B[j]=0$, otherwise it is (B).

    To see this, consider subchains such that $f(j,A)=A$. These subchains propagate the same category (as them) to the next subchain, which can be viewed as an $\mathtt{XOR}$ operation with a $0$. Similarly, subchains that have $f(j,A)=B$ propagate the opposite category (as them) to the next subchain, which can be viewed as an $\mathtt{XOR}$ with $1$. $A[1]$ denotes the category of subchain $C_2$ that is determined by $C_1$. Thus, computing the $\mathsf{prefix\text{-}sum}$ correctly computes the categories for each subchain.

    The values of $f(j,A)$ (and $f(j,B)$) can be computed by simply simulating average linkage HAC sequentially at each subchain, assuming the category (A) (and (B)) at that subchain. 
    Since the subchain sizes are at most $\log n$, the work incurred to solve HAC on a subchain of size $r$ will be $O(r\log\log n)$.
    
    To see this, consider running Algorithm~\ref{alg:heapbased} on the chain, which we claim runs in $O(N\log\log n)$ work (and depth).
    This is because the heap-based algorithm for HAC (described in more detail in Section~\ref{sec:workupperbound}) on a path containing $r \leq \log n$ elements operates over a heap containing at most $\log n$ elements, and thus each of the at most $r$ merge operations require $O(\log \log n)$ time and only require updating the two neighboring edges incident to the merged edge.
    Since the algorithm used on each subchain is sequential, the depth at each subchain will also be $O(r\log\log n) \in O(\log n\log\log n)$. 
    The $\mathsf{prefix\text{-}sum}$ operation runs in $O(K)$ work and $O(\log K)$ depth. 
    
    Thus, the overall work for a chain containing $N$ elements will be $O(N\log\log n)$ and depth will be $O(\log n\log\log n)$.
\end{proof}

Since the edges are processed in non-increasing order of weight (due to bucketing), and \Cref{lem:path_A,lem:path_B,lem:path_prefix_sum} correctly computes the subchains and proves the correctness of processing these subchains independently, the overall correctness of the algorithm follows directly.

\paragraph{Work and Depth Bounds}
We will now argue the work and depth bounds of the overall algorithm.
Observe that, since the minimum possible weight is at least $W_{\min}/n^2$, the total number of phases will be $T \in O(\log \mathcal{A} + \log n) \in O(\log n)$. 
In each phase, we can compute the nearest neighbor chains in $O(n)$ work and $O(\log n)$ depth; the subchains can also be computed with similar work-depth bounds. By \Cref{lem:path_prefix_sum}, the total work incurred in finding the merge categories for each subchain and running average linkage HAC on these subchains, across all chains, in the worst case will be $O(n\log\log n)$ work and $O(\log n\log\log n)$ depth. Thus, the overall work of the algorithm is $O(n\log n\log\log n)$ and depth is $O(\log^2n\log\log n)$. 

Thus, we have the following theorem about the work and depth of the algorithm:
\mainUpperPar*

\paragraph{Resolving assumptions about chain structure}
\begin{figure}[ht]
    \centering
    \includegraphics[width=\textwidth]{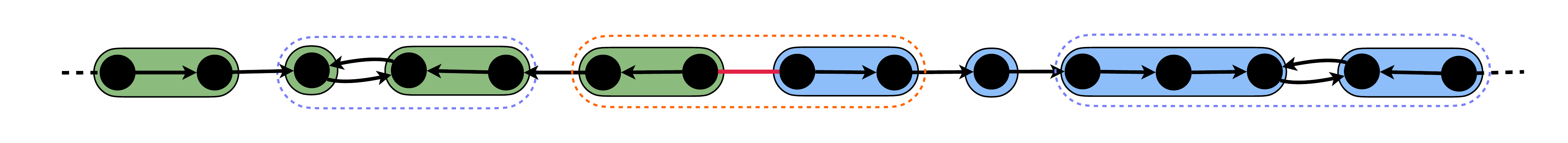}
    \caption{An example illustrating the general (possible) structure of the nearest neighbor chains computed in each phase of \Cref{alg:paths}. The red edge here is an example of an edge that is not present in any chain. The dotted lines represents the various subchain merges to address the assumptions made about the nearest neighbor chain structure in \Cref{alg:paths}.}
    \label{fig:path-general-chains}
\end{figure}
We will now address the assumptions made about the structure of chains in the algorithm. When the nearest neighbor chains are computed, the structure of the chains can be represented as 
$$(c'_{N'}\ldots, c_3',c_2', c_1',c_1,c_2,c_3,\ldots,c_{N}),$$
where $(c_1',c_1)$ is the reciprocal pair, $c'_{i-1}$ is the nearest neighbor of $c'_i$, and $c_{i-1}$ is the nearest neighbor of $c_i$. We can extend the idea of partitioning these chains into subchains as follows: partition the chains $(c_1,c_2, \ldots, c_N)$ and $(c'_1,c'_2,\ldots,c'_{N'})$, independently, into subchains using the same criterion as before. Now, merge the subchains containing $c_1$ and $c'_1$ resulting in a subchain of length at most $2\log n$ (see \Cref{fig:path-general-chains}). Then, independently propagate the merge behavior across these two sets of subchains, with the subchain containing $(c_1,c'_1)$ as the first subchain. 

Finally, when we compute the nearest neighbor chains, it is possible for some edges in the bucket to not be present in any chain. However, these edges can occur only between the last vertices of two chains (see \Cref{fig:path-general-chains}). Thus, after finding the merge categories for the subchains containing these vertices (in their respective chains), we can merge these two subchains along with the edge (whose size will be at most $4\log n$ in the worst case). It is not hard to extend the correctness and running time arguments when these modifications are incorporated.

\paragraph{Height of the dendrogram is \texorpdfstring{\boldmath$O(\log^2n)$}{O(log^2(n))}}
An interesting consequence of our algorithm is a constructive proof that the height of the dendrogram obtained when we run average linkage HAC on paths with polynomial aspect ratio is $O(\log^2n)$.

\begin{restatable}{theorem}{pathheight}\label{thm:pathheight}
    Average linkage HAC on path graphs with $\poly(n)$ aspect-ratio returns a dendrogram with height at most $O(\log^2n)$.
\end{restatable}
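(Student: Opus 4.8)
The plan is to read off the height bound directly from the structure of Algorithm~\ref{alg:paths} and the work/depth analysis we have already carried out. Recall that the dendrogram height is the longest root-to-leaf path, which equals the maximum over all leaves $v$ of the number of merges that the cluster containing $v$ participates in before HAC terminates. So it suffices to bound, for every vertex $v$, how many times its cluster can be merged.

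First I would decompose the merge count by phase. The algorithm runs in $T \in O(\log \mathcal{A} + \log n) = O(\log n)$ phases (one per bucket $B_t$), since the minimum normalized weight is at least $W_{\min}/n^2$ and weights decrease geometrically by factors of $2/3$. So it is enough to show that within a single phase, the cluster containing any fixed vertex $v$ participates in only $O(\log n)$ merges. Within phase $t$, the cluster of $v$ lies in some nearest-neighbor chain $C$, and after the subchain-splitting step it lies in one of the modified subchains $C_j'$; the only merges $v$'s cluster undergoes in this phase are those performed by running (sequential) average linkage HAC on $C_j'$ until no edge within the bucket threshold remains. Since each modified subchain has length $O(\log n)$ (the splitting criterion $S(c_i) < 2 S(c_{i-1})$ forces sizes to double along a subchain, capping its length at $\log n$, and the category-(A) merging only appends one extra cluster, plus the boundary merges in the assumption-removal step at most quadruple this to $4\log n$), running HAC on a subchain of length $\ell$ can perform at most $\ell - 1$ merges total, hence any single cluster within it participates in at most $O(\log n)$ merges in that phase.

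Combining the two bounds: each of the $O(\log n)$ phases contributes at most $O(\log n)$ merges to the ancestor chain of any leaf, so the dendrogram height is $O(\log^2 n)$. I would phrase this as: fix a leaf $v$; its dendrogram-ancestors are exactly the clusters containing $v$ over the course of the algorithm; partition these by the phase in which the merge creating them occurs; there are $O(\log n)$ phases and within each, at most $O(\log n)$ such merges; done.

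The main obstacle — really the only subtle point — is justifying rigorously that a cluster's participation in merges within one phase is $O(\log n)$, i.e., that the subchains on which we run sequential HAC truly have $O(\log n)$ length and that no cluster is somehow revisited across multiple subchains in the same phase. The length bound is immediate from the doubling-size splitting criterion, and the ``no revisiting'' follows because once the sequential HAC on $C_j'$ stops, every remaining incident edge has weight below the bucket threshold, so $v$'s cluster is inert for the rest of the phase; I would make sure to cite \Cref{lem:path_A,lem:path_B} to confirm that a subchain boundary edge indeed drops out of the current bucket after the boundary cluster merges, so subchains genuinely do not interact beyond the single category-(A) hand-off already accounted for. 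Everything else is bookkeeping that mirrors the work-depth argument already given.
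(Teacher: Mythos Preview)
Your proposal is correct and follows essentially the same approach as the paper: bound the number of phases by $O(\log n)$ using the aspect-ratio assumption, bound the number of merges any cluster undergoes within a single phase by $O(\log n)$ via the subchain length bound, and multiply. The paper's own proof is a terse two-sentence version of exactly this argument, and your added care about subchain lengths (doubling criterion, category-(A) hand-off, boundary merges) and non-revisiting within a phase simply fills in details the paper leaves implicit.
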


\begin{proof}
    Starting with phase 1, the algorithm runs average linkage HAC (partially) on independent $O(\log n)$-sized contiguous portions (i.e., subchains) of the path. Thus, each subchain will correspond to independent parts in the output dendrogram. Since there are $O(\log n)$ phases and the max-height generated in each phase is at most $O(\log n)$, we get an overall bound of $O(\log^2n)$ for the height in the worst case.
\end{proof}

\begin{algorithm}[!t]\caption{\small The nearest-neighbor chain algorithm for HAC.}\label{alg:nnchain}
\KwIn{$G = (V, E, w)$}

\SetKwFunction{FNNChain}{NearestNeighborChain}
\SetKwProg{Fn}{Function}{:}{}

\Fn{\FNNChain{$G$}}{
$\clustering \gets$ clustering where each $v \in V$ is in a separate cluster and is {\em active}.\\
\For{each cluster $v \in V$}{
  \If {$v$ is active}{
    Initialize a stack $S$ initially containing only $v$ \\
    \While {$S$ is not empty}{
      $t \gets \textsc{Top}(S)$.\\
      $(t, b, w_{t,b}) \gets \textsc{BestEdge}(t)$.\label{line:nnchainbestedge} \\
      \If{$b$ is already on $S$}{
        $\textsc{Pop}(S)$. \\
        $\textsc{Merge}(t, \textsc{Top}(S))$.\label{line:nnchainmerge}  \ \  //\ {\em $t$ is marked as inactive; $\textsc{Top}(S)$ remains active.}\\
        $\textsc{Pop}(S)$.
       } \Else {
        $\textsc{Push}(S, b)$.
       }
    }
  }
}
}
\end{algorithm}

\section{\texorpdfstring{\boldmath$O(mh\log n)$}{O(mhlogn)} Upper Bound}\label{sec:workupperbound}


In this section we show that we can solve graph-based average linkage HAC in $\tilde{O}(mh)$ time where $h$ is the {\em height} of the output dendrogram.
Thus, when $mh = o(n^{3/2})$ this algorithm improves over the lower bound from Section~\ref{sec:worklb}.
Interestingly, we give a simple analysis showing that either of the two classic approaches to HAC---the nearest-neighbor chain or heap-based algorithms---achieves this bound.
We give the pseudocode for the nearest-neighbor chain algorithm in Algorithm~\ref{alg:nnchain} and the heap-based algorithm in Algorithm~\ref{alg:heapbased}.
In a nutshell, in both algorithms initially all vertices start in their own cluster, and are {\emph{active}}.
As the algorithms proceed, they find edges to merge that the exact greedy HAC algorithm (Algorithm~\ref{alg:staticgraph}) will make and merge them, marking one of the endpoints as {\em inactive}.
This process continues until no further edges between active clusters remain.
We discuss the algorithms individually in more detail in what follows.

We start by discussing some data structures and details in how the algorithms carry out merges.

\paragraph{Neighborhood representation}
The neighbors of each vertex are stored in a max-heap supporting standard operations (e.g., \textsc{find-min}, \textsc{delete-min}, \textsc{insert}, \textsc{decrease-key}).
The priority of a neighbor is the average linkage weight of the edge to the neighbor.
At the start of the algorithm, the heaps are initialized with the initial neighbors of each vertex.
Let $N(A)$ represent the neighbors of a cluster $A$.
For concreteness in what follows, we use Fibonacci heaps~\cite{CLRS, fredman1987fibonacci}.

\paragraph{Merging clusters}
Next, we discuss how to implement the {\em merge} step in both algorithms, which merges two clusters $A$ and $B$, and their corresponding heaps (i.e., merging $N(A)$ and $N(B)$ and updating the average linkage weights of {\em all} edges affected by the merge).
We maintain the invariant that the heaps always store the correct average linkage weights of all edges.
We first compute $\mathcal{I}(A,B) = N(A) \cap N(B)$, which can be done by sorting both sets and merging the sorted arrays.
For each $C \in \mathcal{I}(A, B)$, without loss of generality, we remove the reference to $A$ in the heap and update the priority of $B$ to reflect the new average linkage weight from $C$ to $A \cup B$ using \textsc{decrease-key}.
Similarly, we update each $C \in N(A) \setminus \mathcal{I}(A, B)$ to reference $B$ instead of $A$.
Finally, for each $C \in N(A) \cup N(B)$ we update the priority of the edge $(C, B)$ to reflect the new size of $B$.

At the end of the merge: (1) $N(B)$ contains $N(A) \cup N(B) \setminus \{A, B\}$; (2) all edges in the updated $N(B)$ have their average linkage weight correctly set; and (3) after the update all $C \in N(B)$ no longer reference $A$, but point to $B$ and have their average linkage weights correctly set.
%
The cost of this merge procedure is dominated by the cost of sorting $N(A)$ and $N(B)$; plugging in the cost bounds for any reasonable heap data structure (e.g., a Fibonacci heap) yields the following lemma:
\begin{lemma}\label{lem:merge}
Merging two clusters $A,B$ with $N(A) + N(B) = T$ can be done in $O(T \log n)$ time.
\end{lemma}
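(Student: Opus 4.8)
The plan is to account for the total work of the merge procedure as described in the preceding paragraph, term by term, and then observe that the dominant term is the sorting step, which gives the claimed bound. First I would isolate the distinct pieces of work: (i) computing $\mathcal{I}(A,B) = N(A) \cap N(B)$ by sorting $N(A)$ and $N(B)$ and merging the sorted arrays; (ii) for each $C \in \mathcal{I}(A,B)$, performing one heap deletion (removing the reference to $A$) and one \textsc{decrease-key} on $C$'s heap to update the priority of $B$; (iii) for each $C \in N(A)\setminus \mathcal{I}(A,B)$, redirecting the pointer from $A$ to $B$ and re-weighting, which again is $O(1)$ pointer work plus one \textsc{decrease-key}; and (iv) for each $C \in N(A)\cup N(B)$, one more \textsc{decrease-key} to reflect the new size of $B$. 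Crucially, all of these heap operations are either \textsc{delete} or \textsc{decrease-key} (or \textsc{insert}), and there are $O(|N(A)| + |N(B)|) = O(T)$ of them.

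Next I would plug in the amortized costs for Fibonacci heaps: \textsc{insert} and \textsc{decrease-key} cost $O(1)$ amortized, and \textsc{delete} (equivalently \textsc{delete-min} after a \textsc{decrease-key} to $-\infty$, or a direct delete) costs $O(\log n)$ amortized, where $n$ bounds the heap size. Hence the heap operations in steps (ii)–(iv) contribute $O(T \log n)$ in total (the $O(\log n)$ coming only from the $O(T)$ deletions in step (ii); the \textsc{decrease-key}s are $O(T)$ overall). The sorting in step (i) costs $O(T \log T) = O(T \log n)$ since $T \le n$. Summing, the whole merge is $O(T\log n)$, and the sorting/deletion terms dominate, matching the statement "the cost of this merge procedure is dominated by the cost of sorting $N(A)$ and $N(B)$."

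The only mildly delicate point — and the place I would be most careful — is making sure the bookkeeping in step (ii) is genuinely $O(1)$ pointer work per common neighbor $C$: one must argue that locating the entry for $A$ (and for $B$) inside $C$'s heap can be done in $O(1)$ time, which requires that each cluster stores, for each neighbor, a direct handle into the corresponding heap node. Under the standard assumption that such cross-pointers are maintained (and updated during merges, which is itself $O(1)$ per affected edge), there is no hidden search cost, and the analysis goes through cleanly. I do not expect any real obstacle here; the lemma is essentially a careful tallying of standard heap costs, and the bound is not tight-sensitive, so using $T \le n$ freely is harmless.
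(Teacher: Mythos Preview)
Your proposal is correct and follows essentially the same approach as the paper: the paper's own argument is just the one sentence ``the cost of this merge procedure is dominated by the cost of sorting $N(A)$ and $N(B)$; plugging in the cost bounds for any reasonable heap data structure (e.g., a Fibonacci heap) yields the following lemma,'' and your write-up is exactly a fleshed-out version of that sentence. Your extra care about maintaining direct handles into heap nodes so that step (ii) costs $O(1)$ pointer work per neighbor is a valid point the paper leaves implicit, and it does not change the argument.
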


\paragraph{Nearest-Neighbor Chain Analysis}
As we run the nearest-neighbor chain algorithm (Algorithm~\ref{alg:nnchain}), the algorithm only requires two non-trivial operations beyond basic data structures (e.g., stacks and heaps):\footnote{In fact the same two primitives are also the only non-trivial ones used for the heap-based algorithm (Algorithm~\ref{alg:heapbased}) and thus we fully specify the details for this algorithm as well.}
\begin{enumerate}
  \item \textsc{BestEdge}($C$): fetch the highest similarity edge incident to a cluster $C$.
  \item \textsc{Merge}($A, B$): merge the clusters $A,B$ into a cluster $A \cup B$ (represented, without loss of generality by $B$). Thus after the merge, $A$ is inactive, and $B$ remains active.
\end{enumerate}

Using the data structures above, we can implement both operations very quickly. 
In particular, Lemma~\ref{lem:merge} gives the time for merging two clusters, and \textsc{BestEdge}($C$) can be implemented in $O(1)$ time using the \textsc{find-min} operation.

After merging two clusters that are reciprocal ``best'' neighbors, the algorithm back-tracks to the third vertex in the chain and restarts the process, using calls to \textsc{BestEdge}.
The cost of these calls can be charged to the merges, resulting in a total of $O(n)$ time for the $n-1$ merges.
The time that still must be accounted for is that of \textsc{Merge}.

\begin{claim}\label{claim:mergecost}
The total amount time spent to perform all merges made by the nearest-neighbor chain algorithm is $O(mh\log n)$.
\end{claim}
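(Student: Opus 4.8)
The plan is to bound the total merge cost by $O(\log n)$ times $\sum_{i}\bigl(|N(A_i)| + |N(B_i)|\bigr)$ using \Cref{lem:merge}, and then to prove that this sum is $O(mh)$ by a charging argument against the original edges of $G$ and the output dendrogram. First I would fix notation: enumerate the merges in the order performed, letting the $i$-th merge absorb a cluster $A_i$ into a surviving cluster $B_i$ and create the dendrogram node $D_i := A_i \cup B_i$ whose children are $A_i$ and $B_i$; here $N(\cdot)$ denotes the neighborhood in the contracted graph at that moment. By \Cref{lem:merge} the $i$-th merge costs $O\bigl((|N(A_i)| + |N(B_i)|)\log n\bigr)$, and since there are at most $n-1$ merges (with the $\textsc{BestEdge}$ backtracking already charged to them at $O(n)$ total), it suffices to show $\sum_i \bigl(|N(A_i)| + |N(B_i)|\bigr) = O(mh)$.

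The next step is to observe that the contracted graph is always a quotient of $G$: every edge of it arises from one or more original edges whose endpoints lie in the two incident clusters, and for a fixed cluster $C$ the original edges witnessing distinct neighbors of $C$ are disjoint. Formally, mapping each neighbor $C'$ of $C$ to one original edge between $C$ and $C'$ is injective into $\{e=\{u,v\}\in E : \text{exactly one of } u,v \text{ lies in } C\}$, so $|N(C)|$ is at most the number of original edges crossing out of $C$. Plugging this in and exchanging the order of summation gives $\sum_i \bigl(|N(A_i)| + |N(B_i)|\bigr) \le \sum_{e\in E}\mu(e)$, where $\mu(e)$ counts, with multiplicity at most two per step, the merges $i$ at which $e$ has exactly one endpoint in $A_i$ or exactly one endpoint in $B_i$.

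Finally I would bound $\mu(e) = O(h)$ for each $e = \{u,v\}$ using the dendrogram. At merge $i$: if $D_i$ contains exactly one of $u,v$ then $e$ is counted exactly once (in the neighborhood of whichever child holds that endpoint); if $D_i$ is the first node containing both $u$ and $v$, i.e.\ the merge at their dendrogram LCA, then $e$ is counted at most twice (once via $A_i$, once via $B_i$); in every other case $e$ is counted zero times, since $e$ is either already internal to a cluster or wholly outside $D_i$. The internal dendrogram nodes containing $u$ form the root-to-$u$ path and hence number at most the dendrogram height $h$; those containing $u$ but not $v$ are a subset of these, so at most $h$, and symmetrically at most $h$ nodes contain $v$ but not $u$. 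Adding the at most $2$ extra units at the LCA merge yields $\mu(e) \le 2h + 2 = O(h)$, so $\sum_i \bigl(|N(A_i)| + |N(B_i)|\bigr) \le \sum_{e \in E} O(h) = O(mh)$, and the total merge time is $O(mh\log n)$.

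I expect the step requiring the most care to be the charging bookkeeping: ensuring each unit of the $|N(A_i)|+|N(B_i)|$ cost is assigned to an original \emph{edge} rather than to a vertex pair (parallel edges collapse, so $|N(C)|$ can be far below the edge count, which is exactly what makes the inequality go the right way), correctly isolating the single merge per edge at which it may be double-charged, and justifying that the count of dendrogram ancestors of a leaf is precisely what caps how often a fixed endpoint's side is charged. The rest is routine.
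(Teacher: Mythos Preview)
Your proposal is correct and is essentially the same argument as the paper's. The paper phrases it slightly more compactly by splitting each undirected edge $\{u,v\}$ into two directed edges $(u,v)$ and $(v,u)$ and observing that the directed edge $(u,v)$ is charged once at every internal dendrogram node on the path from the leaf $u$ up to the $u$--$v$ LCA (length $\le h$), which is exactly your $\mu(e)\le 2h+O(1)$ bound unpacked into its two halves.
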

\begin{proof}
Split each edge $(u,v)$ into two directed edges $(u,v)$ and $(v,u)$, since the edge will be present initially in both $N(u)$ and $N(v)$ (and the intermediary clusters they form until they are merged).

Consider a directed edge $(u,v)$, and the merges it experiences as it flows along the path from the initial cluster containing it ($u$) to the cluster where it is finally merged.
The dendrogram that the algorithm constructs has height $h$ by definition, and thus the length of this path is at most $h$.
In each node on the path, the edge contributes a cost of $O(\log n)$ to the merge cost in Lemma~\ref{lem:merge}.
Thus, the cost of this edge across the entire path is at most $O(h \log n)$.

Applying this charging argument for all $2m$ directed edges, the total cost over all merges is $O(mh\log n)$.
\end{proof}

Putting together the above yields the following theorem.
\mainUpperSec*

\begin{algorithm}[!t]\caption{The heap-based algorithm for HAC.}\label{alg:heapbased}
\KwIn{$G = (V, E, w)$}

\SetKwFunction{FHeapBased}{HeapBasedHAC}
\SetKwProg{Fn}{Function}{:}{}

\Fn{\FHeapBased{$G$}}{
$\clustering \gets$ clustering where each $v \in V$ is in a separate cluster and is {\em active}.\\
Let $H$ be a max-heap storing the highest-weight edge incident to each active cluster in the graph \\
\While{$|H| > 1$}{
  Let $e=(u, v, w_{u,v})$ be the max weight edge in $H$. \\
  Delete $e$ from $H$. \\
  \If {$v$ is inactive} { \label{line:heapbadextract}
    Let $e' = (u, v', w_{u,v'}) = \bestedge{u}$. \\
    Insert $e'$ into $H$. \label{line:heapinsertfirst} \\
  } \Else {
    $x = \textsc{Merge}(u, v)$.    \ \  //\ {\em $u$ is marked as inactive; $v$ remains active.} \\
    Let $e' = (x, y, w_{x,y}) = \bestedge{x}$. \\
    Insert $e'$ into $H$. \\
  }
}
}
\end{algorithm}

\paragraph{Heap-Based Algorithm}

We note it is easy to extend the argument above to obtain a similar bound for the heap-based algorithm 
(pseudocode shown in Algorithm~\ref{alg:heapbased}).
In particular, beyond the cost for merging, which is the same in the nearest-neighbor chain algorithm and the heap-based algorithm (captured by Claim~\ref{claim:mergecost}), the only extra work incurred by the heap-based algorithm is due to edges unsuccessfully extracted from the heap (Line~\ref{line:heapbadextract} of Algorithm~\ref{alg:heapbased}). 
The number of such unsuccessful edges is at most $m$, and each such edge incurs $O(\log n)$ work due to an extra call to \bestedge{u}, for a total of $O(m\log n)$ extra work.
Thus the total work of the heap-based algorithm is also $O(mh\log n)$.

\begin{theorem}
There is an implementation of the heap-based algorithm for average linkage HAC that runs in $O(mh \log n)$ time where $h$ is the height of the output dendrogram.
\end{theorem}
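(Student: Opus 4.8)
The plan is to show that the heap-based algorithm of Algorithm~\ref{alg:heapbased}, run with the data structures of Section~\ref{sec:workupperbound}, (i) produces exactly the dendrogram of exact greedy HAC, and (ii) spends $O(mh\log n)$ time on \textsc{Merge} calls plus only $O(m\log n)$ additional time on heap bookkeeping. For (i), I would note that the global max-heap $H$ always exposes the highest-weight edge incident to each active cluster, and the per-cluster neighborhood heaps carry correct average-linkage weights throughout (this is exactly the invariant maintained by the merge procedure of Section~\ref{sec:workupperbound}); hence each \textsc{Merge} the algorithm actually performs contracts the globally maximum-weight pair of active clusters, which is the choice made by the exact algorithm of Algorithm~\ref{alg:staticgraph}. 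Consequently the output dendrogram is identical to that of exact HAC and in particular has height $h$.

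Given this, the accounting for the \textsc{Merge} calls is \emph{identical} to the nearest-neighbor chain case: a single \textsc{Merge}$(A,B)$ costs $O((|N(A)|+|N(B)|)\log n)$ by Lemma~\ref{lem:merge}, so splitting each of the $m$ original edges into its two directed copies and charging $O(\log n)$ each time a copy is carried through a merge---which happens at most $h$ times, the length of that copy's upward path in the height-$h$ dendrogram---gives a total merge cost of $O(mh\log n)$, exactly as in Claim~\ref{claim:mergecost}.

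It then remains to bound the work specific to the heap-based algorithm: the operations on $H$ and the ``unsuccessful'' extractions on Line~\ref{line:heapbadextract}, where the extracted edge $(u,v)$ is discarded because $v$ is already inactive. Each iteration of the while loop does $O(1)$ heap operations on $H$ (at $O(\log n)$ cost each) and one \textsc{BestEdge} call ($O(1)$ via \textsc{find-min}), i.e. $O(\log n)$ work beyond the merge itself. The number of iterations equals the number of merges (at most $n-1$, and we may assume the input graph is connected, so at most $m$) plus the number of unsuccessful extractions. The latter is also $O(m)$: $H$ begins with at most $n$ entries, each merge leaves behind at most one residual entry that can later go stale, and once a stale edge is extracted and discarded it is never re-inserted---every insertion into $H$ is a \textsc{BestEdge} call on the up-to-date neighborhood heaps, which never returns an edge pointing to an inactive cluster. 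Hence the heap-based algorithm incurs only $O(m\log n)$ work beyond the $O(mh\log n)$ spent on merges, for a total of $O(mh\log n)$.

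I expect the only delicate point to be this last one: carefully arguing the $O(m)$ bound on unsuccessful extractions, and specifying exactly which heap updates keep $H$ consistent enough that a stale entry is always caught on Line~\ref{line:heapbadextract} and never triggers an incorrect merge. Everything else---correctness and the $O(mh\log n)$ merge bound---is a direct transfer of the analysis already carried out for Theorem~\ref{thm:mainUpperSeq}.
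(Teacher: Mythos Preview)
Your proposal is correct and follows essentially the same approach as the paper: reuse Claim~\ref{claim:mergecost} verbatim for the $O(mh\log n)$ merge cost, then bound the extra heap-based work by $O(m\log n)$ via an $O(m)$ bound on unsuccessful extractions at $O(\log n)$ each. The paper's argument is in fact terser than yours---it simply asserts that the number of unsuccessful extractions is at most $m$ without the residual-entry accounting you sketch---so your identification of that step as the only delicate one is apt.
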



\section{Conclusion}
In this paper, we studied the parallel and sequential complexity of hierarchical
graph clustering. We gave new classic and fine-grained reductions for Hierarchical Agglomerative Clustering (HAC) under the average linkage
measure that likely rule out efficient algorithms for {\em exact} average linkage, parallel or otherwise. 
We also showed that such impossibility results can be circumvented if the output dendrogram has low height or is a path. An interesting question is whether such structure can be leveraged for other variants of interest of average linkage HAC: for example, can we can obtain dynamic algorithms for HAC that are also parameterized by the height?

\newpage
\bibliography{main}

%

\end{document}